\newcommand{\fb}[2]{f\ensuremath{(#1, #2)}\xspace}
\newcommand{\fa}[1]{f\ensuremath{(#1)}\xspace}
\newcommand{\gb}[2]{g\ensuremath{(#1, #2)}\xspace}
\newcommand{\ga}[1]{g\ensuremath{(#1)}\xspace}
\newcommand{\fFunc}{\ensuremath{f(h)}\xspace}
\newcommand{\gFunc}{\ensuremath{g(h)}\xspace}
\newcommand{\fFuncI}[1]{\ensuremath{f_{#1}(h)}\xspace}
\newcommand{\gFuncI}[1]{\ensuremath{g_{#1}(h)}\xspace}
\newcommand{\unseen}[1]{\ensuremath{T_{#1}(h)}}
\newcommand{\subT}[2]{\ensuremath{T(#1,#2)}} 
\newcommand{\altitude}[1]{\ensuremath{L(#1)}}
\newcommand{\ymax}{\ensuremath{y(T)}}
\newcommand{\fullInterval}{\ensuremath{[\ymax, \infty)}}
\newcommand{\interval}[1]{\ensuremath{I_{#1}}}
\newcommand{\initInterval}{\ensuremath{\interval{0}}}
\newcommand{\g}[1]{\ensuremath{u_{#1}}}
\newcommand{\full}[1]{\ensuremath{U_{#1}(h)}}
\newcommand{\fullLeft}[1]{\ensuremath{U_{#1}(h)}}
\newcommand{\fullRight}[1]{\ensuremath{\overline{U}_{#1}(h)}}
\newcommand{\partialLeft}[1]{\ensuremath{V_{#1}(h)}}
\newcommand{\partialRight}[1]{\ensuremath{\overline{V}_{#1}(h)}}
\newcommand{\peak}[2]{\ensuremath{\textsf{peak}(#1,#2)}}
\newcommand{\SPT}[1]{\ensuremath{\textsf{SPT}(#1)}\xspace} 
\newcommand{\guarded}[1]{\textsf{Vis}\ensuremath{(#1)}\xspace}
\newcommand{\SRF}[2]{\ensuremath{\Pi_{#1,#2}}} 
\newcommand{\IHP}[2]{\ensuremath{\mathcal{E}(#1,#2)}} 
\newcommand{\IPHP}[2]{\ensuremath{\mathcal{E}^+(#1,#2)}} 
\newcommand{\IPHPInterval}[1]{\ensuremath{P(#1)}}
\newcommand{\INHP}[2]{\ensuremath{\mathcal{E}^-(#1,#2)}} 
\newtheorem{lemma}{Lemma}
\newtheorem{theorem}{Theorem}
\newbox\ProofSym
\let\geq\geqslant
\let\leq\leqslant
\let\ge\geqslant
\let\le\leqslant
\newenvironment{denseitems}{\list{$\bullet$}%
  {\labelwidth3em\itemsep0pt\parsep0pt\topsep0.6ex}}{\endlist}
\begin{document}
\title{Guarding Terrains with Guards on a Line}

\author{Byeonguk Kang\thanks{Department of Computer Science and Engineering, Pohang University of Science and Technology, Pohang, Korea. \texttt{\{kbu417,hwikim\}@postech.ac.kr}}
\and Hwi Kim\footnotemark[1]
\and Hee-Kap Ahn\thanks{Department of Computer Science and Engineering, Graduate School of Artificial Intelligence, 
Pohang University of Science and Technology, Pohang, Korea. \texttt{heekap@postech.ac.kr}}}

\maketitle              

\begin{abstract}
Given an $x$-monotone polygonal chain $T$ with $n$ vertices, 
and an integer $k$,
we consider the problem of finding the lowest horizontal line $L$ lying above $T$ with $k$ point guards lying on $L$, so that
every point on the chain is \emph{visible} from some guard.
A natural optimization is to minimize the $y$-coordinate of $L$.
We present an algorithm for finding the optimal placements of $L$ and $k$ point guards for $T$ in $O(k^2\lambda_{k-1}(n)\log n)$ time for even numbers $k\ge 2$, and in $O(k^2\lambda_{k-2}(n)\log n)$ time for odd numbers $k \ge 3$, where $\lambda_{s}(n)$ is the length of the longest $(n,s)$-Davenport-Schinzel sequence.
We also study a variant with an additional requirement that $T$ is partitioned into $k$ subchains, 
each subchain is paired with exactly one guard, 
and every point on a subchain is visible from its paired guard.
When $L$ is fixed, we can place the minimum number of guards in $O(n)$ time. 
When the number $k$ of guards is fixed, we can find an optimal placement of $L$ with $k$ point guards lying on $L$ in $O(kn)$ time.
\end{abstract}

\section{Introduction}
We consider the problem of guarding a terrain.
A (polygonal) \emph{terrain} is a graph of a piecewise linear function 
$f : A \subset \mathbb{R} \rightarrow \mathbb{R}$ 
that assigns a height $f(p)$ to every point $p$ in a compact domain $A$. 
In other words, a terrain is an $x$-monotone polygonal chain in the plane.
A \emph{guard} is a point lying on or above the terrain.
A point on the terrain is \emph{visible} from a guard if the line segment 
connecting the point and the guard does not cross the terrain.
A subchain of the terrain is visible from a guard 
if every point on the subchain is visible from the guard.
A set of guards \emph{covers} a subchain 
if every point on the subchain is visible from some guard.

Given a terrain $T$ with $n$ vertices, an integer $k$, and a line $L$ parallel to the $x$-axis and lying above $T$, 
we consider the problem of finding $k$ point guards 
lying between $T$ and $L$ 
so that every point on $T$ is visible from some guard.
Observe that for $k\ge \lfloor n/2\rfloor$, 
we can always cover $T$ by placing guards at every other vertex along $T$. 
Thus, we consider the case for $k< \lfloor n/2\rfloor$.
Observe that this problem is closely related to the \textsc{Set Cover} problem: 
the universe is the terrain $T$ and the set corresponding to a guard consists of points on $T$ visible from the guard. 
The goal is to select $k$ sets (or the smallest collection of sets) whose union equals $T$.
Observe that the sets corresponding to $k$ guards are defined after 
the guards are placed. 

\begin{lemma}\label{lem:guard.monotone}
Every point on $T$ visible from a guard $u=(x,y)$ is also visible 
from a guard $u'=(x,y+\delta)$ for any $\delta > 0$.
\end{lemma}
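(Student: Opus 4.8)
The plan is to translate visibility into a pointwise height comparison between the line of sight and the terrain, and then observe that moving the guard straight up can only lift the line of sight. Write $p=(p_x,p_y)$ for the point in question. If $p_x=x$, the claim is immediate: the segment $\overline{u'p}$ is vertical and meets the $x$-monotone chain $T$ only at $p$, so it does not cross $T$ and $p$ is visible from $u'$. Otherwise assume $p_x>x$, the case $p_x<x$ being symmetric. Over the interval $J=[x,p_x]$ the terrain is the graph of a single-valued function $f$, and both sight segments $\overline{up}$ and $\overline{u'p}$ are graphs of affine functions of the $x$-coordinate; denote them by $\ell$ and $\ell'$. Since $u$ lies above $T$ and $p$ lies on $T$, the fact that $\overline{up}$ does not cross $T$ is equivalent to $\ell(t)\ge f(t)$ for all $t\in J$.

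The key step is the domination $\ell'(t)\ge\ell(t)$ on $J$. The two affine functions agree at the right endpoint, $\ell(p_x)=\ell'(p_x)=p_y$, whereas at the left endpoint $\ell'(x)=y+\delta>y=\ell(x)$. Hence the affine difference $\ell'-\ell$ vanishes at $p_x$ and is positive at $x$, so it is nonnegative throughout $J$ as a linear interpolation of two nonnegative boundary values. Combining the two inequalities yields $\ell'(t)\ge\ell(t)\ge f(t)$ for every $t\in J$, so $\overline{u'p}$ also stays weakly above the terrain over the entire interval separating $u'$ from $p$ and therefore does not cross $T$. This gives visibility of $p$ from $u'$.

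I expect the only subtle point to be the reformulation of ``the sight segment does not cross $T$'' as ``the segment dominates $f$ pointwise on $J$''. This relies on $T$ being $x$-monotone, so that along $J$ both the terrain and each sight segment are single-valued in the $x$-coordinate and a crossing can occur precisely where the segment dips strictly below $f$; together with the fact that the guard sits above $T$ while $p$ lies on it, this pins down the equivalence. Once the reformulation is in hand, the affine-domination argument above finishes the proof with no further computation.
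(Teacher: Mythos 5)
Your proof is correct and takes essentially the approach the paper intends: the paper gives no detailed argument at all, asserting only that the lemma ``holds since the terrain is $x$-monotone,'' and your pointwise height comparison (over $J=[x,p_x]$ the terrain is a single-valued function, the sightline from the raised guard dominates the original sightline, which in turn dominates the terrain) is exactly the elaboration of that one-line justification. The only cosmetic issue is notational: you write $\overline{up}$ for the sight \emph{segment}, whereas the paper reserves the overline $\overline{pq}$ for the \emph{line} through $p$ and $q$ and writes $pq$ for the segment; since you restrict everything to $J$, this does not affect the argument.
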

Since the terrain is $x$-monotone, Lemma~\ref{lem:guard.monotone} holds.
By Lemma~\ref{lem:guard.monotone}, a natural optimization is to minimize
the $y$-coordinate (height) of the line $L$.
Indeed, in many real-world applications 
such as security cameras mounted on top of poles, 
it is desirable to make the height of the cameras 
as low as possible so that they are sturdy against high wind 
and it is effective concerning the construction and the cost
while the whole terrain is covered.
We define two problems formally in the following.

\subsubsection*{Altitude Terrain Cover (ATC).}
Given a terrain $T$ and an integer $k$, find the lowest horizontal line $L$ with $k$ point guards lying on $L$ that cover $T$.

When $k=1$, every segment of $T$ must be visible from the unique guard. 
Thus, the optimal placement of $L$ is determined by the intersection
of $n-1$ lines, each extending a segment of $T$. 
Specifically, it is the lowest point in the intersection of 
the (upper) half-planes, each bounded by one of the lines. 
Thus, this can be done in $O(n)$ time~\cite{megiddo1984linear}.
However, for the case $k\ge 2$, the optimal placement of $L$ is not necessarily determined by the intersections of the extending lines.
This is because a segment may not be visible from any single guard
but it is visible from two or more guards in an optimal placement. Fig.~\ref{fig:problems}(a) illustrates a case for $k=2$. 
The optimal line $L$ is not at an intersection of the extending lines.

\subsubsection*{Bijective Altitude Terrain Cover (BATC).}
Given a terrain $T$ and an integer $k$,
find the lowest horizontal line $L$ with $k$ point guards lying on $L$ 
such that $T$ is partitioned into $k$ subchains, 
each subchain is paired with exactly one guard, 
and every point on a subchain is visible from its paired guard. 

We consider two cases. (1) Given a line $L$,
place the minimum number of guards lying between $T$ and $L$ that together cover $T$.
(2) Given an integer $k$, find the lowest horizontal line $L$ and place $k$ guards on $L$ that together cover $T$.
See Fig.~\ref{fig:problems}(b) for an illustration for case (2) with $k=2$.

\begin{figure}[h]
  \centering
  \includegraphics[width=0.8\textwidth]{./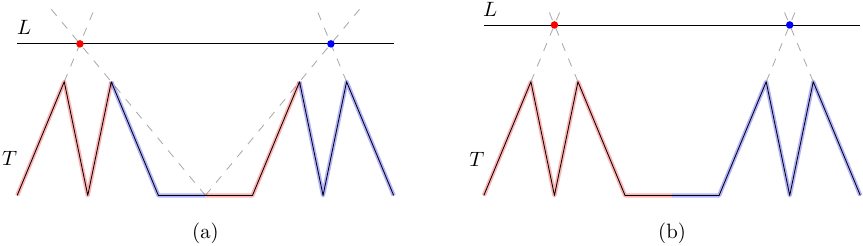}%
  \caption{Cases for $k=2$. (a) The optimal placement of $L$ such that
  $T$ can be covered by two guards lying on $L$. 
  (b) The optimal placement of $L$ such that $T$ is partitioned 
  into two subchains: one of them is visible from the red guard 
  and the other is visible from the blue guard.}
  \label{fig:problems}
\end{figure}

\subsection{Related works}
Daescu et al.~\cite{daescu2019altitude} studied the problem of
placing the minimum number of point guards on a fixed line
so that every point on the terrain is visible from some guard.
They presented a linear time algorithm for the problem.
They also showed that the problem is NP-hard for a polyhedral terrain.

Katoh et al.~\cite{katoh2010parametric} studied the problem of
placing two horizontal segment guards on a fixed line such that the segment guards collectively cover the terrain while minimizing the maximum length of the two segments.
They presented an $O(n\log^2 n)$-time algorithm for the problem.
Later, McCoy et al.~\cite{mccoy2023guarding} studied the problem of finding the lowest horizontal line on which two fixed length horizontal segment guards can be placed to collectively cover the terrain.
They presented an $O(n^2\log n)$-time algorithm for the problem.

There has been work on a variant in which the point guards must be placed 
on the terrain.
Cole and Sharir~\cite{cole1989visibility} showed that 
the problem of covering a polyhedral terrain using the minimum number of guards 
lying on the terrain is NP-hard.
For covering a polygonal terrain using the minimum number of guards 
lying on the terrain, Chen et al.~\cite{chen1995optimal} proposed 
an NP-completeness proof, but it was not completed. 
Later, King and Krohn~\cite{king2011terrain} showed that the problem is NP-hard.

Another line of research on terrain guarding
is to place $k$ vertical line segments on top of the terrain 
such that the terrain is covered by them and 
the maximum length of the segments is minimized.
It is known as the \emph{$k$-watchtower problem}, and 
there is a fair amount of work on the $k$-watchtower problem for a polygonal terrain~\cite{bespamyatnikh2001planar,agarwal2010guarding,seth2023acrophobic,kang4850503guarding} and a polyhedral terrain~\cite{agarwal2010guarding,tripathi2018guarding}.
A variant of the $k$-watchtower problem is to cover the given 
\emph{point sites} lying on the terrain, instead of covering the whole terrain~\cite{kang4850503guarding}.

In the \emph{boundary guarding problem}, we are given a simple polygon 
instead of a terrain. The goal is to minimize the number of point guards 
placed in the polygon to cover all its segments.
Biniaz et al.~\cite{biniaz2024contiguous} gave a polynomial-time exact algorithm 
and an $(\textsf{OPT}+1)$-approximation algorithm under a contiguity restriction, 
where $\textsf{OPT}$ is the number of guards in an optimal solution.

\subsection{Our results}
Let $\lambda_{s}(n)$ denote the length of the longest $(n,s)$-Davenport-Schinzel sequence.
See~\cite{sharir1995davenport} for details about Davenport-Schinzel sequences.
For the Altitude Terrain Cover problem, we give an algorithm
for finding the optimal placements of $L$ and $k$ point guards for $T$
in $O(k^2\lambda_{k-1}(n)\log n)$ time for even numbers $k\ge 2$,
and in $O(k^2\lambda_{k-2}(n)\log n)$ time for odd numbers $k\ge 3$,
where $k$ is the number of guards.
For the Bijective Altitude Terrain Cover problem, we give an algorithm 
for finding the minimum number of guards (case (1)) in $O(n)$ time, and 
an algorithm for finding the lowest horizontal line $L$ (case (2)) in $O(kn)$ time.

\subsubsection*{Sketches of our algorithms.}
For the Altitude Terrain Cover problem, a na\"ive approach would be 
to apply Megiddo's parametric search technique~\cite{megiddo1983applying}.
It enables us to transform an $O(\mathsf{T_s})$-time decision algorithm that performs $O(\mathsf{C_s})$ comparisons into an $O(\mathsf{C_s}\mathsf{T_s})$-time algorithm for its optimization version.
In our problem, a decision algorithm should 
determine whether a given real value is greater than the optimal $y$-coordinate.
Based on $O(n)$-time algorithm by Daescu et al.~\cite{daescu2019altitude},
we can design an $O(kn)$-time decision algorithm that performs $O(n)$ comparisons.
By applying parametric search, we can obtain an $O(kn^2)$-time algorithm.

Our algorithm for the Altitude Terrain Cover problem finds the optimal $y$-coordinate of $L$, denoted by $h^*$, 
and the placements of $k$ guards in an iterative manner. 
It first computes the set of \emph{extreme} placements over 
all $y$-coordinates for the leftmost and the rightmost guards. 
From this, we compute a $y$-interval $I$ that contains $h^*$.
In more detail,
it computes the rightmost placement for the leftmost guard over all $y$-coordinates.
The set of placements forms a piecewise increasing function.
Likewise, it computes the leftmost placement for the rightmost guard over all $y$-coordinates, which together form a piecewise decreasing function.

Then, our algorithm computes a $y$-interval $I$ which contains the optimal $y$-coordinate $h^*$
by binary search over the $y$-coordinates of the breakpoints 
(i.e. the endpoints of the pieces) of the two functions.
Then it repeats the process by computing the functions for 
the second leftmost guard and the second rightmost guard, and
by applying binary search over the $y$-coordinates of the breakpoints 
of the functions within $I$.
Then we obtain a subinterval of $I$ containing $h^*$.

For even numbers $k \geq 2$, we proceed until all guards ($k/2$ guards from the left and $k/2$ guards from the right) are placed. 
Let $I^*$ be the final interval we have, which contains $h^*$.
Let $E$ be the set of edges of $T$ lying between the two guards placed in the middle (their indices being $k/2$ and $k/2 + 1$ from left to right).
Within $I^*$, we check for each edge in $E$ 
whether every point on the edge is visible from any of the two guards, and if not, compute the minimum $y$-coordinate such that the edge is covered together by them. The optimal $y$-coordinate $h^*$ is the maximum $y$-coordinate among the $y$-coordinates.

The time for computing the interval $I^*$ dominates the time complexity of the algorithm, which is $O(k^2\lambda_{k-1}(n)\log n)$.
This is because it takes $O(k)$ time to compute the intersection of two functions that intersect at most $k$ times, the time to compute the two monotone functions (which are upper envelopes of rational functions defined by edges of $T$) is $O(k\lambda_{k-1}(n)\log n)$, and there are $O(k)$ guards.

For odd numbers $k$, we continue the process until
we place $k-1$ guards ($\lfloor k/2\rfloor$ guards from the left and $\lfloor k/2\rfloor$ guards from the right).
Instead of computing the edge set $E$ 
and the corresponding $y$-coordinate for each edge in $E$ 
as in the case for even numbers $k$, within $I^*$ we find $h^*$ in $O(kn)$ time using quasiconvex programming.
The total time is again dominated by the time for computing $I^*$, which is $O(k^2\lambda_{k-2}(n)\log n)$.

\bigskip

For the Bijective Altitude Terrain Cover (BATC) problem, 
we first show that there is an optimal partition of $T$ such that
every endpoint of the subchains is a vertex of $T$.
For a subchain, a guard must be placed at the intersection of the upper half-planes, 
each bounded by the line extending a segment of the subchain.
Thus, we compute the lowest point of this intersection to guarantee
the minimum $y$-coordinate for covering the subchain.

In case (1) of BATC, we place guards on $L$ 
one by one from left to right in a greedy manner.
We find the longest subchain starting from the leftmost vertex 
that can be covered by a single guard on $L$ in time linear 
to the size of the subchain. 
We repeat this process for the remaining terrain until the whole terrain is covered.
This can be done in $O(n)$ time.

In case (2) of BATC, there are $C(n-2, k-1)=O((n-2)^{k-1})$ 
partitions of $T$ into $k$ subchains. 
For $k=2$, we can compute the lowest guard position for each subchain
of an optimal partition over all $O(n)$ partitions 
in $O(n)$ time using an incremental algorithm.
For $k>2$, we place $k$ guards one by one from left to right.
We show that it suffices to consider $O(n)$ partitions, 
instead of $O((n-2)^{k-1})$ partitions.
We design an $O(n)$-time incremental algorithm for placing a guard. 
We run the algorithm $k$ times to place $k$ guards, and
obtain the optimal $y$-coordinate of $L$.
Thus, it takes $O(kn)$ time in total.
\bigskip

\section{Preliminaries}
For a point $p$ in the plane, we use $x(p)$ and $y(p)$ to denote 
the $x$- and $y$-coordinate of $p$.
For any two distinct points $p$ and $q$ in the plane, 
we denote by $pq$ the line segment connecting $p$ and $q$,
and by $\overline{pq}$ the line passing through $p$ and $q$.
For a segment $s$, we use $\overline{s}$ to denote the line extending $s$.
The \emph{slope} of a line is the angle swept from the $y$-axis in the counterclockwise direction to the line, and it is thus in $[0,\pi)$. 
The slope of a line segment is the slope of the line that extends it.
For a non-vertical line segment $s$,
we use $s^+$ to denote the set of points in $\mathbb{R}^2$ that lie on or above
$\overline{s}$. Thus, $s^+$ denotes the \emph{upper half-plane} bounded by $\overline{s}$.

We use $T= \langle v_1, \ldots , v_n \rangle$, a sequence of points 
with $x(v_i)<x(v_j)$ for any $1\le i<j\le n$, 
to denote an $x$-monotone polygonal chain which we call \emph{terrain}.
We call every point $v_i$ a \emph{vertex} of $T$ 
for $i=1,\ldots, n$,
and every segment $v_iv_{i+1}$ an \emph{edge} of $T$ for $i=1,\ldots,n-1$.
We use $\tilde{T}$ to denote the set of points $u \in \mathbb{R}^2$ such that the vertical line through $u$ intersects $T$ at $u'$ and
$y(u) \ge y(u')$.
A line segment on $T$ is \emph{fully visible} from a point 
if every point on the line segment is visible from the point. 
We let $\ymax=\max\{y(p)\mid p\in T\}$
denote the maximum $y$-coordinate among the points on $T$.
For a real value $h\in[\ymax,\infty)$, we call the horizontal line $\altitude{h}:=\{(x,y)\in \mathbb{R}^2\vert y=h\}$ as the \emph{altitude line}.

\section{Altitude Terrain Cover}\label{sec:ATC}
In this section, we present algorithms for the Altitude Terrain Cover (ATC) problem.
Consider the case for $k=1$.
An edge $e$ of $T$ is fully visible from a point $u$ only if
$u \in e^+$. Since $T$ is $x$-monotone, every edge of $T$ is fully visible
from the lowest point in $\bigcap_{e \subset T}e^+$. Thus, $h^*$ 
is the $y$-coordinate of the lowest point in $\bigcap_{e \subset T}e^+$.
We can compute the lowest point in
linear time~\cite{megiddo1984linear}.

In Section~\ref{sec:ATC_two}, we give an $O(n\log n)$-time algorithm for $k=2$.
In Section~\ref{sec:ATC_even}, we present an $O(k\lambda_{k-1}(n)\log n)$-time algorithm 
for even numbers $k\ge 2$ of guards. 
In Section~\ref{sec:odd}, we present an $O(k\lambda_{k-2}(n)\log n)$-time algorithm 
for odd numbers $k\ge 3$.

\subsection{Two guards.}
\label{sec:ATC_two}
Clearly, $h^*\in\initInterval \coloneq \fullInterval$.
Fix $h \in \initInterval$.
For a point $p\in \tilde{T}$,
let $\fb{p}{h}$ be the rightmost point on $\altitude{h}$ 
from which $p$ is visible.
For a line segment $\ell$ in $\tilde{T}$, 
let $\fb{\ell}{h}$ (resp. $\gb{\ell}{h}$)
be the rightmost (resp. leftmost) point in $\altitude{h}$ 
from which $\ell$ is fully visible. See Fig.~\ref{fig:notations}(a).
Let $\fa{h}$ be the leftmost point among $\fb{e}{h}$ for all edges $e \subset T$. 
See Fig.~\ref{fig:notations}(b).
Similarly, let $\ga{h}$ be the rightmost point among
$\gb{e}{h}$
for all edges $e \subset T$.

Let $\g{1}$ and $\g{2}$ be two guards on $L(h)$ with $x(\g{1}) < x(\g{2})$. 

\begin{figure}[h]
  \centering
  \includegraphics[width=0.8\textwidth]{./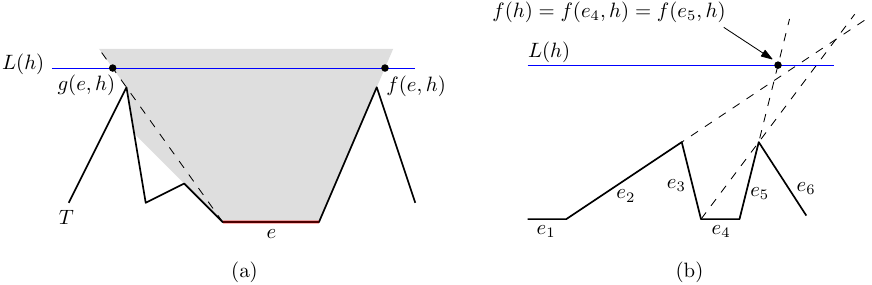}%
  \caption{(a) The gray region is the set of points visible from every point of $e$.
  $\gb{e}{h}$ (resp. $\fb{e}{h}$) is the leftmost (resp. rightmost)  point of $L(h)$ in the region.
  (b) $\fa{h}$ is the leftmost point among $\fb{e}{h}$ for all edges $e$ of $T$.} 
  \label{fig:notations}
\end{figure}

\begin{lemma}\label{lem:ATC.guard_location_Necessary_Condition}
Two guards $\g{1}$ and $\g{2}$ on 
\altitude{h} 
can guard $T$ only if 
$x(\g{1}) \leq x(\fa{h})$ and $x(\g{2}) \geq x(\ga{h})$. 
\label{lem:onlyIf}
\end{lemma}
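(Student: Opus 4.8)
The plan is to argue the contrapositive: assuming $\g{1}$ and $\g{2}$ together cover $T$, I will show that neither $x(\g{1})>x(\fa{h})$ nor $x(\g{2})<x(\ga{h})$ can occur. The two inequalities are mirror images of each other (exchanging ``rightmost'' with ``leftmost'' and $f$ with $g$, and reflecting across a vertical line), so it suffices to prove $x(\g{1})\le x(\fa{h})$; the bound $x(\g{2})\ge x(\ga{h})$ then follows by the symmetric argument.

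First I would reduce $\fa{h}$ to a single \emph{witness point} on $T$. Since $\fa{h}$ is the leftmost point among $\fb{e}{h}$ over all edges $e\subset T$, pick an edge $e^*$ attaining it, so that $x(\fa{h})=x(\fb{e^*}{h})$. The structural fact I would invoke is that, for a fixed point $p\in\tilde{T}$, the set of points of $\altitude{h}$ from which $p$ is visible is a single interval whose right endpoint is $\fb{p}{h}$; this follows from the $x$-monotonicity of $T$, because the unobstructed viewing directions from $p$ toward $\altitude{h}$ form one angular cone and hence meet $\altitude{h}$ in an interval. Consequently $e^*$ is fully visible from a point $q\in\altitude{h}$ if and only if $q$ lies in every one of these per-point intervals, so the right endpoint of their common intersection equals $\inf_{p\in e^*}x(\fb{p}{h})$; that is, $x(\fb{e^*}{h})=\inf_{p\in e^*}x(\fb{p}{h})$.

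I would then proceed by contradiction. Suppose $x(\g{1})>x(\fa{h})=x(\fb{e^*}{h})=\inf_{p\in e^*}x(\fb{p}{h})$. By the definition of the infimum there exists a point $p^*\in e^*$ with $x(\fb{p^*}{h})<x(\g{1})$, so $\g{1}$ lies strictly to the right of $\fb{p^*}{h}$, the rightmost point of $\altitude{h}$ from which $p^*$ is visible; hence $\g{1}$ does not see $p^*$. Since $x(\g{2})>x(\g{1})>x(\fb{p^*}{h})$, the guard $\g{2}$ lies even farther right and likewise fails to see $p^*$. Thus $p^*\in T$ is covered by neither guard, contradicting the hypothesis that $\g{1}$ and $\g{2}$ cover $T$. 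This yields $x(\g{1})\le x(\fa{h})$, and the mirror argument, using leftmost visible points and $\ga{h}$, gives $x(\g{2})\ge x(\ga{h})$.

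The step I expect to be the main obstacle is the edge-to-point reduction in the second paragraph. From the definitions alone one only learns that neither guard \emph{fully} sees $e^*$, which does not by itself rule out the two guards jointly covering $e^*$ by seeing complementary portions of it. What makes the contradiction go through is the existence of a \emph{single} point $p^*$ that is invisible to both guards at once, and this is exactly what the interval structure of single-point visibility supplies, by pinning the breakdown of full visibility to one critical point. Consequently, the part deserving the most care is a clean justification of that interval property from the $x$-monotonicity of $T$, rather than merely assuming it.
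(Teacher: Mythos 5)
Your proof is correct and takes essentially the same route as the paper's: assume $x(\fa{h}) < x(\g{1})$, fix a witness edge $e^*$ with $\fa{h} = \fb{e^*}{h}$, and exhibit a single point of $e^*$ that lies outside the visibility range of both guards because both sit strictly to the right of that point's rightmost viewing position. The only difference is one of detail: the paper asserts the existence of such a common invisible point in one line, whereas you justify it explicitly via the interval structure of per-point visibility on \altitude{h} --- which is exactly the content of Lemma~\ref{lem:ATC.visibility_of_left_guards} (imported from Daescu et al.) that the paper states immediately afterwards.
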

\begin{proof}
Assume that $x(\fa{h}) < x(\g{1})$. 
Let $e$ be 
an edge of $T$ such that $\fa{h} = \fb{e}{h}$.
Because $x(\g{1}) < x(\g{2})$, there is a point 
on $e$ that is not visible from both $\g{1}$ and $\g{2}$, a contradiction.
Similarly, we can show that 
$x(\g{2}) \geq x(\ga{h})$.
\end{proof}

By Lemma~\ref{lem:ATC.guard_location_Necessary_Condition}, for each $h \in \initInterval$, 
we must place 
$\g{1}$ at $\fa{h}$ or to its left, and
$\g{2}$ at $\ga{h}$ or to its right.
By the following lemma, $\g{1}$ placed at $\fa{h}$ and $\g{2}$ placed at $\ga{h}$ are indeed the best choice.
For a point $p \in \tilde{T}$, let \textsf{Vis}$(T, p)$ be the set of points of $T$ that are visible from $p$.
We simply use $\guarded{p}$ if it is clear from the context.

\begin{lemma}\label{lem:ATC.g_1_location}
$\guarded{u} \subseteq \guarded{\fa{h}}$ for any point $u \in \altitude{h}$ with $x(u) < x(\fa{h})$,
and $\guarded{w} \subseteq \guarded{\ga{h}}$ for any point $w \in \altitude{h}$ with $x(w) > x(\ga{h})$.
\end{lemma}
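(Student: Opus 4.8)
The plan is to reduce the set inclusion to a pointwise statement and then exploit two monotonicity facts. It suffices to fix an arbitrary terrain point $p$ that is visible from $u$ and show that $p$ is also visible from \fa{h}; the inclusion $\guarded{u}\subseteq\guarded{\fa{h}}$ then follows by letting $p$ range over $\guarded{u}$. First I would record the key structural fact that, for a single point $p\in T$, the set of points on \altitude{h} from which $p$ is visible is a closed interval whose rightmost endpoint is exactly \fb{p}{h} by definition. Granting this, the argument reduces to a short sandwiching of \fa{h} between $u$ and \fb{p}{h}.

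The main obstacle is establishing that this viewpoint set on \altitude{h} for a single point $p$ is connected. I would prove it by passing to angles at $p$: as a viewpoint $u'$ slides along \altitude{h} from left to right, the direction of the ray from $p$ toward $u'$ rotates monotonically, so the $x$-coordinate of $u'$ is a monotone function of the slope of the segment $pu'$. A ray from $p$ to the right is unobstructed iff its slope is at least the largest slope $\frac{y(q)-y(p)}{x(q)-x(p)}$ over all terrain points $q$ with $x(q)>x(p)$; hence the clear rightward directions form a single angular interval, and symmetrically on the left. Because $T$ is $x$-monotone, these two one-sided conditions combine into one angular interval of unobstructed upward directions, and its preimage under the monotone correspondence is the desired interval on \altitude{h}.

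Finally I would chain the monotonicity inequalities. Since seeing all of an edge $e_p\ni p$ is at least as restrictive as seeing $p$ alone, the points of \altitude{h} that fully see $e_p$ form a subset of those that see $p$, giving $x(\fb{e_p}{h})\le x(\fb{p}{h})$; and since \fa{h} is the leftmost of the points \fb{e}{h} over all edges $e$, we get $x(\fa{h})\le x(\fb{e_p}{h})\le x(\fb{p}{h})$. As $u$ sees $p$, the point $u$ lies in the visibility interval of $p$, so $x(u)$ is at least its left endpoint; combined with $x(u)<x(\fa{h})\le x(\fb{p}{h})$, the point \fa{h} lies between $u$ and the right endpoint \fb{p}{h} of that interval, hence inside it, so $p$ is visible from \fa{h}. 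The second statement, concerning \ga{h} and points $w$ with $x(w)>x(\ga{h})$, follows verbatim after reflecting the configuration across a vertical line, which reverses the left–right orientation and thereby swaps the roles of \fb{\cdot}{h} and \gb{\cdot}{h}.
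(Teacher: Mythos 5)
Your proof is correct, but it follows a genuinely different route from the paper's. The paper splits $T$ at \fa{h}: every terrain point with $x$-coordinate at most $x(\fa{h})$ is visible from \fa{h} (Lemma~\ref{lem:ATC.f(h)_left_visible}, proved via the visibility intervals of individual edges), while for terrain points to the right of \fa{h} it applies the one-sided monotonicity lemma imported from Daescu et al.\ (Lemma~\ref{lem:ATC.visibility_of_left_guards}): if such a point were invisible from \fa{h}, it would be invisible from every viewpoint further left, in particular from $u$. You instead argue pointwise and uniformly, with no case split: for each $p\in T$ the viewpoints on \altitude{h} that see $p$ form an interval whose right endpoint is \fb{p}{h}, which you prove from scratch by the rotating-ray/slope-threshold argument, and you then trap \fa{h} inside that interval via the chain ``left endpoint $\le x(u) < x(\fa{h}) \le x(\fb{e_p}{h}) \le x(\fb{p}{h})$''. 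Your interval claim is in substance a two-sided version of the paper's Lemma~\ref{lem:ATC.visibility_of_left_guards}, re-derived rather than cited, and your sandwich step $x(\fa{h})\le x(\fb{e_p}{h})\le x(\fb{p}{h})$ is exactly what replaces Lemma~\ref{lem:ATC.f(h)_left_visible}; this buys you a self-contained and case-free argument, at the cost of redoing work the paper gets for free (and Lemma~\ref{lem:ATC.f(h)_left_visible} is not wasted effort there, since it is reused later, e.g.\ in Lemma~\ref{lem:ATC_even.guard_location_greedy_recursive}). One point you leave implicit and should state: identifying ``the ray from $p$ toward a viewpoint $u'$ is unobstructed'' with ``$p$ is visible from $u'$'' requires $h\ge\ymax$, so that the portion of the ray beyond $u'$ clears every terrain point; this is precisely where the hypothesis $h\in\fullInterval$ enters your connectivity argument.
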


We need a few technical lemmas.
The first lemma is from Lemma 3 in~\cite{daescu2019altitude}.

\begin{lemma}\label{lem:ATC.visibility_of_left_guards}
  Let $u \in \altitude{h}$ be a point. For any point $p \in T$ 
  with $x(u) < x(p)$,
  if $p$ is not visible from $u$, $p$ is not visible from any point $w\in\altitude{h}$ with $x(w)<x(u)$. 
  For any point $q \in T$ with $x(u) > x(q)$, if $q$ is not visible
from $u$, 
then $q$ is not visible from any point $w\in\altitude{h}$ with $x(w)>x(u)$. 
\end{lemma}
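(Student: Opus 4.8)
The plan is to prove the first statement directly and to obtain the second by a left–right reflection. Fix $u=(x(u),h)$ on $\altitude{h}$ and a point $p\in T$ with $x(u)<x(p)$. Since $\altitude{h}$ lies above $T$, we have $y(p)\le h$, so the segment $up$ has nonpositive slope. The first task is to translate the hypothesis that $p$ is not visible from $u$ into a concrete obstruction. Because $T$ is $x$-monotone and $up$ joins a point lying strictly above $T$ (namely $u$) to a point lying on $T$ (namely $p$), the segment can cross $T$ only if $T$ rises strictly above the line $\overline{up}$ at some $x$-coordinate strictly between $x(u)$ and $x(p)$. Thus non-visibility yields a point $r\in T$ with $x(u)<x(r)<x(p)$ whose height $y(r)$ strictly exceeds the height of $\overline{up}$ at $x(r)$.

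Next I compare the two sightlines $\overline{up}$ and $\overline{wp}$ for an arbitrary $w=(x(w),h)$ with $x(w)<x(u)$. Both lines pass through $p$, and both have nonpositive slope; moving the guard leftward from $u$ to $w$ increases the horizontal distance to $p$ while keeping the vertical drop $h-y(p)$ fixed, so the slope of $\overline{wp}$ is larger, that is, closer to $0$, than the slope of $\overline{up}$. The key step is the resulting inequality: for every $x<x(p)$ the line $\overline{wp}$ lies on or below $\overline{up}$. This is a one-line consequence of the slope comparison, since the difference $\overline{up}(x)-\overline{wp}(x)$ equals the product of the nonpositive quantity (slope of $\overline{up}$ minus slope of $\overline{wp}$) and the negative quantity $x-x(p)$, hence is nonnegative. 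Geometrically, two lines meeting at $p$ separate as one moves left of $p$, with the steeper (more negative) line $\overline{up}$ staying on top.

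Finally I combine the two facts. The obstruction satisfies $x(w)<x(u)<x(r)<x(p)$, so $r$ lies in the $x$-range spanned by both sightlines. At $x(r)$ we have $\overline{wp}(x(r))\le\overline{up}(x(r))<y(r)$, so $T$ rises strictly above $\overline{wp}$ at $x(r)$ as well; hence the segment $wp$ crosses $T$ and $p$ is not visible from $w$. This establishes the first statement. The second statement is its mirror image: reflecting the whole configuration across a vertical line keeps $T$ an $x$-monotone terrain and $\altitude{h}$ horizontal, while exchanging ``left'' and ``right'' and turning ``$q$ lying to the left of $u$'' into ``a point lying to the right of $u$'', so the identical argument applies verbatim. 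I expect the main obstacle to be the first step, namely pinning down precisely that non-visibility on an $x$-monotone terrain forces a strict blocking point strictly between $u$ and $p$; once that is in hand, the remainder is an elementary comparison of two lines through a common point.
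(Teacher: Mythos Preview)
The paper does not supply its own proof of this lemma; it is simply quoted from Lemma~3 of Daescu et al.\ \cite{daescu2019altitude} without argument. Your direct proof---extract a blocking point $r\in T$ with $x(u)<x(r)<x(p)$ lying strictly above $\overline{up}$, then observe that the shallower sightline $\overline{wp}$ lies on or below $\overline{up}$ for every $x<x(p)$ and is therefore also blocked by $r$---is correct and is exactly the standard argument one would give. One minor wording note: $u$ need not lie \emph{strictly} above $T$ (it may coincide with a highest vertex when $h=\ymax$), but your reasoning only uses that the segment $up$ is not below $T$ at its left endpoint, which holds regardless.
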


\begin{lemma}\label{lem:ATC.f(h)_left_visible}
Any point $p\in T$ with $x(v_1) \leq x(p) \leq x(\fa{h})$ 
is visible from $\fa{h}$. Any point $q\in T$ with $x(\ga{h}) \leq x(q) \leq x(v_n)$ is visible from $\ga{h}$.
\end{lemma}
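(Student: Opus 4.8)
The plan is to prove the first statement (every point $p\in T$ with $x(v_1)\le x(p)\le x(\fa{h})$ is visible from $\fa{h}$); the second statement, concerning $\ga{h}$ and points to its right, then follows by a left--right symmetric argument. The whole proof rests on transferring visibility from a point far to the right toward $\fa{h}$ via the second part of Lemma~\ref{lem:ATC.visibility_of_left_guards}. The observation I would exploit is that, by definition, $\fa{h}$ is the leftmost among the points $\fb{e}{h}$ taken over all edges $e\subset T$, so $\fa{h}$ can never lie strictly to the right of $\fb{e_p}{h}$ for the edge $e_p$ containing $p$.

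Concretely, fix $p$ with $x(v_1)\le x(p)<x(\fa{h})$ and let $e_p$ be an edge of $T$ containing $p$. Since $\fa{h}$ is well-defined, so is every $\fb{e}{h}$, and the minimality defining $\fa{h}$ gives $x(\fa{h})\le x(\fb{e_p}{h})$. The point $\fb{e_p}{h}$ fully sees $e_p$, hence sees $p$. I would then apply the second part of Lemma~\ref{lem:ATC.visibility_of_left_guards} with $u=\fa{h}$ and $q=p$, which is legitimate because $x(\fa{h})>x(p)$: that lemma states that if $p$ were \emph{not} visible from $\fa{h}$, then $p$ would be invisible from every point of $\altitude{h}$ lying to the right of $\fa{h}$. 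But $\fb{e_p}{h}$ is such a point and it does see $p$, a contradiction. Hence $p$ is visible from $\fa{h}$.

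Two boundary cases need a separate, easy check. If $x(p)=x(\fa{h})$, then $\fa{h}$ lies directly above $p$; since $T$ is $x$-monotone, the vertical segment from $\fa{h}$ down to $p$ meets $T$ only at $p$, so $p$ is trivially visible. If instead $x(\fa{h})=x(\fb{e_p}{h})$, then $\fa{h}=\fb{e_p}{h}$ on $\altitude{h}$, which already fully sees $e_p\ni p$. The main obstacle, though routine once set up, is applying Lemma~\ref{lem:ATC.visibility_of_left_guards} in the correct direction: one must read it contrapositively (``a point to the left of $\fa{h}$ that is seen from somewhere to the right of $\fa{h}$ is also seen from $\fa{h}$'') and carefully match the hypotheses $x(u)>x(q)$ and $x(w)>x(u)$ to $u=\fa{h}$, $q=p$, and $w=\fb{e_p}{h}$. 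The symmetric statement for $\ga{h}$ uses the first part of Lemma~\ref{lem:ATC.visibility_of_left_guards} in the same way, with $\gb{e_p}{h}$ playing the role of $\fb{e_p}{h}$.
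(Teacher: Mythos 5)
Your proof is correct, and it takes a genuinely different route from the paper's. The paper argues edge-by-edge rather than pointwise: for an edge $uv$ of $T$ with $x(u)\le x(\fa{h})$, it uses the fact that the set of points of \altitude{h} from which $uv$ is fully visible is the interval between \gb{uv}{h} and \fb{uv}{h} (a consequence of $x$-monotonicity), and places $\fa{h}$ inside this interval via the chain $x(\gb{uv}{h})\le x(u)\le x(\fa{h})\le x(\fb{uv}{h})$; hence the entire edge is fully visible from $\fa{h}$, and the lemma follows because every point $p$ with $x(p)\le x(\fa{h})$ lies on such an edge. You never need the interval structure or the bound $x(\gb{uv}{h})\le x(u)$: you use only the definitional inequality $x(\fa{h})\le x(\fb{e_p}{h})$ plus the contrapositive of Lemma~\ref{lem:ATC.visibility_of_left_guards} to pull visibility of $p$ from $\fb{e_p}{h}$ back to $\fa{h}$. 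As for what each approach buys: the paper's argument is shorter and yields a slightly stronger conclusion (full visibility from $\fa{h}$ of every edge whose left endpoint lies at or to the left of $\fa{h}$), but it silently relies on two easy geometric facts it never proves; your argument has leaner dependencies---only the definitions and the lemma already imported from Daescu et al.---at the price of the explicit boundary cases $x(p)=x(\fa{h})$ and $x(\fa{h})=x(\fb{e_p}{h})$, which you rightly flag as necessary because the inequalities in Lemma~\ref{lem:ATC.visibility_of_left_guards} are strict. Your symmetric handling of the $\ga{h}$ claim is likewise sound.
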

\begin{proof}
Let 
$uv$
be an edge of $T$ with $x(u) < x(v)$.
Since $T$ is $x$-monotone, 
$uv$ is visible from any point between $\gb{uv}{h}$ and $\fb{uv}{h}$ on \altitude{h}.
By definition, we have $x(\fa{h}) \le x(\fb{uv}{h})$.
Thus, if $x(u) \le x(\fa{h})$,
then $x(\gb{uv}{h}) \leq x(u) \le x(\fa{h}) \le x(\fb{uv}{h})$, and
$uv$ is fully visible from $\fa{h}$.
Therefore, any point $p \in T$ with $x(v_1) \leq x(p) \leq x(\fa{h})$ is visible from $\fa{h}$. The second claim also holds analogously.
\end{proof}

Now we are ready to prove Lemma~\ref{lem:ATC.g_1_location}.

\begin{proof}
\textit{Proof of Lemma~\ref{lem:ATC.g_1_location}}.
By Lemma~\ref{lem:ATC.f(h)_left_visible}, for any point $p\in T$ with 
$x(v_1) \leq x(p) \leq x(\fa{h})$ is visible from $\fa{h}$.
For any point $p\in T$ with $x(p) > x(\fa{h})$,
if $p$ is not visible from $\fa{h}$, 
$p$ is not visible from $u$ by Lemma~\ref{lem:ATC.visibility_of_left_guards}.
Thus, for any point $u \in \altitude{h}$ with $x(u) < x(\fa{h})$, $\guarded{u} \subseteq \guarded{\fa{h}}$.
The second claim also holds analogously.
\end{proof}

By Lemma~\ref{lem:ATC.g_1_location},
for a fixed $h$,
we find $\fa{h}$ and $\ga{h}$, and then check whether they together cover $T$.
Using this, we find the minimum value of $h$ such that $T$ is covered.
To find it efficiently, 
we show that both $\fFunc$ and $\gFunc$ are piecewise linear, 
and that they are $xy$-monotone curves in $\mathbb{R}^2$.

A \emph{shortest path tree} of a polygon 
from a point $p$ in the polygon is the tree rooted at $p$
such that for each vertex $u$ of the polygon, 
the path between $p$ and $u$
on the graph (as a geometric graph) coincides with the geodesic between $p$ and $u$ in the polygon~\cite{guibas1986linear}.
See Fig.~\ref{fig:strong_visibility_line}(a).
It can be computed in linear time for any given point in the polygon. 
For a point $p \in \tilde{T}$, let $\SPT{p}$ be the shortest path tree from $p$ to the vertices of $T$. Let $\mathsf{S}_1=\SPT{v_1}$ and $\mathsf{S}_n=\SPT{v_n}$.
For a node $v$ in $\mathsf{S}_t$, let $\pi_t(v)$ denote the parent node of $v$
in $\mathsf{S}_t$ for $t=1,n$.

\begin{lemma}\label{lem:ATC.vertex_visibility_SPT}
Let $v$ be a vertex of $T$.
For any $h \in \initInterval$,
$\fb{v}{h} = \altitude{h} \cap \overline{v\pi_n(v)}$, and
$\gb{v}{h} = \altitude{h} \cap \overline{v\pi_1(v)}$.
\end{lemma}

\begin{proof}
Let $u=\altitude{h} \cap \overline{v\pi_n(v)}$.
By the definition of the shortest path tree, 
every point $q \in T$ with $x(v) < x(q)$
lies below or on $\overline{v\pi_n(v)}$.
Therefore, $v$ is visible from $u$.
For any $w\in\altitude{h}$ with $x(w)>x(u)$, $v$ is not visible from $w$ because of the edge incident to $\pi_n(v)$.
The second claim also holds analogously.
\end{proof}

\begin{figure}[h]
  \centering
  \includegraphics[width=0.8\textwidth]{./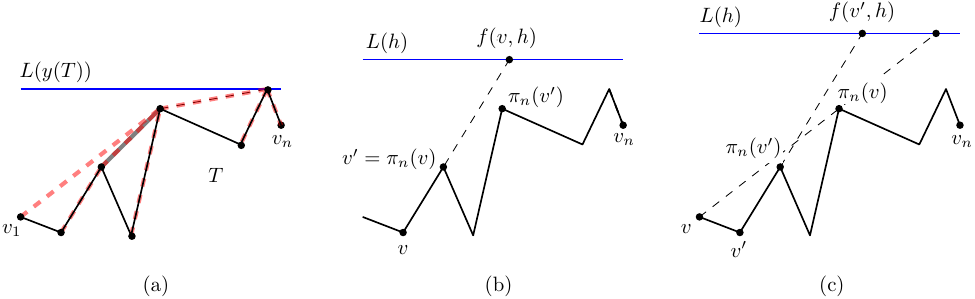}%
  \caption{(a) Red dashed segments are edges of $\mathsf{S}_n$. 
  For any fixed $h \in \initInterval$, 
  (b) $\fb{vv'}{h} = \fb{v}{h}$ if $x(\pi_n(v)) < x(\pi_n(v'))$. 
  (c) $\fb{vv'}{h} = \fb{v'}{h}$ if $x(\pi_n(v)) \ge x(\pi_n(v')$.}
  \label{fig:strong_visibility_line}
\end{figure}

\begin{lemma}\label{lem:ATC.edge_strong_visibility_case}
    Let $vv'$ be an edge of 
    $T$ with $x(v) < x(v')$.
    For any $h \in \initInterval$, $\fb{vv'}{h} = \fb{v}{h}$ if $x(\pi_n(v)) < x(\pi_n(v'))$, and $\fb{vv'}{h} = \fb{v'}{h}$ otherwise.
\end{lemma}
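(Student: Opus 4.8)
The plan is to reduce the claim about the full edge to a claim about individual points, and then to locate the single binding point on the edge. First I would observe that, by Lemma~\ref{lem:ATC.visibility_of_left_guards}, for each point $p$ on $vv'$ the set of viewpoints on $\altitude{h}$ from which $p$ is visible is an interval whose right endpoint is exactly $\fb{p}{h}$; since the edge is fully visible from $u$ precisely when every $p\in vv'$ is visible from $u$, the rightmost such $u$ satisfies $x(\fb{vv'}{h})=\min_{p\in vv'}x(\fb{p}{h})$. Thus the whole lemma amounts to showing that this minimum is attained at $v$ (in the first case) or at $v'$ (in the second), and that the minimizing endpoint indeed sees the entire edge.

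Next I would describe $\fb{p}{h}$ by its grazing line. For a point $p$ on the edge, the rightmost viewpoint that sees $p$ is obtained by rotating the line of sight clockwise until it first grazes a terrain feature to the right of $p$; by Lemma~\ref{lem:ATC.vertex_visibility_SPT} and the definition of $\mathsf{S}_n=\SPT{v_n}$, for a vertex this feature is its parent, so $\fb{v}{h}=\altitude{h}\cap\overline{v\pi_n(v)}$ and $\fb{v'}{h}=\altitude{h}\cap\overline{v'\pi_n(v')}$. The key structural fact I would establish is the dichotomy $\pi_n(v)\in\{v',\,\pi_n(v')\}$: the geodesic from $v$ toward $v_n$ in the region above $T$ either bends at $v'$ (so $v'$ is the first vertex blocking $v$'s rightward view) or passes straight over $v'$ and bends at the same vertex as the geodesic from $v'$. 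In the first case $x(\pi_n(v))=x(v')<x(\pi_n(v'))$, and in the second $x(\pi_n(v))=x(\pi_n(v'))$; no other vertex can be the parent of $v$, since $\pi_n(v')$ blocks $v'$'s rightward view and hence also that of the farther-left vertex $v$, which rules out $x(\pi_n(v))>x(\pi_n(v'))$ and matches the two cases of the statement exactly.

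With the dichotomy in hand, I would finish by computing $x(\fb{p}{h})$ along the edge in each case. When $\pi_n(v)=v'$, the first blocker of every point $p\in vv'$ is $v'$, so its grazing line is $\overline{pv'}=\overline{vv'}$ (as $p$ and $v'$ are collinear on the edge); hence $\fb{p}{h}=\altitude{h}\cap\overline{vv'}=\fb{v}{h}$ for all $p\neq v'$, the minimum is attained at $v$, and $\fb{vv'}{h}=\fb{v}{h}$. When instead $\pi_n(v)=\pi_n(v')=:P$, every point of the edge grazes the common vertex $P$, and the slope of $\overline{pP}$ increases monotonically as $p$ moves from $v$ to $v'$; consequently $x(\altitude{h}\cap\overline{pP})$ decreases monotonically, so the minimum is attained at $v'$ and $\fb{vv'}{h}=\fb{v'}{h}$. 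I would also verify that the minimizing endpoint $\hat p\in\{v,v'\}$ truly sees the whole edge, which follows because at $\fb{\hat p}{h}$ every other point of the edge has its rightmost-visibility no further to the left.

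The step I expect to be the main obstacle is the parent dichotomy $\pi_n(v)\in\{v',\pi_n(v')\}$ together with its exhaustiveness: making precise, from the shortest-path-tree structure, both that no terrain feature strictly between $v$ and $\pi_n(v)$ blocks $v$, and that no bending vertex can lie strictly between $v'$ and $\pi_n(v')$, so that $\pi_n(v)$ cannot be any vertex other than $v'$ or $\pi_n(v')$ and in particular $x(\pi_n(v))>x(\pi_n(v'))$ never occurs. The remaining monotonicity computations are routine once the correct grazing vertex is identified.
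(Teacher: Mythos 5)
Your reduction to per-point visibility and your identification of the grazing line $\fb{v}{h}=\altitude{h}\cap\overline{v\pi_n(v)}$ are fine, and your treatment of the first case (proving $\pi_n(v)=v'$) matches the paper. But the step you yourself flag as the main obstacle --- the parent dichotomy $\pi_n(v)\in\{v',\pi_n(v')\}$, and in particular the claim that $x(\pi_n(v))>x(\pi_n(v'))$ ``never occurs'' --- is not just hard to prove; it is false. Take $T=\langle v_1,\ldots,v_6\rangle$ with $v=v_1=(0,10)$, $v'=v_2=(1,0)$, $v_3=(2,1)$, $v_4=(3,-5)$, $v_5=(10,5)$, $v_6=v_n=(20,-1)$. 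The segment from $v'$ to $v_5$ passes below $v_3$ (height $5/9<1$ at $x=2$), so the geodesic from $v'$ to $v_n$ bends first at $v_3$, i.e.\ $\pi_n(v')=v_3$. However, $v$ sees $v_5$ (the segment from $(0,10)$ to $(10,5)$ stays above $v_2,v_3,v_4$) but not $v_n$ (height $4.5<5$ at $x=10$), so $\pi_n(v)=v_5$. Thus $\pi_n(v)\notin\{v',\pi_n(v')\}$ and $x(\pi_n(v))=10>2=x(\pi_n(v'))$. The flawed inference is ``$\pi_n(v')$ blocks $v'$'s rightward view and hence also that of the farther-left vertex $v$'': when the edge $vv'$ descends steeply, $v$ is much higher than $v'$ and simply looks over $v'$'s blocker to a farther one. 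Consequently your case analysis is not exhaustive, and your proof of the ``otherwise'' case --- rotating sight lines about a common grazing vertex $P=\pi_n(v)=\pi_n(v')$ --- does not apply to instances like the one above, even though the lemma's conclusion still holds there (at $h=10$ the rightmost point seeing all of $vv'$ is $(11,10)$, lying on $\overline{v'v_3}$, so indeed $\fb{vv'}{h}=\fb{v'}{h}$).

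The paper's proof of the second case does not identify the two parents at all. Instead it establishes two one-sided position facts: $\pi_n(v')$ lies on or below $\overline{v\pi_n(v)}$ (because $x(v)<x(v')<x(\pi_n(v'))\le x(\pi_n(v))$ and $v$ sees its parent, so every terrain vertex horizontally between them lies under that segment), and $\pi_n(v)$ lies on or below $\overline{v'\pi_n(v')}$ (a structural property of $\mathsf{S}_n$: the geodesic from $v'$ is concave and stays above the terrain while staying below the geodesic from $v$). Together these force the slope of $v\pi_n(v)$ to be smaller than that of $v'\pi_n(v')$, so $\altitude{h}\cap\overline{v'\pi_n(v')}$ lies left of $\altitude{h}\cap\overline{v\pi_n(v)}$ and $v'$ is the binding endpoint. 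To repair your proof you would need to replace the rotation-about-$P$ argument by such a slope comparison between two possibly distinct grazing lines; your first reduction step and your case $\pi_n(v)=v'$ can be kept as is.
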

\begin{proof}
Observe that either $\fa{vv',h} = \fa{v,h}$ or $\fa{vv',h} = \fa{v',h}$
because $T$ is $x$-monotone.
Consider the case $x(\pi_n(v)) < x(\pi_n(v'))$.
Suppose $\pi_n(v)\neq v'$. Then
$x(v') < x(\pi_n(v))$ and $v'$ lies below $\overline{v\pi_n(v)}$.
Since $\pi_n(v)$ and $\pi_n(v')$ are vertices of $\mathsf{S}_n$, $\pi_n(v')$ lies on or below 
$\overline{v\pi_n(v)}$.
Then $\pi_n(v')$ is not visible from $v'$, a contradiction.
Thus, $\pi_n(v)=v'$ and $\fb{vv'}{h} = \fb{v}{h}$.
See Fig.~\ref{fig:strong_visibility_line}(b).

Next, consider the case $x(\pi_n(v)) \geq x(\pi_n(v'))$.
Then $\pi_n(v)$ lies on or below $\overline{v'\pi_n(v')}$ 
and $\pi_n(v')$ lies on or below $\overline{v\pi_n(v)}$.
Thus, $v\pi_n(v)$ has slope smaller than
the slope of $v'\pi_n(v')$.
Since $\altitude{h} \cap \overline{v\pi_n(v)}$ lies right to $\altitude{h} \cap \overline{v'\pi_n(v')}$ we have $\fb{vv'}{h} = \fb{v'}{h}$.
See Fig.~\ref{fig:strong_visibility_line}(c).
\end{proof}

Having computed $\SPT{v_1}$ and $\SPT{v_n}$ in linear time, we can compute $\fb{e}{h}$ and $\gb{e}{h}$ for any edge $e$ of $T$ and any fixed 
$h \in \initInterval$ in $O(1)$ time 
by Lemmas~\ref{lem:ATC.vertex_visibility_SPT} and~\ref{lem:ATC.edge_strong_visibility_case}.

\begin{lemma}\label{lem:ATC.f(h)_properties_time_complexity}
$\fFunc$ is piecewise linear and monotone increasing, and
$\gFunc$ is piecewise linear and monotone decreasing. 
We can compute them in $O(n)$ time.
\end{lemma}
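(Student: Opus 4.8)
The plan is to treat each of \fFunc and \gFunc as an envelope of affine functions of $h$, to read off monotonicity directly from the visibility monotonicity of Lemma~\ref{lem:guard.monotone}, and then to work to shave the envelope computation from $O(n\log n)$ down to $O(n)$.

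First I would establish piecewise linearity. By Lemma~\ref{lem:ATC.vertex_visibility_SPT}, for every vertex $v$ the point \fb{v}{h} lies on the fixed line $\overline{v\pi_n(v)}$, so its $x$-coordinate is an affine function of $h$ on \initInterval. For an edge $e=vv'$, Lemma~\ref{lem:ATC.edge_strong_visibility_case} identifies \fb{e}{h} with one of \fb{v}{h}, \fb{v'}{h}; since $e$ is fully visible from a point exactly when both of its endpoints are, in fact $x(\fb{e}{h})=\min\{x(\fb{v}{h}),x(\fb{v'}{h})\}$. Consequently $x(\fa{h})=\min_e x(\fb{e}{h})=\min_v x(\fb{v}{h})$ is the lower envelope of at most $n-1$ affine functions of $h$, hence piecewise linear (and convex). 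The symmetric statement, using $\overline{v\pi_1(v)}$, shows that $x(\ga{h})=\max_v x(\gb{v}{h})$ is an upper envelope of affine functions, hence piecewise linear.

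For monotonicity I would avoid relying on the sign of the slopes (a vertex with no right blocker has an unbounded rightmost visible point and simply contributes $+\infty$ to the minimum). Instead, fix $h<h'$ and a vertex $v$, and write $\fb{v}{h}=(x_0,h)$. Since $v$ is visible from $(x_0,h)$, Lemma~\ref{lem:guard.monotone} makes $v$ visible from $(x_0,h')$ as well; as \fb{v}{h'} is the \emph{rightmost} point of \altitude{h'} that sees $v$, we get $x(\fb{v}{h'})\ge x_0$. Thus every $x(\fb{v}{\cdot})$ is non-decreasing, so their minimum $x(\fa{\cdot})$ is non-decreasing and \fFunc is monotone increasing; the mirror argument makes each $x(\gb{v}{\cdot})$ non-increasing, so \gFunc is monotone decreasing.

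For the running time I would compute \SPT{v_1} and \SPT{v_n} in $O(n)$ time, then read off each line $\overline{v\pi_n(v)}$ and resolve the per-edge choice of Lemma~\ref{lem:ATC.edge_strong_visibility_case} in $O(1)$, producing the $n-1$ lines in $O(n)$ total. What remains is the lower envelope of these lines, for which the generic bound is $O(n\log n)$. To reach $O(n)$ I would prove that, listed in left-to-right order of their vertices $v$, the lines $\overline{v\pi_n(v)}$ have monotone slope; with sorted slopes the convex lower envelope is built by a single scan that keeps the active lines on a stack, pushing each line once and popping it at most once. \textbf{The main obstacle} is precisely this slope-monotonicity claim: I expect it to follow from the $x$-monotonicity of $T$ together with the nesting of the shortest path tree, extending the local slope comparison already used inside the proof of Lemma~\ref{lem:ATC.edge_strong_visibility_case} (that $x(\pi_n(v))\ge x(\pi_n(v'))$ forces $v\pi_n(v)$ to be less steep than $v'\pi_n(v')$) into a global order along the envelope. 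The subtle part will be verifying that the vertices with an unbounded rightmost visible point can be discarded without disturbing this order, so that the surviving lines are encountered in sorted-slope order and the stack-based sweep indeed runs in $O(n)$.
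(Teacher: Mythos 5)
Your piecewise-linearity and monotonicity arguments are sound. The reduction $x(\fa{h})=\min_v x(\fb{v}{h})$ is legitimate (for an $x$-monotone terrain the blocked part of an edge is a union of subsegments each containing an endpoint, so an edge is fully visible from a point iff both its endpoints are), and deriving monotonicity of each $x(\fb{v}{\cdot})$ directly from Lemma~\ref{lem:guard.monotone} is clean --- arguably more robust than the paper's route, which reads monotonicity off the fact that each per-edge function $\fb{e}{h}$ of Lemma~\ref{lem:ATC.edge_strong_visibility_case} is an increasing linear function. (One slip: a lower envelope, i.e.\ pointwise minimum, of affine functions is \emph{concave}, not convex; this is harmless here.)

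The genuine gap is exactly the step you flagged as the main obstacle, and it is worse than unproven: the slope-monotonicity claim is false. Take $T=\langle (0,10),(1,0),(2,5),(3,0),(4,10)\rangle$ with $v_5=v_n$. Then $\pi_n(v_2)=v_3$ (the segment $v_2v_5$ passes below $v_3$), $\pi_n(v_3)=v_5$ (the segment $v_3v_5$ clears $v_4$), and $\pi_n(v_4)=v_5$, so the lines $\overline{v\pi_n(v)}$ for $v_2,v_3,v_4$ have slopes $5$, $5/2$, $10$ (measured as $\Delta y/\Delta x$; the sequence is also non-monotone under the paper's angle convention). All three are positive, so none of them is discarded by your unboundedness rule (only $v_1$'s horizontal line is), yet the sequence is not monotone in either direction: a valley vertex looking at a nearby peak gets a steep line, while the peak itself looks at a distant higher peak and gets a shallow one, so slopes oscillate along $T$. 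Passing to the per-edge representative lines of Lemma~\ref{lem:ATC.edge_strong_visibility_case} does not rescue the claim either: for $T=\langle (0,0),(1,10),(2,0),(3,8),(4,0),(5,10)\rangle$ the chosen lines for the five edges, left to right, have slopes $10,8,8,10,10$. Hence the stack-based sweep in vertex (or edge) order has no correctness guarantee, and your argument only yields the generic $O(n\log n)$ envelope bound, not $O(n)$. Note that the paper does not prove this step from scratch either: it obtains the $O(n)$ envelope computation by invoking a result of~\cite{seth2023acrophobic}. So to close your proof you would need either to cite that result as the paper does, or to identify structure genuinely different from sorted slopes (sortedness is precisely what fails).
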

\begin{proof}
By definition, $f(h)$ is equal to the upper envelope of 
$\fb{e}{h}$'s for all edges $e$ of $T$.
See Fig.~\ref{fig:ATC_first_interval}(a).
By Lemmas~\ref{lem:ATC.vertex_visibility_SPT} and~\ref{lem:ATC.edge_strong_visibility_case}, 
for any edge $e$ of $T$,
$\fb{e}{h}$ is an increasing linear function.
For any fixed $h \in \initInterval$,
we can compute $\fb{e}{h}$ in $O(1)$ time.
Thus, we can compute $f(h)$ in $O(n)$ time~\cite{seth2023acrophobic}.
Analogously, $\gFunc$ is piecewise linear and monotone decreasing,
and we can compute it in $O(n)$ time.
\end{proof}

\begin{figure}[h]
  \centering
  \includegraphics[width=0.75\textwidth]{./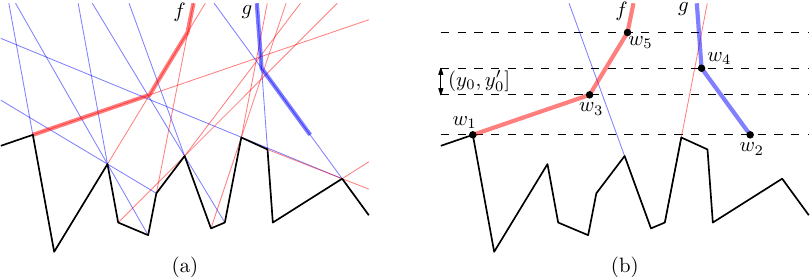}%
  \caption{(a) Red and blue half-line represents $\fb{e}{h}$ and $\gb{e}{h}$ for an edge $e$ of $T$, respectively. 
  (b) We have $W = \langle w_1, w_2, w_3, w_4, w_5 \rangle$, and
  $(y_0, y'_0]=\big(y(w_3),y(w_4)\big]$.}
  \label{fig:ATC_first_interval}
\end{figure}

By Lemma~\ref{lem:ATC.f(h)_properties_time_complexity},
we compute $\fFunc$ and $\gFunc$ in $O(n)$ time.
The next step is to reduce the search space for $h^*$ to the maximal subinterval $(y_0, y'_0]$ of $\initInterval$ such that $h^* \in (y_0, y'_0]$, and both $\fFunc$ and $\gFunc$ restricted to $(y_0, y'_0]$ are linear functions.

The following lemma is from Theorem 3 in~\cite{daescu2019altitude}.

\begin{lemma} 
\label{lem:ATC.comparison_time}
For a fixed $h\in \initInterval$, we can compute a 
minimum-sized set of guards 
to cover $T$ in $O(n)$ time.
\end{lemma}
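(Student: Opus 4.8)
The plan is to reconstruct the greedy sweep behind Theorem 3 of \cite{daescu2019altitude}, phrased in terms of the visibility data already set up for a fixed altitude line. First I would precompute $\SPT{v_1}$ and $\SPT{v_n}$ in $O(n)$ time \cite{guibas1986linear}. For a fixed $h$ and any point $p \in T$, the set of positions on $\altitude{h}$ from which $p$ is visible is a single interval: by the $x$-monotonicity of $T$ and Lemma~\ref{lem:ATC.visibility_of_left_guards}, moving a guard rightward can only gain visibility of points to its right and lose visibility of points to its left. Its right endpoint is $\fb{p}{h}$ and its left endpoint is $\gb{p}{h}$, both obtainable in $O(1)$ time from the two shortest path trees by Lemmas~\ref{lem:ATC.vertex_visibility_SPT} and~\ref{lem:ATC.edge_strong_visibility_case} (for a vertex directly, and for an interior point of an edge by interpolating between the two incident shortest-path-tree constraints).

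Next I would run a left-to-right greedy sweep. I maintain a feasible interval $J \subseteq \altitude{h}$ equal to the intersection of the seeing-intervals of all points processed so far in the current block; a guard placed anywhere in $J$ fully covers that (contiguous) block. Starting the block at $v_1$, I extend it edge by edge, updating $J$ by intersecting with $[\gb{e}{h}, \fb{e}{h}]$ for each newly added edge $e$. Because the right endpoint of $J$ is a running minimum of the $\fb{\cdot}{h}$ values and the left endpoint a running maximum of the $\gb{\cdot}{h}$ values, each update is $O(1)$. When $J$ would become empty at some edge $e$, the current block is the maximal contiguous subchain from its start that a single guard can cover; I then place a guard at the right endpoint of the last nonempty $J$, increment the count, and begin a new block. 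The block boundary may fall in the interior of $e$ (an edge split between two guards); the exact split point is the first point of $e$ whose seeing-interval no longer meets the current $J$, and it is computed in $O(1)$ by intersecting with the continuously varying per-point interval along $e$. Each edge is charged $O(1)$ work and is reconsidered at most once as the start of a new block, so the sweep runs in $O(n)$ time; together with the $O(n)$ preprocessing this gives the claimed bound.

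It remains to argue that the number of guards produced is minimum, and this is the step I expect to be the main obstacle. The difficulty is that an optimal cover for the Altitude Terrain Cover problem need not be contiguous: a single edge may be split among guards and a guard may see a disconnected portion of $T$, so one cannot simply equate the minimum with a shortest interval-piercing. I would prove minimality by a greedy-stays-ahead argument on contiguous prefixes. Let $R(s)$ denote the furthest point to which the contiguous subchain starting at $s$ can be covered by a single guard; the key claim is that $R$ is monotone nondecreasing in $s$, which follows from $x$-monotonicity together with Lemma~\ref{lem:ATC.visibility_of_left_guards}. Granting this, an induction shows that after $i$ guards the longest contiguous prefix of $T$ coverable by \emph{any} placement ends no further right than the end $q_i$ of the greedy's $i$-th block, since the greedy starts each block exactly where the previous one ends and extends it maximally. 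Because covering all of $T$ requires the prefix to reach $v_n$, any cover needs at least as many guards as the greedy uses, establishing optimality; the hardest point is proving the monotonicity of $R$ and ruling out that an edge-splitting, non-contiguous cover could do strictly better. Since the statement is exactly Theorem 3 of \cite{daescu2019altitude}, I would alternatively invoke that result directly once the reduction to the above sweep is in place.
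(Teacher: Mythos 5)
Your fallback---invoking Theorem 3 of \cite{daescu2019altitude} directly---is exactly what the paper does: the lemma is stated as being ``from Theorem 3 in \cite{daescu2019altitude}'' and no proof is given, and no reduction is needed since the statement \emph{is} that theorem verbatim. Had you stopped there, your proposal would match the paper.

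Your primary reconstruction, however, has a genuine gap, and it is precisely the one you flagged yourself. The sweep you describe (intersect the intervals $[\gb{e}{h},\fb{e}{h}]$ edge by edge, close a block when the intersection empties) computes a minimum cover in which every guard fully sees one \emph{contiguous} block; that is the algorithm for the contiguous/bijective variant of the problem, i.e.\ Theorem~\ref{thm:BATC_(1)_main} of the paper, not for this lemma. Your greedy-stays-ahead argument does not bridge the two: the inductive claim---after $i$ guards, no placement covers a contiguous prefix of $T$ ending to the right of the greedy's $i$-th block end $q_i$---fails once visibility regions may be disconnected. Schematically, two guards $u,u'$ can interleave, with $\guarded{u}$ containing $T(v_1,a)$ and $T(b,p)$ but not the part between $a$ and $b$, and $\guarded{u'}$ containing $T(a,b)$, so that two guards cover the whole prefix up to $p$ while no single guard sees a contiguous prefix much past $a$; your greedy's first block then ends at some $a'<b$, its second block ends near $b<p$, and it already needs a third guard where the optimum uses two. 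Monotonicity of $R(s)$ is true but irrelevant to this failure, because optimal covers are not unions of contiguous blocks and so never enter your induction. The greedy that actually proves the lemma (Daescu et al.'s, mirrored in Section~\ref{sec:ATC_even} of the paper) is different: place each successive guard at the leftmost point $\fb{e}{h}$ over the edges $e$ not yet covered, allow its covered set to be disconnected, delete everything it sees, and repeat; its optimality follows from the domination/exchange argument of the kind in Lemmas~\ref{lem:ATC.g_1_location} and~\ref{lem:ATC_even.guard_location_greedy_recursive} (any guard placed further left sees a subset of what the chosen position sees), not from prefix monotonicity.
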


Let $W=\langle w_1,\ldots,w_m\rangle$ be the 
sequence of vertices of $\fFunc$ and $\gFunc$, 
ordered by their $y$-coordinates, where $m=O(n)$.
Let $H$ denote the sequence 
$\langle y(w_1),y(w_2),\ldots, y(w_m)\rangle$.

The following lemma, Lemma~\ref{lem:ATC.first_binary_search}, enables us to find $(y_0, y'_0]$ such that $y_0$ and $y'_0$ are consecutive in $H$ using binary search.
See Fig.~\ref{fig:ATC_first_interval}(b).

\begin{lemma}\label{lem:ATC.first_binary_search}
We can compute the interval $(y_0, y'_0]$ in $O(n \log n)$ time.
\end{lemma}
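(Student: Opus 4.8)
The plan is to set up a monotone decision procedure over the candidate heights in $H$ and to isolate the interval containing $h^*$ by binary search. First I would compute \fFunc and \gFunc together with their breakpoints in $O(n)$ time using Lemma~\ref{lem:ATC.f(h)_properties_time_complexity}. Since \fFunc is monotone increasing and \gFunc is monotone decreasing, the breakpoints of each function already occur in $y$-order, so merging the two lists produces $H=\langle y(w_1),\ldots,y(w_m)\rangle$ in $O(n)$ time; even a straightforward sort of the $m=O(n)$ breakpoints stays within the claimed bound.

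The key observation is that the decision ``$T$ is covered by two guards on \altitude{h}'' is monotone in $h$. By Lemma~\ref{lem:guard.monotone}, if $T$ can be covered at height $h$, then it can be covered at every height $h'\ge h$, so there is a single threshold $h^*$ at which the answer flips from ``no'' to ``yes''. By Lemma~\ref{lem:ATC.g_1_location}, placing \g{1} at \fa{h} and \g{2} at \ga{h} is optimal among all two-guard placements on \altitude{h}; hence the decision at a fixed $h$ reduces to computing \fa{h} and \ga{h} and testing whether $\guarded{\fa{h}}\cup\guarded{\ga{h}}=T$, or equivalently, by Lemma~\ref{lem:ATC.comparison_time}, whether a minimum-sized covering set has at most two guards. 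Either formulation yields an $O(n)$-time decision.

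Then I would binary search over the sorted values in $H$ to find the smallest index $j+1$ for which the decision at $h=y(w_{j+1})$ answers ``yes''. This requires $O(\log m)=O(\log n)$ decisions, so $O(n\log n)$ time overall. Setting $y_0=y(w_j)$ and $y'_0=y(w_{j+1})$, the decision answers ``no'' at $y_0$ (hence $h^*>y_0$) and ``yes'' at $y'_0$ (hence $h^*\le y'_0$), which gives $h^*\in(y_0,y'_0]$. Because $y_0$ and $y'_0$ are consecutive in $H$, no breakpoint of \fFunc or \gFunc lies strictly inside $(y_0,y'_0)$, so each function coincides with a single linear piece on $(y_0,y'_0]$, as required.

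The subtle point to handle carefully is the boundary behavior at the breakpoints: I must argue that restricting the binary search to the values in $H$ cannot skip the threshold $h^*$. This holds because the curves \fFunc and \ga{h}, and therefore the placements \fa{h} and \ga{h} that determine coverage, change their linear description only at the breakpoints in $H$; consequently $h^*$ either equals some $y(w_i)$ or lies strictly between two consecutive breakpoints, and the half-open interval $(y_0,y'_0]$ captures both cases since the linear piece ending at $w_{j+1}$ governs the entire interval $(y(w_j),y(w_{j+1})]$. Evaluating the monotone decision exactly at the breakpoints is therefore enough to pin down this piece, while $h^*$ itself (possibly an interior point where a middle edge first becomes jointly visible) is located within $(y_0,y'_0]$ in the subsequent step of the algorithm.
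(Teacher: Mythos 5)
Your proof is correct and takes essentially the same approach as the paper's: compute \fFunc and \gFunc in $O(n)$ time, then binary search over their $O(n)$ breakpoints using the $O(n)$-time decision procedure of Lemma~\ref{lem:ATC.comparison_time}, exploiting monotonicity of coverage in $h$. The paper's proof is terser (it simply asserts that the interval between the consecutive breakpoints around the flip of the decision is $(y_0,y'_0]$), whereas you additionally spell out the monotonicity justification via Lemma~\ref{lem:guard.monotone} and the boundary behavior at breakpoints, but the substance is identical.
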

\begin{proof}
Let $w_i$ be the lowest vertex in $W$ such that $T$ can be covered by two guards lying on $L(y(w_i))$.
It is clear that
$(y_0, y'_0] = \big(y(w_{i-1}),y(w_i))\big]$.
By Lemma~\ref{lem:ATC.f(h)_properties_time_complexity}, 
we can compute $\fFunc$ and $\gFunc$ in $O(n)$ time.
Since $m=O(n)$, we can compute $w_i$ in $O(n \log n)$ time 
by binary search using Lemma~\ref{lem:ATC.comparison_time}.
\end{proof}

Let $q \in \tilde{T}$ and $e$ be an edge of $T$ that is not fully visible from $q$.
Let $e'$ be the maximal segment of $e$
that is fully visible from $q$.
Let $p$ be the endpoint of $e'$ that is not an endpoint of $e$.
The \emph{peak} of $e$ for $q$, denoted by $\peak{e}{q}$,
is the set of vertices $v \in T$ such that
$v \in pq$.
See Fig.~\ref{fig:change_events}(b).
For edges $e$ with $e'=\emptyset$, we set $\peak{e}{q}=\emptyset$.

Recall that $\g{1}$ and $\g{2}$ are the two guards.
Let $\fullLeft{}$ be the set of fully visible edges from $\g{1}$ on $L(h)$.
Let $\partialLeft{}$ be the set of pairs $(e, \peak{e}{\g{1}})$ for edges $e$ that are not fully visible from $\g{1}$ on $L(h)$, and $\peak{e}{\g{1}} \neq \emptyset$.
We similarly define $\fullRight{}$ and $\partialRight{}$ 
for $\g{2}$ on $L(h)$.
Let $(y_1,y'_1)$ be the maximal subinterval of $(y_0, y'_0]$
such that $h^* \in (y_1,y'_1]$, and 
the sets $\fullLeft{}$, $\partialLeft{}$, $\fullRight{}$, and $\partialRight{}$ remain unchanged for all $h\in (y_1,y'_1)$.

\begin{figure}[h]
  \centering
  \includegraphics[width=0.75\textwidth]{./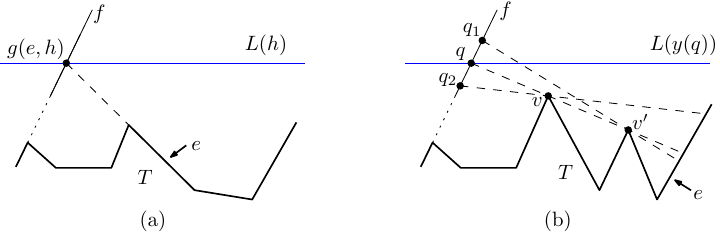}%
  \caption{  (a) $e$ is an edge of $T$. $f(h)=g(e,h)$.
  (b) $vv'$ is an edge of $\mathsf{S}_1$.
  We have that 
  $\overline{vv'} \cap e$ is contained in the interior of $e$, $q = \fFunc \cap \overline{vv'}$, and 
  $\peak{e}{q} = \{v, v'\}$ for $h = y(q)$.
  For arbitrary small $\epsilon > 0$,  
  $\peak{e}{q_1} = \{v'\}$ for $h = y(q) + \epsilon$, and
  $\peak{e}{q_2} = \{v\}$ for $h = y(q) - \epsilon$.}
    \label{fig:change_events}
\end{figure}

\begin{lemma}\label{lem:ATC.second_binary_search}
We can compute the interval $(y_1,y'_1)$ in $O(n\log n)$ time.
\end{lemma}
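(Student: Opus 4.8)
The plan is to refine the interval $(y_0, y'_0]$ further by identifying all the heights at which the combinatorial structure of the visibility from either guard changes, and then locating $h^*$ between two consecutive such heights by binary search. Recall that on $(y_0, y'_0]$ both $\fFunc$ and $\gFunc$ are single linear functions, so $\g{1}=\fFunc$ and $\g{2}=\gFunc$ each trace a single line segment as $h$ varies. The sets $\fullLeft{}$, $\partialLeft{}$, $\fullRight{}$, $\partialRight{}$ encode, for each guard, which edges are fully visible, and for each partially visible edge the vertex (or vertices) of the relevant shortest path tree that occludes it. These sets can only change at discrete heights, which I will call \emph{events}. First I would enumerate the events.

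The key observation is that for the left guard $\g{1}$, an edge $e$'s membership in $\fullLeft{}$ or $\partialLeft{}$ is governed by the edges of $\mathsf{S}_1$: as $h$ increases and $\g{1}=\fFunc$ slides up its line, the visible portion $e'$ of a partially visible edge $e$ shrinks or grows, and $\peak{e}{\g{1}}$ changes exactly when the sightline from $\g{1}$ grazes a vertex of $\mathsf{S}_1$, i.e.\ when $\fFunc$ crosses an extension line $\overline{vv'}$ of an edge of $\mathsf{S}_1$ (see Fig.~\ref{fig:change_events}). Each such crossing occurs at one height, and there are $O(n)$ edges in $\mathsf{S}_1$, so the left guard contributes $O(n)$ candidate event heights; the right guard $\g{2}=\gFunc$ with $\mathsf{S}_n$ contributes another $O(n)$. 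Concretely, for each edge $vv'$ of $\mathsf{S}_1$ I would intersect the line $\overline{vv'}$ with the line traced by $\fFunc$ on $(y_0,y'_0]$ to get at most one event height inside the interval, and symmetrically for $\mathsf{S}_n$ and $\gFunc$. Collecting and sorting these $O(n)$ heights takes $O(n\log n)$ time.

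Now let $z_1 < z_2 < \cdots$ be the sorted event heights lying in $(y_0, y'_0]$. Between two consecutive events the four sets are constant, which is exactly the invariance required of $(y_1, y'_1)$. To pin down the subinterval containing $h^*$, I would binary search over the sorted events: for a candidate height $z_j$, I test whether $T$ is covered by the two guards $\fFunc$ and $\gFunc$ at $h=z_j$, which by Lemma~\ref{lem:ATC.comparison_time} (applied with $k=2$) takes $O(n)$ time. Since coverage is monotone in $h$ by Lemma~\ref{lem:guard.monotone}, this locates the lowest event at or above which coverage holds, and $h^*$ lies in the half-open interval just below it; setting $(y_1,y'_1)$ to be the open interval between the two bracketing events (intersected with $(y_0,y'_0]$) gives the claimed interval. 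The binary search performs $O(\log n)$ coverage tests of $O(n)$ each, for $O(n\log n)$ time, which dominates the $O(n\log n)$ preprocessing, giving the stated bound.

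The main obstacle I expect is the event-generation step: I must argue carefully that the \emph{only} way any of the four sets can change as $h$ varies within $(y_0,y'_0]$ is when $\fFunc$ or $\gFunc$ crosses an extension line of a shortest-path-tree edge, and that each such crossing is captured as a single height. In particular I would verify that an edge $e$ can transition between full and partial visibility, or can have its $\peak{e}{\g{1}}$ set change, \emph{only} at such a grazing event, using Lemma~\ref{lem:ATC.vertex_visibility_SPT} and Lemma~\ref{lem:ATC.edge_strong_visibility_case} to reduce the visibility of an edge to the visibility of its governing vertex and thence to an extension line of $\mathsf{S}_1$ or $\mathsf{S}_n$. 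Handling the degenerate case where $\fFunc$ passes exactly through a vertex of $\mathsf{S}_1$ (so $\peak{e}{q}$ contains two vertices, as in Fig.~\ref{fig:change_events}(b)) requires a consistent tie-breaking convention so that the half-open interval $(y_1,y'_1]$ is well defined; I would treat each such height as an isolated event separating two open subintervals.
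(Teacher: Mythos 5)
Your proposal is correct and follows essentially the same route as the paper's proof: the event heights are the intersections of \fFunc and \gFunc with the lines extending edges of $\mathsf{S}_1$ and $\mathsf{S}_n$ (the paper restricts to negative-slope edges of $\mathsf{S}_1$ and positive-slope edges of $\mathsf{S}_n$, a constant-factor refinement only), there are $O(n)$ such events, and after sorting them a binary search using the $O(n)$-time decision procedure of Lemma~\ref{lem:ATC.comparison_time} brackets $h^*$ between consecutive events in $O(n\log n)$ total time. The correctness argument you flag as the main obstacle---that $\fullLeft{}$, $\partialLeft{}$, $\fullRight{}$, $\partialRight{}$ can change only when \fFunc or \gFunc crosses such an extension line---is exactly what the paper establishes via Lemmas~\ref{lem:ATC.vertex_visibility_SPT} and~\ref{lem:ATC.edge_strong_visibility_case} together with the observation that a ray through two vertices not forming an edge of the shortest path tree hits $T$ before reaching the guard curve.
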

\begin{proof}
Let $E$ be the set of edges of $\mathsf{S}_1$ with negative slopes. Let $p_1, \ldots, p_m$ be the intersection points between $f$ and $\overline{e}$ for all edges $e \in E$,
sorted in increasing order of $y$-coordinate.
Since $f$ is monotone increasing by Lemma~\ref{lem:ATC.f(h)_properties_time_complexity},
we have $m=O(n)$.
We can compute the intersection points and sort them in $O(n\log n)$ time.

We show that on any $y$-interval $h \in (y(p_i), y(p_{i+1}))$
for $i = 1, \ldots, m-1$, the sets $\fullLeft{}$ and $\partialLeft{}$ remain the same.
It is clear that $\fullLeft{}$ remains the same on such a $y$-interval
by Lemmas~\ref{lem:ATC.vertex_visibility_SPT} and~\ref{lem:ATC.edge_strong_visibility_case}, 
because the ray $\gb{e}{h}$ is contained in the line extended from an edge of $\mathsf{S}_1$,
for any fixed edge $e$ of $T$ and $h \in \initInterval$.
See Fig.~\ref{fig:change_events}(a).

Now consider the set $\partialLeft{}$.
For two distinct vertices $v, v'$ of $T$, 
let $e$ be the edge of $T$ that the ray emanating from $v$ 
towards $v'$ meets for the first time.
If the ray intersects $f$ at $q$ before hitting $T$, 
the peak of $e$ changes at $y(q)$.
See Fig.~\ref{fig:change_events}(b).
Observe that 
if $vv'$ is not an edge of $\mathsf{S}_1$, 
the ray emanating from $v'$ towards $v$ intersects $T$ 
before hitting $f$.
Thus, for any $y$-interval $h \in (y(p_i), y(p_{i+1}))$
for $i = 1, \ldots, m-1$, $\partialLeft{}$ remains the same.

Similarly, for the set $E'$ of edges of $\mathsf{S}_n$ with positive slopes and the intersection points $q_1, \ldots, q_{m'}$ of $g$ and $\overline{e}$ for all edges $e \in E'$, we can show that on any $y$-interval $h \in (y(q_i), y(q_{i+1}))$
for $i = 1, \ldots, m'-1$, $\fullRight{}$ and $\partialRight{}$ remain the same.

To find $(y_1,y'_1)$ by binary search, we compute $p_1,\ldots, p_m$ and $q_1, \ldots, q_{m'}$ and sort them in $O(n\log n)$ time.
Since $m=O(n)$ and $m'=O(n)$, the number of binary search steps is $O(\log n)$.
By Lemma~\ref{lem:ATC.comparison_time}, each comparison of the binary search is done in linear time.
Thus, we can compute $(y_1,y'_1)$ in $O(n\log n)$ time.
\end{proof}

By Lemma~\ref{lem:ATC.second_binary_search}, we can compute $(y_1,y'_1)$ in $O(n\log n)$ time.
For each edge $e$ of $T$ such that
$e \notin U(y_1)$ and $e \notin \overline{U}(y_1)$,
we compute the minimum $y$-coordinate $h$
such that $e$ is covered by $\g{1}$ and $\g{2}$ on $L(h)$ in $O(1)$ time. 
The optimal $y$-coordinate $h^*$ is the maximum among 
the $y$-coordinates.
Thus, we have 
the following theorem.

\begin{theorem}\label{thm:ATC_two}
  Given a terrain with $n$ vertices, the Altitude Terrain Cover problem for $k=2$ can be solved in $O(n \log n)$ time.
\end{theorem}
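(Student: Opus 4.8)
The plan is to leverage the structural lemmas already established so as to reduce the continuous minimization over $h$ to a finite search followed by constant work per edge. By Lemma~\ref{lem:ATC.g_1_location}, for every height $h$ the best two-guard placement on \altitude{h} is to put $\g{1}$ at \fa{h} and $\g{2}$ at \ga{h}; hence it suffices to locate the smallest $h$ at which these two canonical placements together cover $T$, and Lemma~\ref{lem:guard.monotone} guarantees that feasibility is monotone in $h$, so such a smallest value $h^*$ is well defined. The strategy is then to shrink an interval known to contain $h^*$ until the combinatorial visibility structure freezes, at which point $h^*$ can be read off directly.

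First I would compute \fFunc and \gFunc in $O(n)$ time via Lemma~\ref{lem:ATC.f(h)_properties_time_complexity}; being piecewise linear and monotone, their $O(n)$ breakpoints partition \initInterval into slabs. Next, invoking Lemma~\ref{lem:ATC.first_binary_search}, I would binary search over the breakpoint heights to isolate the slab $(y_0,y'_0]$ on which both \fFunc and \gFunc are linear and which contains $h^*$, each decision running the $O(n)$-time coverage test of Lemma~\ref{lem:ATC.comparison_time}, for $O(n\log n)$ total. Then Lemma~\ref{lem:ATC.second_binary_search} supplies a second binary search inside $(y_0,y'_0]$ yielding the subinterval $(y_1,y'_1)$ on which the four sets \fullLeft{}, \partialLeft{}, \fullRight{}, \partialRight{} are simultaneously constant, again in $O(n\log n)$ time.

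Inside $(y_1,y'_1)$ the picture becomes local. An edge fully visible from one guard throughout the interval imposes no constraint, so the only active edges are those $e$ with $e\notin\fullLeft{}$ and $e\notin\fullRight{}$, whose peaks toward each guard are frozen. For such an edge, the portion visible from \fa{h} extends rightward and the portion visible from \ga{h} extends leftward as $h$ grows, each governed by a fixed supporting line through a fixed peak vertex; since \fa{h} and \ga{h} move linearly in $h$ on this slab, the least $h$ at which the two visible portions of $e$ meet is the root of a single equation, computable in $O(1)$ time. Taking the maximum of these $O(n)$ thresholds produces a height at which every active edge is covered while all previously covered edges remain covered by Lemma~\ref{lem:guard.monotone}, and this maximum equals $h^*$. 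Summing $O(n)+O(n\log n)+O(n\log n)+O(n)$ gives the claimed $O(n\log n)$ bound.

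The step I expect to be the main obstacle is the final per-edge computation together with its correctness. One must argue that, on $(y_1,y'_1)$, joint coverage of each active edge is a monotone threshold condition whose breakpoint depends only on the frozen peak data and the two supporting lines, so that it collapses to a constant-time root computation rather than anything requiring the whole terrain. One must also verify that maximizing over the active edges neither under-shoots $h^*$ (below the maximum some active edge is still uncovered) nor over-shoots it (at the maximum, monotonicity forces global coverage). The delicate points are identifying the edge that realizes the maximum and confirming that $h^*$ lies in the interior of the frozen interval, so that the linearity and fixed-structure assumptions used in the $O(1)$ computation actually hold at the optimum.
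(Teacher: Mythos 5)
Your proposal follows essentially the same route as the paper: compute \fFunc and \gFunc, run the two binary searches of Lemmas~\ref{lem:ATC.first_binary_search} and~\ref{lem:ATC.second_binary_search} to freeze the visibility structure on $(y_1,y'_1)$, then take the maximum over the active edges of the constant-time per-edge thresholds. The only detail you leave implicit, which the paper's proof spells out, is how the frozen peak data $\peak{e}{\g{1}}$ and $\peak{e}{\g{2}}$ is actually obtained (via the shortest path trees and Lemma~8 of Daescu et al.\ in $O(n)$ total time), but this does not change the approach or the bound.
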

\begin{proof}
First, we determine for every edge $e$ of $T$ 
such that $e \notin U(y_1)$ and $e \notin \overline{U}(y_1)$
in $O(n)$ time.
We can compute $\mathsf{S}_1$ and $\mathsf{S}_n$ in linear time~\cite{guibas1986linear}.
With the shortest path trees, we can compute $\fb{e}{y_1}$ and $\gb{e}{y_1}$ in $O(1)$ time for each edge $e$ of $T$ by Lemmas~\ref{lem:ATC.vertex_visibility_SPT} and~\ref{lem:ATC.edge_strong_visibility_case}.
If $\gb{e}{y_1} \leq f(y_1) \leq \fb{e}{y_1}$, $e$ is fully visible from $u_1$.
If $\gb{e}{y_1} \leq g(y_1) \leq \fb{e}{y_1}$, $e$ is fully visible from $u_2$.

For every edge $e$ of $T$ that is not fully visible from $\g{1}$,
the maximum portion of $e$ visible from $\g{1}$
can be computed in $O(n)$ time by Lemma 8 in~\cite{daescu2019altitude}.
During the computation of the visible portion for each edge $e$ of $T$ at $y_1$, we also compute $\peak{e}{\g{1}}$ at $y_1$. 
Similarly, for each edge $e$ of $T$, we can compute $\peak{e}{\g{2}}$ at $y_1$. 
Finally, for each edge $e$ of $T$ that is not fully visible, we can compute the minimum $h$ such that $e$ is covered by $\g{1}$ and $\g{2}$ on $L(h)$ in $O(1)$ time with $\peak{e}{\g{1}}$ and $\peak{e}{\g{2}}$.
\end{proof}

\subsection{Even numbers of guards.}\label{sec:ATC_even}
Here we consider even numbers $k > 2$.
Let $\g{1}, \ldots, \g{k}$ be the guards on $L(h)$
from left to right.
Our algorithm works in a greedy manner.
It places $\g{1}$ to cover a subchain containing $v_1$, and places $\g{k}$ to cover a subchain containing $v_n$. Let $\mathcal{T}_1$ and $\mathcal{T}_k$ be the set of maximal subchains of $T$ covered by $\g{1}$ and $\g{k}$, respectively.
The algorithm then places $\g{2}$ (and $\g{k-1}$) to cover 
the maximal subchains of $T\setminus (\mathcal{T}_1\cup \mathcal{T}_k)$ containing 
the leftmost (and the rightmost) vertex of it.
It continues to place the remaining guards in this
manner until it places $\g{k/2}$ and $\g{k/2+1}$.
The following lemma and corollary generalize  
Lemma~\ref{lem:ATC.guard_location_Necessary_Condition}, which gives a necessary condition on the position of guards.

\begin{lemma}\label{lem:ATC_even.guard_location_necessary_recursive}
Suppose that $\g{1},\ldots,\g{i}$ and $\g{k-i+1},\ldots,\g{k}$ 
are placed on $L(h)$ for $1 \le i< k/2$.
Then $T$ can be covered only if both of the followings hold.
\begin{denseitems}
	\item[\textnormal{(a)}] $x(\g{i+1}) \leq x(p)$, where $p$ is the leftmost point among $\fb{e}{h}$'s for all edges $e$ of $T\setminus \bigcup^{i}_{j=1} \big(\guarded{\g{j}} \cup\guarded{\g{k-j+1}}\big)$.
	\item[\textnormal{(b)}] $x(q)\le x(\g{k-i})$, where $q$ is 
the rightmost point among $\gb{e}{h}$'s for all edges $e\subset T\setminus \bigcup^{i}_{j=1} \big(\guarded{\g{j}} \cup\guarded{\g{k-j+1}}\big)$.
\end{denseitems}
\end{lemma}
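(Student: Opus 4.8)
The plan is to generalize the argument in Lemma~\ref{lem:ATC.guard_location_Necessary_Condition} by induction on the greedy placement, treating the leftmost unguarded region after $i$ guards from each side have been placed as the analogue of the full terrain in the base case. The key observation is that once $\g{1},\ldots,\g{i}$ and $\g{k-i+1},\ldots,\g{k}$ are fixed, the remaining uncovered portion of $T$ is exactly $T\setminus \bigcup^{i}_{j=1}\big(\guarded{\g{j}}\cup\guarded{\g{k-j+1}}\big)$, and every point of this portion must be covered by the $k-2i$ guards $\g{i+1},\ldots,\g{k-i}$ placed strictly between $\g{i}$ and $\g{k-i+1}$ in $x$-coordinate. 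Since these remaining guards are ordered by $x$-coordinate with $\g{i+1}$ leftmost among them, I would show that $\g{i+1}$ plays the same role for the uncovered portion that $\g{1}$ played for all of $T$ in Lemma~\ref{lem:onlyIf}.

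First I would prove part~(a). Suppose for contradiction that $x(p) < x(\g{i+1})$, where $p=\fb{e}{h}$ is the leftmost point among the $\fb{e}{h}$'s over all edges $e$ of the uncovered portion. Fix that edge $e$. By the definition of $\fb{e}{h}$ as the rightmost point of $L(h)$ from which $e$ is fully visible, together with Lemma~\ref{lem:ATC.visibility_of_left_guards}, any guard placed to the right of $\fb{e}{h}$ fails to see some point of $e$; and since $\g{i+1},\ldots,\g{k-i}$ all lie at $x$-coordinate at least $x(\g{i+1}) > x(p) = x(\fb{e}{h})$, none of the remaining guards can cover that point of $e$. Because $e$ belongs to the portion not covered by $\g{1},\ldots,\g{i},\g{k-i+1},\ldots,\g{k}$, that point is visible from no guard at all, contradicting coverage of $T$. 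Part~(b) follows by the symmetric argument with $\gb{e}{h}$, the rightmost guard $\g{k-i}$ among the remaining ones, and the second half of Lemma~\ref{lem:ATC.visibility_of_left_guards}.

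The main subtlety, rather than the main obstacle, is that one must argue the point of $e$ in question is genuinely unseen by all of $\g{1},\ldots,\g{i}$ and $\g{k-i+1},\ldots,\g{k}$, not merely by $\g{i+1},\ldots,\g{k-i}$. This is immediate by construction: the edges $e$ ranged over in the statement are precisely those in the set difference $T\setminus \bigcup^{i}_{j=1}\big(\guarded{\g{j}}\cup\guarded{\g{k-j+1}}\big)$, so no previously placed guard sees any point of such an $e$, and in particular none sees the point of $e$ exhibited above. Hence the only candidates for covering it are $\g{i+1},\ldots,\g{k-i}$, and the $x$-ordering of the guards, combined with Lemma~\ref{lem:ATC.visibility_of_left_guards}, rules all of them out. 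This reduces the statement to exactly the base case reasoning of Lemma~\ref{lem:onlyIf} applied to the uncovered subterrain, and the proof is complete.
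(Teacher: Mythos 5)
Your proof is correct and follows essentially the same route as the paper's: the paper likewise argues by contradiction, fixes the edge $e$ of the uncovered portion with $\fb{e}{h}=p$, and invokes the argument of Lemma~\ref{lem:ATC.guard_location_Necessary_Condition} (via Lemma~\ref{lem:ATC.visibility_of_left_guards}) to exhibit a point of $e$ invisible to all of $\g{i+1},\ldots,\g{k-i}$, with part~(b) by symmetry. Your write-up is in fact somewhat more explicit than the paper's two-line proof, particularly in noting that points of $e$ are already unseen by the placed guards by definition of the set difference.
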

\begin{proof}
Assume to the contrary that $x(p) < x(\g{i+1})$.
Let $e$ be an edge of $T \setminus \bigcup^{i}_{j=1} \big(\guarded{\g{j}} \cup\guarded{\g{k-j+1}}\big)$ such that $\fb{e}{h} = p$.
Using an argument similar to the proof of Lemma~\ref{lem:ATC.guard_location_Necessary_Condition},
we can show that there is a point on $e$ that is not visible from $\g{i+1},\ldots,\g{k-i}$ regardless of their positions.
Similarly, we can show that $x(q) \leq x(\g{k-i}$).
\end{proof}

Let $\fFuncI{1}=\fFunc$, $\gFuncI{1}=\gFunc$ and $\unseen{1} = T$.
Let $\unseen{i+1} = \unseen{i}\setminus \big(\guarded{\fFuncI{i}}\cup \guarded{\gFuncI{i}} \big)$
be the parts of $T$ not covered by any of 
$\fFuncI{1}$, \ldots , $\fFuncI{i}$ and $\gFuncI{1}$, \ldots, $\gFuncI{i}$.
We define $\fFuncI{i}$ 
as the leftmost point among $\fb{e}{h}$ for all edges $e$ of $\unseen{i}$.
Similarly, we define $\gFuncI{i}$ 
as the rightmost point among $\gb{e}{h}$ for all edges $e$ of $\unseen{i}$.

With Lemma~\ref{lem:ATC_even.guard_location_necessary_recursive}, we place the $k$ guards on $\altitude{h}$ such that $\g{i} = f_i(h)$ for $i \leq k/2$, and $\g{i} = g_{k+1-i}(h)$ for $k/2 < i \leq k$.

\begin{lemma}\label{lem:ATC_even.guard_location_greedy_recursive}
Let $i$ be an integer with $1 < i\leq k/2$. For any point $u$ on $\altitude{h}$ with $x(\fFuncI{i-1}) \leq x(u) \leq x(\fFuncI{i})$,
$\unseen{i}\setminus \guarded{\fFuncI{i}} \subseteq \unseen{i} \setminus \guarded{u}$. For any point $u'$ on $\altitude{h}$ with $x(\gFuncI{i}) \leq x(u') \leq x(\gFuncI{i-1})$, 
$\unseen{i}\setminus \guarded{\gFuncI{i}} \subseteq \unseen{i} \setminus \guarded{u'}$.
\end{lemma}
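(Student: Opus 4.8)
The plan is to generalize the proof of Lemma~\ref{lem:ATC.g_1_location}, which is exactly the present statement in the special case $i=1$, where $\fFuncI{1}=\fa{h}$ and $\unseen{1}=T$. By the left--right symmetry of the two claims it suffices to prove the first inclusion $\unseen{i}\setminus\guarded{\fFuncI{i}}\subseteq\unseen{i}\setminus\guarded{u}$. Taking complements within $\unseen{i}$, this is equivalent to showing that every point $p\in\unseen{i}$ that is visible from $u$ is also visible from $\fFuncI{i}$. So I would fix such a $p$ and prove $p\in\guarded{\fFuncI{i}}$, using only the upper bound $x(u)\le x(\fFuncI{i})$; the lower bound $x(\fFuncI{i-1})\le x(u)$ merely records that $u$ lies to the right of the previously placed guard and is not needed for the inclusion.

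First I would establish the inequality $x(\fFuncI{i})\le x(\fb{p}{h})$. The point $p$ lies on some (possibly partial) edge $e$ of $\unseen{i}$; since full visibility of $e$ is more restrictive than visibility of the single point $p$, the set of points of $\altitude{h}$ seeing all of $e$ is contained in the set seeing $p$, whence $x(\fb{e}{h})\le x(\fb{p}{h})$. Because $\fFuncI{i}$ is, by definition, the leftmost among the points $\fb{e'}{h}$ over all edges $e'$ of $\unseen{i}$, we have $x(\fFuncI{i})\le x(\fb{e}{h})$, and therefore $x(u)\le x(\fFuncI{i})\le x(\fb{p}{h})$. Thus $\fFuncI{i}$ lies weakly between $u$ and the rightmost point $\fb{p}{h}$ of $\altitude{h}$ from which $p$ is visible.

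Next I would run the monotone-visibility argument of Lemma~\ref{lem:ATC.visibility_of_left_guards}, splitting on the position of $p$ relative to $\fFuncI{i}$. If $x(p)\ge x(\fFuncI{i})$, then $p$ lies to the right of $\fFuncI{i}$; were $p$ invisible from $\fFuncI{i}$, the first part of Lemma~\ref{lem:ATC.visibility_of_left_guards} would force $p$ to be invisible from every point of $\altitude{h}$ left of $\fFuncI{i}$, in particular from $u$, contradicting that $u$ sees $p$ (the degenerate case $x(p)=x(\fFuncI{i})$ is immediate, since the vertical segment from $p$ up to $\fFuncI{i}$ does not cross the $x$-monotone chain). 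If instead $x(p)<x(\fFuncI{i})$, then $p$ lies to the left of $\fFuncI{i}$, while $\fb{p}{h}$ sees $p$ with $x(\fb{p}{h})\ge x(\fFuncI{i})$; were $p$ invisible from $\fFuncI{i}$, the second part of Lemma~\ref{lem:ATC.visibility_of_left_guards} would force $p$ to be invisible from every point right of $\fFuncI{i}$, contradicting that $\fb{p}{h}$ sees $p$. In either case $\fFuncI{i}$ sees $p$, which proves the first inclusion; the second inclusion for $\gFuncI{i}$ follows by the mirror-image argument.

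I expect the only genuinely delicate point to be the bookkeeping around $\unseen{i}$: since earlier guards may cut edges of $T$ partway, the ``edges of $\unseen{i}$'' are in general partial edges, so I must be careful to apply the definition of $\fFuncI{i}$ and the inequality $x(\fb{e}{h})\le x(\fb{p}{h})$ to the partial edge that actually contains $p$ rather than to the full edge of $T$. The remaining endpoint and degenerate cases---$x(u)=x(\fFuncI{i})$, or $p$ lying directly below $\fFuncI{i}$---are routine but should be recorded explicitly so that the strict inequalities needed to invoke Lemma~\ref{lem:ATC.visibility_of_left_guards} are actually met.
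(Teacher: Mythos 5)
Your proof is correct and follows essentially the same skeleton as the paper's: a case split on the position of $p$ relative to $\fFuncI{i}$, with the monotone-visibility Lemma~\ref{lem:ATC.visibility_of_left_guards} handling points to the right of $\fFuncI{i}$ exactly as the paper does. The one divergence is the left case ($x(p) < x(\fFuncI{i})$): the paper disposes of it by citing Lemma~\ref{lem:ATC.f(h)_left_visible}, which says such points are unconditionally visible from $\fFuncI{i}$, whereas you re-derive this from the second part of Lemma~\ref{lem:ATC.visibility_of_left_guards} together with the definitional inequality $x(\fFuncI{i}) \le x(\fb{e}{h}) \le x(\fb{p}{h})$ for the (partial) edge $e$ of $\unseen{i}$ containing $p$. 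Your variant is sound and arguably tidier on one point: Lemma~\ref{lem:ATC.f(h)_left_visible} is stated for $\fa{h}$ and the full terrain $T$, so the paper is implicitly invoking its generalization to $\fFuncI{i}$ and the partial edges of $\unseen{i}$, while your route gets the same conclusion directly from the definition of $\fFuncI{i}$ as the leftmost of the $\fb{e}{h}$'s over edges of $\unseen{i}$, which is exactly where the care about partial edges that you flag at the end belongs.
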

\begin{proof}
By Lemma~\ref{lem:ATC.f(h)_left_visible}, any point
$q\in \unseen{i}$
with $x(q) \leq x(\fFuncI{i})$ is visible from $\fFuncI{i}$.
For any point
$q\in \unseen{i}$
with $x(q) \geq x(\fFuncI{i})$, if $q$ is not visible from $\fFuncI{i}$, $q$ is not visible from $u$ by Lemma~\ref{lem:ATC.visibility_of_left_guards}. Thus the lemma holds. The second claim also holds analogously.
\end{proof}

Lemma~\ref{lem:ATC_even.guard_location_greedy_recursive}, which is a generalization of Lemma~\ref{lem:ATC.g_1_location}, indicates that
this placement is indeed the best choice.
For an edge $e\in \unseen{i}$, we show that $\fb{e}{h}$ is a piecewise rational function for any $i=2,\ldots,k/2$.
For a polynomial function $\phi$, let $\deg(\phi)$ be the degree of $\phi$.
For two integers $i \geq 0$ and $j > 0$, 
let 
$\SRF{i}{j}$ be the set of rational functions $C(h)/D(h)$, where $C(h)$ and $D(h)$ are polynomials with $\deg(C(h))\leq i$ and $\deg(D(h))\leq j$.

Let \full{i}
be the set of edges $e$ of $T$ such that every point of $e$ is visible from some guard among $\g{1}\ldots,\g{i}$ and $\g{k-i+1}, \ldots, \g{k}$
on $\altitude{h}$. 
Let $\partialLeft{i}$ be the set of pairs $(e, \peak{e}{\g{i}})$ 
for edges $e$ of $T$ such that
$e \notin \full{i}$,
for $1 \leq i \leq k/2$.
Similarly, let $\partialRight{i}$ be the set of pairs $(e, \peak{e}{\g{k-i+1}})$ for edges $e$ of $T$
such that
$e \notin \full{i}$,
for $1 \leq i \leq k/2$.

Let $\interval{i} = (h_i, h'_i)$ for $i=1,\ldots, k/2$ be the maximal subinterval of $\interval{i-1}$ such that $h^* \in (h_i,h'_i]$, the sets
$\partialLeft{i}, \partialRight{i}$, and $\full{i}$
remain the same for all $h \in \interval{i}$, and both $\fFuncI{i}$ and $\gFuncI{i}$ restricted to $\interval{i}$ are rational functions in $\SRF{i}{i-1}$.

For an edge $e$ of $T$ and an integer $i$ with $1<i\leq k/2$,
let $e(i,h)$ denote the portion of $e$ that is not visible from $\fFuncI{i-1}$ or $\gFuncI{i-1}$ on $\altitude{h}$, and 
let $\interval{i}(e) \subseteq \interval{i-1}$ be a $y$-interval such that
$h^*\in \interval{i}(e)$ and
$\peak{e}{\fb{e(i,h)}{h}}$
remains the same for all
$h\in \interval{i}(e)$. 

\begin{lemma}\label{lem:ATC_even.rational_function_degree}
For an integer $i$ with $1<i\leq k$, 
a $y$-coordinate $h \in \interval{i}(e)$, 
and an edge $e$
of $T$ with $e \notin\fullLeft{i-1}$,
$\fb{e(i,h)}{h}$
is a function in $\SRF{i}{i-1}$.
\end{lemma}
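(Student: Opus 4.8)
The plan is to prove this as the inductive step that feeds the definition of $\interval{i}$: assuming $\fFuncI{i-1}$ (and symmetrically $\gFuncI{i-1}$) restricted to $\interval{i-1}$ is a single rational function in $\SRF{i-1}{i-2}$, I would derive $\fb{e(i,h)}{h} \in \SRF{i}{i-1}$. The base case $i=2$ holds because $\fFuncI{1}=f$ is linear by Lemma~\ref{lem:ATC.f(h)_properties_time_complexity}, hence lies in $\SRF{1}{0}$.

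First I would fix the combinatorial data on $\interval{i}(e)$. Since $e \notin \fullLeft{i-1}$, the segment $e(i,h)$ is nonempty; as $T$ is $x$-monotone, the parts of $e$ visible from $\fFuncI{i-1}$ and from $\gFuncI{i-1}$ are subsegments abutting the left and right endpoints of $e$, so $e(i,h)$ is the single middle subsegment. Its left endpoint $p_L(h)$ is the far end of the portion visible from $\fFuncI{i-1}$, which by the definition of the peak lies on the line through $\fFuncI{i-1}$ and the binding vertex $u^{\ast}$ of $\peak{e}{\fFuncI{i-1}}$; this vertex is fixed on $\interval{i}(e)\subseteq \interval{i-1}$ because $\partialLeft{i-1}$ is constant on $\interval{i-1}$. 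Similarly, since $\altitude{h}$ lies above $e(i,h)$, the quantity $\fb{e(i,h)}{h}$ is the rightmost point of $\altitude{h}$ from which the binding endpoint of $e(i,h)$ is visible (an argument analogous to Lemma~\ref{lem:ATC.edge_strong_visibility_case}); by the definition of $\interval{i}(e)$ the set $\peak{e}{\fb{e(i,h)}{h}}$ is fixed, so its grazing vertex $w$ is a fixed vertex of $T$ on $\interval{i}(e)$. I would take the binding endpoint to be $p_L(h)$; the symmetric case, where the right endpoint binds, is handled identically using $\gFuncI{i-1}\in\SRF{i-1}{i-2}$, and the degenerate case where the binding endpoint is an actual vertex of $e$ only lowers the degree.

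Next I would carry out the two eliminations. Writing $\fFuncI{i-1}=(\xi(h),h)$ with $\xi\in\SRF{i-1}{i-2}$, and intersecting the line $\overline{\fFuncI{i-1}\,u^{\ast}}$ with the fixed line $\overline{e}=\{y=\alpha x+\beta\}$, a direct computation shows that both coordinates of $p_L(h)$ lie in $\SRF{i-1}{i-1}$ over a common denominator: $x(p_L)=P/Q$ and $y(p_L)=\alpha\,x(p_L)+\beta$ with $\deg P,\deg Q\le i-1$. Then $\fb{e(i,h)}{h}$ is the $x$-coordinate of $\altitude{h}\cap\overline{p_L(h)\,w}$, and after substituting $y(p_L)=\alpha\,x(p_L)+\beta$ and clearing denominators the quadratic-in-$x(p_L)$ term cancels (as do some linear terms), leaving a numerator that is affine in $x(p_L)$; concretely, with $w=(a_2,b_2)$ one obtains
\[
\fb{e(i,h)}{h}=\frac{P\,(b_2-\alpha a_2-h)+a_2(h-\beta)\,Q}{(b_2-\beta)\,Q-\alpha P}.
\]
Here the $-hP$ and the $a_2 h\,Q$ terms raise the numerator degree to at most $i$, while the denominator has degree at most $i-1$, so $\fb{e(i,h)}{h}\in\SRF{i}{i-1}$, as claimed.

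The main obstacle is this last step: naively composing the two line–intersection maps makes the degree appear to roughly double, and only the cancellation of the $x(p_L)^2$ term—which occurs precisely because $p_L$ lies on $\overline{e}$, so that $y(p_L)$ is affine in $x(p_L)$—keeps the numerator degree at $i$ rather than $2i-1$. Obtaining this cancellation cleanly, and verifying that $u^{\ast}$ and $w$ are genuinely constant on $\interval{i}(e)$ so that the whole expression is a single rational function there, is where the care lies; the degree bookkeeping is then routine.
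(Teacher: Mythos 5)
Your proposal is correct and takes essentially the same route as the paper's proof: both use that $\fFuncI{i-1}$ restricted to the interval is a rational function in $\SRF{i-1}{i-2}$ (base case $i=2$ linear), that the two relevant peak vertices are fixed on $\interval{i}(e)$, then compute the shadow endpoint of $e(i,h)$ by intersecting the guard--peak line with $\overline{e}$ and intersect the line through that endpoint and the second peak with $\altitude{h}$, counting degrees to land in $\SRF{i}{i-1}$. The only divergence is algebraic bookkeeping --- the paper carries both coordinates of the shadow endpoint over a common denominator (so no quadratic term ever appears), while you substitute the edge equation $y=\alpha x+\beta$ and cancel the $x(p_L)^2$ term explicitly; your formula and degree count check out (though your incidental claim that the portions visible from $\fFuncI{i-1}$ and $\gFuncI{i-1}$ must abut the left and right endpoints of $e$, respectively, is not true in general, the argument only needs that $e(i,h)$ is contiguous with endpoints on the shadow rays, which holds).
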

\begin{proof}
Let $vv'=e$.
Without loss of generality, assume $x(v) \leq x(v')$.
Let $pv'$ be the maximum portion of $vv'$ visible from $\fFuncI{i-1}$.
Let $vq$ be the maximum portion of $vv'$ visible from $\gFuncI{i-1}$.
By Lemma~\ref{lem:ATC.edge_strong_visibility_case}, $\fb{vv'(i,h)}{h}$ is either $\fb{vp}{h}$ or $\fb{qv'}{h}$.
Without loss of generality, assume $\fb{vv'(i,h)}{h}=\fb{vp}{h}$.

See Fig.~\ref{fig:rational_function}.
Let $\ell_1$ be the line through points
$\fFuncI{i-1} = \big(C(h)/D(h), h\big)$, where $C(h)$ and $D(h)$ are polynomial functions.
$\peak{vv'}{\fFuncI{i-1}} = (x_1,y_1)$, that is,

\[\ell_1 \coloneq y=\dfrac{(h-y_1)D(h)}{C(h)-x_1D(h)}x+\dfrac{y_1C(h) - x_1hD(h)}{C(h)-x_1D(h)}.\]
Let $p$ be the intersection point between $\ell_1$ and $\overline{vv'}\coloneq y=m_1x+b_1$, that is, the point at
\[\Bigg(\dfrac{(y_1-b_1)C(h)+(b_1x_1-x_1h)D(h)}{m_1C(h)+(y_1-h-m_1x_1)D(h)},\dfrac{m_1y_1C(h)+(b_1y_1-b_1h-x_1m_1h)D(h)}{m_1C(h)+(y_1-h-m_1x_1)D(h)}\Bigg).\]

Let $\alpha=m_1C(h)+(y_1-h-m_1x_1)D(h)$, $\beta=(y_1-b_1)C(h)+(b_1x_1-x_1h)D(h)$ and $\gamma=m_1y_1C(h)+(b_1y_1-b_1h-x_1m_1h)D(h)$.
Let $\ell_2$ be the line through $p$ and
$\peak{vv'}{f(vv'(i,h),h)}$, that is,
\[\ell_2:=y=\dfrac{\gamma-y_2\alpha}{\beta-x_2\alpha}x+\dfrac{y_2\beta-x_2\gamma}{\beta-x_2\alpha}.\]

Substituting $h$ into $y$ in the above equation yields the following:
\[\fb{vv'(i,h)}{h} = \Bigg(\dfrac{(\beta-x_2\alpha)h+(x_2\gamma-y_2\beta)}{\gamma-y_2\alpha},h\Bigg).\]

Let $a = \max\{\deg(\alpha),\deg(\beta),\deg(\gamma)\} \leq \max\{\deg(C(h)),\deg(D(h))\}$.
Then $\bigcup_{h \in \interval{i}}\fb{vv'(i,h)}{h}$ represents a function in $\SRF{a+1}{a}$.
When $i=2$, we have $a=1$ because $\fFuncI{1}$ is a linear function when restricted to $\interval{1}$.
In the same way, we have $a = i-1$ for every $1 < i \leq k$, and $\bigcup_{h \in \interval{i}(vv')}\fb{vv'(i,h)}{h}$ represents a rational function in $\SRF{i}{i-1}$. 
\end{proof}

\begin{figure}[h]
  \centering
  \includegraphics[width=0.7\textwidth]{./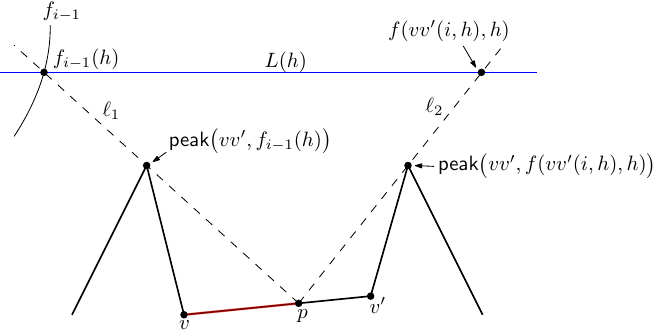}%
  \caption{Proof of Lemma~\ref{lem:ATC_even.rational_function_degree}.}\label{fig:rational_function}
\end{figure}

Lemma~\ref{lem:ATC_even.rational_function_degree} shows that $\fb{e}{h}$ restricted to $\interval{i}(e)$
is
a rational function. 
The following two lemmas show how to compute $I_1$
and $\interval{i}$'s for $i = 2, \ldots, k/2$ iteratively.
\begin{lemma}\label{lem:ATC_even.first_interval}
We can compute $\interval{1}$ in $O(n\log n)$ time.
\end{lemma}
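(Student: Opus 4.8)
The goal is to compute the interval $\interval{1} = (h_1, h'_1)$, the maximal subinterval of $\interval{0} = \initInterval$ on which the guard placements $\fFuncI{1}=\fFunc$ and $\gFuncI{1}=\gFunc$ behave as linear functions and the combinatorial structure (the sets $\full{1}$, $\partialLeft{1}$, $\partialRight{1}$) of what the two outermost guards see remains fixed, while still containing $h^*$. The plan is to observe that this is precisely the $k=2$ situation specialized to the first pair of guards $\g{1}$ and $\g{k}$, so I would reuse the two-phase binary-search machinery already developed in Section~\ref{sec:ATC_two}. In the first phase I compute $(y_0, y'_0]$ via Lemma~\ref{lem:ATC.first_binary_search}, the maximal subinterval on which both $\fFunc$ and $\gFunc$ are linear; in the second phase I refine it to $(y_1, y'_1)$ via Lemma~\ref{lem:ATC.second_binary_search}, the maximal subinterval on which the fully-visible sets and the peak sets of the two guards stabilize. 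Setting $\interval{1} = (y_1, y'_1)$ gives exactly the required interval, since linearity of $\fFuncI{1}, \gFuncI{1}$ is the $i=1$ case of membership in $\SRF{1}{0}$ (constant denominator), and the stability of $\full{1}, \partialLeft{1}, \partialRight{1}$ is exactly what Lemma~\ref{lem:ATC.second_binary_search} guarantees for the pair $(\g{1}, \g{k})$.

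First I would invoke Lemma~\ref{lem:ATC.f(h)_properties_time_complexity} to compute $\fFunc$ and $\gFunc$ in $O(n)$ time, having built $\mathsf{S}_1 = \SPT{v_1}$ and $\mathsf{S}_n = \SPT{v_n}$ in linear time~\cite{guibas1986linear}. These are piecewise-linear monotone curves with $O(n)$ breakpoints, so their merged vertex set $W$ has size $m = O(n)$. Next I run the binary search of Lemma~\ref{lem:ATC.first_binary_search} over the sorted $y$-coordinates in $H$: each test asks whether $T$ can be covered by two guards on $\altitude{y(w_i)}$, which by Lemma~\ref{lem:ATC.comparison_time} costs $O(n)$ time using the decision procedure from~\cite{daescu2019altitude}. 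Here the decision test is exactly: place $\g{1} = \fFunc$, $\g{k} = \gFunc$, and check coverage, which is valid because Lemma~\ref{lem:ATC_even.guard_location_greedy_recursive} (for $i=1$, equivalently Lemma~\ref{lem:ATC.g_1_location}) certifies these as the optimal placements of the outermost pair. This yields $(y_0, y'_0]$ in $O(n \log n)$ time.

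Then I refine within $(y_0, y'_0]$ exactly as in Lemma~\ref{lem:ATC.second_binary_search}: I collect the $y$-coordinates where a peak set can change, namely the intersections of $\fFunc$ with the extensions of the negatively-sloped edges of $\mathsf{S}_1$ and of $\gFunc$ with the extensions of the positively-sloped edges of $\mathsf{S}_n$. There are $O(n)$ such candidate breakpoints, computable and sortable in $O(n \log n)$ time, and between consecutive breakpoints the sets $\full{1}$, $\partialLeft{1}$, $\partialRight{1}$ are invariant by the argument of Lemma~\ref{lem:ATC.second_binary_search}. A binary search with $O(\log n)$ steps, each a coverage test costing $O(n)$ by Lemma~\ref{lem:ATC.comparison_time}, locates $\interval{1} = (y_1, y'_1)$ containing $h^*$ in $O(n \log n)$ time total. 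Summing the two phases gives the claimed $O(n \log n)$ bound.

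I do not expect a serious obstacle here, since $\interval{1}$ is essentially the two-guard interval of Theorem~\ref{thm:ATC_two} applied to the outermost pair, and all the heavy lifting is already packaged in Lemmas~\ref{lem:ATC.first_binary_search}, \ref{lem:ATC.second_binary_search}, and~\ref{lem:ATC.comparison_time}. The one point requiring a little care is to verify that the two defining conditions for $\interval{i}$ coincide, at level $i=1$, with the two phases above: the requirement that $\fFuncI{1}$ and $\gFuncI{1}$ lie in $\SRF{1}{0}$ reduces to piecewise-linearity (the denominator has degree $0$), which is secured by Lemma~\ref{lem:ATC.f(h)_properties_time_complexity} on each piece of $W$, and the requirement that the visibility and peak sets stabilize is precisely the conclusion of Lemma~\ref{lem:ATC.second_binary_search}. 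Hence the final interval $\interval{1}$ simultaneously satisfies both defining conditions and contains $h^*$, completing the computation in $O(n \log n)$ time.
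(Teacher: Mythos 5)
Your proposal stops one step short of the paper's proof, and the step it omits is the substantive one. You set $\interval{1}=(y_1,y'_1)$ and justify this by asserting that Lemma~\ref{lem:ATC.second_binary_search} stabilizes $\full{1}$; it does not. That lemma stabilizes $\fullLeft{}$ and $\fullRight{}$, the sets of edges fully visible from $\g{1}$ alone and from $\g{2}$ alone, together with the peak sets. The set $\full{1}$ of Section~\ref{sec:ATC_even} is larger: it contains every edge all of whose points are seen by \emph{some} of the two guards, hence also the edges covered \emph{jointly}, with one portion seen by $\g{1}$ and the complementary portion by $\g{k}$. Inside $(y_1,y'_1)$ the peaks are fixed, so as $h$ grows the portion of a partially visible edge seen from each guard grows continuously, and at some threshold the two portions meet: at that $h$ the edge enters $\full{1}$. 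These thresholds are in general interior points of $(y_1,y'_1)$ --- indeed, in the two-guard case $h^*$ itself is the maximum of exactly these thresholds (proof of Theorem~\ref{thm:ATC_two}), so such transitions certainly occur inside $(y_1,y'_1)$. Consequently $(y_1,y'_1)$ is in general a strict superset of $\interval{1}$, and your output violates the defining requirement that $U_1(h)$ be constant on $\interval{1}$ --- the very property the next iteration (Lemma~\ref{lem:ATC_even.compute_strong_visibility_each_edge}) relies on so that $\fFuncI{2}$ is an upper envelope over a fixed set of edges.

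The missing step, which is what the paper's proof does after invoking Lemma~\ref{lem:ATC.second_binary_search}, is the following: for each edge $e$ with $e\notin U(y_1)$ and $e\notin\overline{U}(y_1)$, use the now-fixed peaks $\peak{e}{\g{1}}$ and $\peak{e}{\g{2}}$ to compute in $O(1)$ time the minimum $h$ at which $e$ is covered by the two guards together (the same computation as in the proof of Theorem~\ref{thm:ATC_two}); then sort these $O(n)$ threshold values and run a binary search over them, deciding each comparison in $O(n)$ time by Lemma~\ref{lem:ATC.comparison_time}, to obtain the maximal subinterval containing $h^*$ on which $U_1(h)$ is unchanged. This costs another $O(n\log n)$, so the claimed bound is preserved; without it, the lemma is not established. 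A minor remark: your first phase via Lemma~\ref{lem:ATC.first_binary_search} is harmless but redundant, since Lemma~\ref{lem:ATC.second_binary_search} already returns a subinterval of $(y_0,y'_0]$.
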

\begin{proof}
By Lemma~\ref{lem:ATC.second_binary_search}, we can find the interval $(y_1,y'_1)$, 
such that $U(y_1)$, $\overline{U}(y_1)$, $V_1(y_1)$, and $\overline{V}_1(y_1)$ remain unchanged for all $h\in (y_1,y'_1)$, in $O(n\log n)$ time.
Therefore, we consider the maximal subinterval of $(y_1,y'_1)$ such that $U_1(h)$ remains the same for all $h$ in the maximal subinterval.
For each edge $e$ of $T$ such that
$e \notin U(y_1)$ and $e \notin \overline{U}(y_1)$,
we compute the minimum $y$-coordinate $h$
such that $e$ is covered by $\g{1}$ and $\g{2}$ together on $L(h)$ in $O(1)$ time using an argument similar to the proof of Theorem~\ref{thm:ATC_two}.
We compute such $h$ for each edge, sort them, and compute $\interval{1}$ using binary search by Lemma~\ref{lem:ATC.comparison_time}. This takes $O(n\log n)$ time.
\end{proof}

After computing $I_1$ by Lemma~\ref{lem:ATC_even.first_interval}, we compute $I_i$ for $i=2,\ldots, k/2$ in an iterative manner.
\begin{lemma}\label{lem:ATC_even.compute_strong_visibility_each_edge}
For any integer $i$ with $1< i \leq k/2$,
we can compute $\interval{i}$ given $\interval{i-1}$ in $O(k\lambda_{2i-1}(n)\log n)$ time.
\end{lemma}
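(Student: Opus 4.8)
The plan is to mirror the structure of the $k=2$ analysis (Lemmas~\ref{lem:ATC.first_binary_search} and~\ref{lem:ATC.second_binary_search}), but now with the functions $\fFuncI{i}$ and $\gFuncI{i}$ being rational rather than linear, and to obtain $\interval{i}$ from $\interval{i-1}$ by a single binary search whose comparisons invoke Lemma~\ref{lem:ATC.comparison_time}. First I would recall that, on the interval $\interval{i-1}$, the functions $\fFuncI{i-1}$ and $\gFuncI{i-1}$ are rational functions in $\SRF{i-1}{i-2}$, so by Lemma~\ref{lem:ATC_even.rational_function_degree} each $\fb{e}{h}$ for $e \in \unseen{i}$ restricted to $\interval{i}(e)$ is a function in $\SRF{i}{i-1}$. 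The new functions $\fFuncI{i}$ and $\gFuncI{i}$ are the upper and lower envelopes of these $O(n)$ rational functions. Since any two functions in $\SRF{i}{i-1}$ intersect at most $O(i)$ times (equating two such rationals reduces to a polynomial equation of degree $O(i)$), the complexity of such an envelope is $\lambda_{2i-1}(n)$, the length of a Davenport--Schinzel sequence of order $2i-1$, and it can be computed in $O(\lambda_{2i-1}(n)\log n)$ time by divide-and-conquer with $O(i)$-time merges.

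Next I would identify the $y$-coordinates at which the four sets $\full{i}$, $\partialLeft{i}$, and $\partialRight{i}$ change, exactly as in Lemma~\ref{lem:ATC.second_binary_search}. The membership of an edge $e$ in $\full{i}$ changes only when $\fFuncI{i}$ or $\gFuncI{i}$ crosses the line extending an edge of $\mathsf{S}_1$ or $\mathsf{S}_n$ (equivalently, a breakpoint of the envelopes or a crossing with one of the $O(n)$ strong-visibility lines); the peak $\peak{e}{\g{i}}$ recorded in $\partialLeft{i}$ changes only when $\fFuncI{i}$ crosses a ray through two vertices of $T$ that reaches $\fFuncI{i}$ before hitting $T$, and symmetrically for $\partialRight{i}$. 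Each such crossing is the solution of an equation between a function in $\SRF{i}{i-1}$ and a line, hence a polynomial equation of degree $O(i)$, yielding $O(i)$ candidate $y$-values per edge and $O(in)$ candidates in total; collecting the breakpoints of the two envelopes adds $O(\lambda_{2i-1}(n))$ further values. I would sort all these $O(\lambda_{2i-1}(n)+in)=O(\lambda_{2i-1}(n))$ candidate $y$-coordinates in $O(\lambda_{2i-1}(n)\log n)$ time.

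Finally I would locate $h^*$ among the sorted candidates by binary search: for a trial value $h$, I compute the minimum-sized guard set covering $T$ on $\altitude{h}$ via Lemma~\ref{lem:ATC.comparison_time} in $O(n)$ time, which decides whether $h \ge h^*$; the candidate pair bracketing $h^*$ gives the endpoints of $\interval{i}$. There are $O(\log \lambda_{2i-1}(n)) = O(\log n)$ search steps (using $\lambda_{2i-1}(n) = n\,2^{O(\alpha(n)^{i})}$, so $\log \lambda_{2i-1}(n)=O(\log n)$ for the regime $k < \lfloor n/2\rfloor$ of interest), each costing $O(n)$, for $O(n\log n)$ total in the search phase. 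Summing the phases, the cost is dominated by computing and sorting the envelope breakpoints and crossing events, giving $O(k\lambda_{2i-1}(n)\log n)$ time, where the extra factor $k$ accounts for the $O(i)=O(k)$ cost of each envelope merge and of each polynomial root computation. The main obstacle I anticipate is the bookkeeping needed to argue that the four sets $\full{i},\partialLeft{i},\partialRight{i}$ and the two envelopes all remain fixed on each elementary subinterval delimited by consecutive candidates, so that the predicate tested by the binary search is monotone in $h$ on the relevant range; this requires carefully verifying, via Lemmas~\ref{lem:ATC.vertex_visibility_SPT} and~\ref{lem:ATC.edge_strong_visibility_case}, that no qualitative change in the combinatorial structure of visibility occurs strictly between two consecutive event values, exactly as was done for the $k=2$ case.
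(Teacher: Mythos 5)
There is a genuine gap, and it is one of ordering. Your plan computes the envelopes $\fFuncI{i}$ and $\gFuncI{i}$ first, treating the curves $\fb{e}{h}$ for the edges $e$ of $\unseen{i}$ as $O(n)$ rational functions in $\SRF{i}{i-1}$, and only afterwards enumerates event values and does one binary search. But Lemma~\ref{lem:ATC_even.rational_function_degree} only guarantees that $\fb{e(i,h)}{h}$ is a \emph{single} function in $\SRF{i}{i-1}$ on the subinterval $\interval{i}(e)$, i.e.\ while the relevant peaks stay combinatorially fixed; over all of $\interval{i-1}$ each such curve is merely \emph{piecewise} rational, with pieces delimited by peak-change events determined by the position of guard $i-1$. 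Consequently, you cannot apply the Davenport--Schinzel bound of $\lambda_{2i-1}(n)$ to the envelope over $\interval{i-1}$, nor compute that envelope by merging $n$ single rational functions: the formula of each piece, extended outside its validity interval, does not describe the true function, so the envelope you build is simply wrong there. This also makes your ``collect all candidates, then one binary search'' scheme circular: the envelope breakpoints you want to sort only exist once the envelope has been computed on a domain where the individual functions are honest members of $\SRF{i}{i-1}$, but finding that domain is itself part of the task.

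The paper resolves this with an extra, preliminary binary-search phase that your proposal is missing. It first computes, for each edge $e$, the point set $P_e$ of intersections of $e$ with rays along edges of $\mathsf{S}_n$ (by a traversal of $\mathsf{S}_n$), and from it the event set $H$ of heights $h\in\interval{i-1}$ at which the endpoint of the portion of $e$ visible from $\fFuncI{i-1}$ crosses a point of $P_e$; these are exactly the heights where the functional form of $\fb{e(i,h)}{h}$ changes. A binary search over $H$, with each comparison resolved in $O(n)$ time by Lemma~\ref{lem:ATC.comparison_time}, yields an interval $I\subseteq\interval{i-1}$ containing $h^*$ on which every $\fb{e}{h}$ is a single function in $\SRF{i}{i-1}$. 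Only then is the envelope computed, restricted to $I$; a second binary search over its $O(\lambda_{2i-1}(n))$ breakpoints pins down $I'$ where $\fFuncI{i}$ and $\gFuncI{i}$ are single rational functions; and a final search, as in Lemmas~\ref{lem:ATC.second_binary_search} and~\ref{lem:ATC_even.first_interval}, fixes $\full{i}$, $\partialLeft{i}$, $\partialRight{i}$ to obtain $\interval{i}$. Your second and third phases roughly correspond to these later steps (and your degree and intersection-count estimates are right), but without the first restriction phase the envelope computation at the heart of your argument is unjustified. Incidentally, the obstacle you flag at the end --- monotonicity of the tested predicate --- is not the real issue: coverage by $k$ guards is monotone in $h$ by Lemma~\ref{lem:guard.monotone}; what needs care is the domain restriction described above.
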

\begin{proof}
We first compute the $y$-interval $I$ such that $\interval{i} \subseteq I \subseteq \interval{i-1}$, and $\fb{e}{h}$ and $\gb{e}{h}$ for $h\in I$ are in $\SRF{i}{i-1}$ for any edge $e$ of $T$ with $e\not\in U_{i-1}(h_{i-1})$. 

By Lemma~\ref{lem:ATC_even.rational_function_degree}, for any edge $e$ of $T$ with $e\not\in U_{i-1}(h_{i-1})$, $\fb{e}{h}$ for $h\in \interval{i}(e)$ represents a function in $\SRF{i}{i-1}$.
We compute $I_i$ which is the intersection of all $I_i(e)$'s.

For an edge $e$, let $p_e(vv')$ be defined as follows: if the ray emanating from $v$ toward $v'$ 
intersects $T$ at a point $q \in e$ for the first time, $p_e(vv') = q$. If not, $p_e(vv') = \varnothing$.
Let $P_e$ be the set of $p_e(vv')$'s for all edges $vv'$ (with $x(v) < x(v')$) of $\mathsf{S}_n$.

Let $H$ be the set of $h \in \interval{i-1}$ such that the ray emanating from $\fFuncI{i-1}$ and passing through $\peak{e}{\fFuncI{i-1}}$ intersects $e$ at a point in $P_e$ before hitting any other edge of $T$. 
We can compute $P_e$ in $O(n)$ time by traversing $\mathsf{S}_n$.
See Fig.~\ref{fig:shortest_path_tree_property} for an illustration.
After computing $P_e$, we can compute $H$ in $O(n)$ time by Lemma 7 in~\cite{daescu2019altitude} and sort it in $O(n\log n)$ time.

We find the interval $I$ defined by consecutive values of $H$ in $O(n\log n)$ time using binary search by Lemma~\ref{lem:ATC.comparison_time}.
We then compute $\fFuncI{i}$ restricted to $I$.
Since any pair of rational functions in $\SRF{i}{i-1}$ intersect at most $2i-1$ times and each function is fully defined in $I$,
we can compute $\fFuncI{i}$ restricted to $I$ in $O(k\lambda_{2i-1}(n)\log n)$ time and its complexity is $O(\lambda_{2i-1}(n))$~\cite{sharir1995davenport}.

Next, we compute $I'\subseteq I$ such that $h^*\in I'$ and both $\fFuncI{i}$ and $\gFuncI{i}$, when restricted to $I'$, are rational functions in $\SRF{i}{i-1}$.
Since the complexities of $\fFuncI{i}$ and $\gFuncI{i}$ are $O(\lambda_{2i-1}(n))$, we can compute $I'$ in $O(\lambda_{2i-1}(n)\log n)$ time using binary search by Lemma~\ref{lem:ATC.comparison_time}.

Finally, we compute $\interval{i}\subseteq I'$ in $O(n\log n)$ time such that $\full{i}$, $\partialLeft{i}$ and $\partialRight{i}$ remain unchanged in $\interval{i}$, 
using an argument similar to the proof of Lemmas~\ref{lem:ATC.second_binary_search} and~\ref{lem:ATC_even.first_interval}.
\end{proof}

\begin{figure}[h]
  \centering
  \includegraphics[width=0.7\textwidth]{./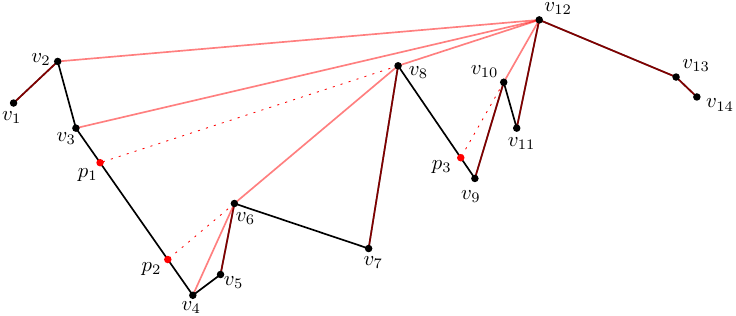}%
  \caption{The ray emanating from $v_{12}$ towards $v_8$ intersects $v_3v_4$ at $p_1$. 
  The ray emanating from $v_{8}$ towards $v_{6}$ intersects $v_3v_4$ 
  at $p_2$.
  The ray emanating from $v_{12}$ towards $v_{10}$ intersects $v_8v_9$ at
 $p_3$. Since $\pi_n(v_6)=v_8$, $x(p_1)\leq x(p_2)$.
 Since $\pi_n(\pi_n(v_6))=\pi_n(v_{10})$ and $x(\pi_n(v_6)=v_8) \leq x(v_{10})$, $x(p_2)\leq x(p_3)$.}\label{fig:shortest_path_tree_property}
\end{figure}

By Lemma~\ref{lem:ATC_even.compute_strong_visibility_each_edge}, we can compute $\interval{k/2}$ in $O(k^2\lambda_{k-1}(n)\log n)$ time.
Within the interval for each edge $vv'$ of $T$ such that $x(\g{k/2}) <  x(v) < x(v') < x(\g{k/2+1})$ and it is not fully visible from $\g{k/2}$ or $\g{k/2+1}$, we compute the minimum $y$-coordinate $h$ such that $e$ is covered by $\g{k/2}$ and $\g{k/2+1}$ together on $\altitude{h}$.
The optimal $y$-coordinate $h^*$ is then the maximum among all such values, which we can compute in $O(n)$ time using an argument similar to the proof of Theorem~\ref{thm:ATC_two}.
Therefore, we conclude this section with Theorem~\ref{thm:ATC_even}.

\begin{theorem}\label{thm:ATC_even}
For any even number $k > 2$, 
the Altitude Terrain Cover problem can be solved in $O(k^2\lambda_{k-1}(n)\log n)$ time.  
\end{theorem}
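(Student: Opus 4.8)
The plan is to assemble the lemmas proved above into one algorithm and to check that the running times telescope to the claimed bound. The argument has three parts: (i) verifying that the two-sided greedy placement is \emph{exact}, i.e. equals the optimum at every fixed height; (ii) bounding the cost of computing the nested intervals $\interval{1}\supseteq\interval{2}\supseteq\cdots\supseteq\interval{k/2}$; and (iii) pinning down $h^*$ inside $\interval{k/2}$. Since the paragraph preceding the statement already isolates these pieces, the work is to state the induction cleanly and to sum the per-iteration costs.

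First I would establish correctness. Fix $h\in\initInterval$. Lemma~\ref{lem:ATC_even.guard_location_necessary_recursive} shows that, once $\g{1},\ldots,\g{i}$ and $\g{k-i+1},\ldots,\g{k}$ are placed, any covering placement must keep $\g{i+1}$ weakly left of $\fFuncI{i}$ and $\g{k-i}$ weakly right of $\gFuncI{i}$; Lemma~\ref{lem:ATC_even.guard_location_greedy_recursive} shows that pushing these guards all the way to the extreme positions $\fFuncI{i}$ and $\gFuncI{i}$ only shrinks the still-uncovered set $\unseen{i+1}$. Combining the two by induction on $i$, I would conclude that whenever $T$ is coverable by $k$ guards on $\altitude{h}$, the greedy placement $\g{i}=\fFuncI{i}$ for $i\le k/2$ and $\g{i}=\gFuncI{k+1-i}$ for $i>k/2$ also covers $T$. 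Hence $h^*$ is precisely the least $h$ at which this greedy placement covers $T$, and it suffices to track the greedy guards as functions of $h$.

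Next I would bound the interval computation. By Lemma~\ref{lem:ATC_even.first_interval}, $\interval{1}$ is computed in $O(n\log n)$ time, and for each $i=2,\ldots,k/2$, Lemma~\ref{lem:ATC_even.compute_strong_visibility_each_edge} computes $\interval{i}$ from $\interval{i-1}$ in $O(k\lambda_{2i-1}(n)\log n)$ time. Summing over $i$ gives
\[
O(n\log n)+\sum_{i=2}^{k/2} O\!\big(k\,\lambda_{2i-1}(n)\log n\big).
\]
Because $\lambda_s(n)$ is nondecreasing in $s$ and $2i-1\le k-1$ for every $i\le k/2$, each of the $O(k)$ terms is at most $O(k\,\lambda_{k-1}(n)\log n)$, so the whole sum is $O(k^2\,\lambda_{k-1}(n)\log n)$, which absorbs the $O(n\log n)$ cost of $\interval{1}$.

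Finally I would locate $h^*$ inside $\interval{k/2}$, and this is the step I expect to carry the real technical weight. By construction the sets $\full{k/2}$, $\partialLeft{k/2}$, and $\partialRight{k/2}$ are constant on $\interval{k/2}$, so the only edges that can fail to be covered are the edges $vv'$ lying strictly between $\g{k/2}$ and $\g{k/2+1}$ that are fully visible from neither middle guard. For each such edge the combinatorial peak data $\peak{e}{\g{k/2}}$ and $\peak{e}{\g{k/2+1}}$ stay fixed throughout $\interval{k/2}$, so, exactly as in the proof of Theorem~\ref{thm:ATC_two}, the least height at which $vv'$ is jointly covered is the root of a fixed low-degree equation built from functions in $\SRF{k/2}{k/2-1}$ and is computable in $O(1)$ time; taking the maximum over all such edges yields $h^*$ in $O(n)$ time. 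The main obstacle is confirming that this last step remains $O(1)$ per edge even though $\fb{e}{h}$ and $\gb{e}{h}$ are now genuine rational functions rather than the linear maps of the $k=2$ case: one must check that fixing the peak structure on $\interval{k/2}$ indeed reduces the joint-coverage condition to a single closed-form equation of bounded degree. Once that is granted, the total time is dominated by part (ii), giving the claimed $O(k^2\,\lambda_{k-1}(n)\log n)$ bound.
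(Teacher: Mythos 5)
Your proposal is correct and follows essentially the same route as the paper: greedy placement justified by Lemmas~\ref{lem:ATC_even.guard_location_necessary_recursive} and~\ref{lem:ATC_even.guard_location_greedy_recursive}, iterated computation of $\interval{1}\supseteq\cdots\supseteq\interval{k/2}$ via Lemmas~\ref{lem:ATC_even.first_interval} and~\ref{lem:ATC_even.compute_strong_visibility_each_edge} summing to $O(k^2\lambda_{k-1}(n)\log n)$, and then a final pass over the middle edges as in Theorem~\ref{thm:ATC_two}. Your flagged concern about the per-edge root-finding for higher-degree rational functions is a point the paper itself glosses over, but even if that step costs more than $O(1)$ per edge (say polynomial in $k$), it remains dominated by the interval-computation cost, so the stated bound stands.
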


\subsection{Odd numbers of guards}
\label{sec:odd}
We consider odd numbers $k\geq 3$.
As in Section~\ref{sec:ATC_even},
we place the guards $\g{1},\ldots,\g{\lfloor k/2 \rfloor}$ at $\fFuncI{1},\ldots,\fFuncI{\lfloor k/2 \rfloor}$, 
and place $\g{k},\ldots,\g{k-\lfloor k/2 \rfloor+1}$ at $\gFuncI{1},\ldots,\gFuncI{\lfloor k/2 \rfloor}$.
Let $m = \lceil k/2 \rceil$ be the middle index.
By Lemma~\ref{lem:ATC_even.guard_location_necessary_recursive}, the optimal $y$-coordinate $h^*$
is the minimum value $h$ such that $x(\fFuncI{m}) = x(\gFuncI{m})$, and we can compute it efficiently using quasiconvex programming~\cite{amenta1999optimal}.

\begin{theorem}\label{thm:ATC_odd}
For any odd number $k\geq 3$, the Altitude Terrain Cover problem can be solved in $O(k^2\lambda_{k-2}(n)\log n)$ time.  
\end{theorem}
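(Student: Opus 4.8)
The plan is to reuse the machinery of Section~\ref{sec:ATC_even} to fix the $k-1$ outer guards and shrink the search for $h^*$ to a narrow interval, and then to determine $h^*$ inside that interval by a two-variable optimization for the single middle guard $\g{m}$ with $m=\lceil k/2\rceil$. The two differences from the even case are that $\lfloor k/2\rfloor=(k-1)/2$ guards are placed from each side (one fewer greedy round), and that the final step must place one guard rather than settle two middle guards edge by edge.

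First I would compute the nested intervals $\interval{1}\supseteq\cdots\supseteq\interval{\lfloor k/2\rfloor}$ exactly as in the even case: $\interval{1}$ by Lemma~\ref{lem:ATC_even.first_interval}, and each $\interval{i}$ from $\interval{i-1}$ by Lemma~\ref{lem:ATC_even.compute_strong_visibility_each_edge}. This fixes $\g{i}=\fFuncI{i}$ for $i\le\lfloor k/2\rfloor$ and $\g{k-i+1}=\gFuncI{i}$ for $i\le\lfloor k/2\rfloor$, and leaves $I^*:=\interval{\lfloor k/2\rfloor}\ni h^*$ on which the visibility bounds $\fb{e(m,h)}{h}$ and $\gb{e(m,h)}{h}$ for the middle guard are rational functions of degree $O(k)$ lying in $\SRF{m}{m-1}$ by Lemma~\ref{lem:ATC_even.rational_function_degree}. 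The cost of this phase is $\sum_{i=1}^{\lfloor k/2\rfloor}O\big(k\,\lambda_{2i-1}(n)\log n\big)$; since the largest index is $\lfloor k/2\rfloor=(k-1)/2$ the dominant Davenport--Schinzel parameter is $2i-1=k-2$, so the phase runs in $O\big(k^2\lambda_{k-2}(n)\log n\big)$ time.

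Next I would pin down $h^*$ inside $I^*$. By Lemma~\ref{lem:ATC_even.guard_location_necessary_recursive}, once the outer guards are fixed the remaining terrain $\unseen{m}$ is coverable by the one guard $\g{m}$ only if $x(\gFuncI{m})\le x(\g{m})\le x(\fFuncI{m})$, and by the argument of Lemma~\ref{lem:ATC_even.guard_location_greedy_recursive} any position in this range is optimal; hence altitude $h$ is feasible exactly when $x(\gFuncI{m})\le x(\fFuncI{m})$. Raising $h$ only enlarges each guard's visible set, so feasibility is monotone in $h$ and $h^*$ is the least $h\in I^*$ with $x(\fFuncI{m})=x(\gFuncI{m})$. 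I would express this as a quasiconvex program in the two variables $(x(\g{m}),h)$: each remaining edge $e$ of $\unseen{m}$ contributes the pair of constraints $x(\g{m})\le x(\fb{e(m,h)}{h})$ and $x(\g{m})\ge x(\gb{e(m,h)}{h})$ whose boundaries are the degree-$O(k)$ rational curves above, with objective $h$. Since the combinatorial dimension of this program is a constant, the framework of Amenta et al.~\cite{amenta1999optimal} returns the optimum using $O(n)$ primitive tests, each resolving $O(1)$ constraints by intersecting rational curves of degree $O(k)$ in $O(k)$ time, for $O(kn)$ time in total.

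Combining the two phases, the interval computation dominates and the total running time is $O\big(k^2\lambda_{k-2}(n)\log n\big)$, as claimed. The step I expect to be the main obstacle is the quasiconvex-programming reduction: I must verify that, restricted to $I^*$, the region $\{(x,h):x(\gb{e(m,h)}{h})\le x\le x(\fb{e(m,h)}{h})\}$ of each edge is genuinely quasiconvex --- accounting for the breakpoints at which the blocking vertex of the middle guard changes --- so that the problem has constant combinatorial dimension and the LP-type machinery applies, and that each primitive operation can indeed be charged $O(k)$ time owing to the $O(k)$ degree of the rational functions in $\SRF{m}{m-1}$.
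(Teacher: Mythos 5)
Your proposal is correct and follows essentially the same route as the paper: compute $\interval{\lfloor k/2\rfloor}$ via Lemma~\ref{lem:ATC_even.compute_strong_visibility_each_edge} in $O(k^2\lambda_{k-2}(n)\log n)$ time (with the same observation that the largest Davenport--Schinzel parameter is $2\lfloor k/2\rfloor-1=k-2$), then characterize $h^*$ as the least $h$ with $x(\fFuncI{m})=x(\gFuncI{m})$ and find it by quasiconvex programming~\cite{amenta1999optimal} over the degree-$O(k)$ rational constraint curves in $O(kn)$ time. Your explicit two-variable formulation and the flagged quasiconvexity verification are finer-grained than the paper's (terse) proof, but they are elaborations of the identical argument, not a different one.
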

\begin{proof}
We first compute $\interval{\lfloor k/2 \rfloor}$ in $O(k^2\lambda_{k-2}(n)\log n)$ time by Lemma~\ref{lem:ATC_even.compute_strong_visibility_each_edge}.
Because $\fFuncI{m}$ (or $\gFuncI{m}$) can be represented as the upper envelope of monotone increasing (or decreasing) rational functions,
we compute the lowest vertex of the upper envelopes using quasiconvex programming~\cite{amenta1999optimal}. The $y$-coordinate of the vertex is the optimal value $h^*$. Such a lowest vertex can be found in $O(kn)$ time after an $O(k^2\lambda_{k-2}(n)\log n)$-time preprocessing.
\end{proof}

\section{Bijective altitude terrain cover}\label{sec:BATC_main}

In this section, we present algorithms for 
the bijective altitude terrain cover problem.
In Section~\ref{sec:BATC_(1)_main}, we present 
an $O(n)$-time
algorithm for case (1). 
In Section~\ref{sec:BATC_(2)_main}, we present an $O(kn)$-time algorithm for case (2). 

The following lemma shows that, for $k = 1$, 
there is an optimal placement of the guard $\g{}$
in the guard minimization version of the altitude terrain cover problem 
on the $x$-interval defined by the input terrain.
Let $\mathcal{E} \coloneq \bigcap_{e \subset T}e^+$ be the intersection of the (closed) upper half-planes defined by the edges of $T$.

\begin{lemma}\label{lem:BATC.lowest_guard_location_main}
For any $h\in\initInterval$, 
if $T$ is covered by a single guard $u$ on $\altitude{h}$, then there is a point $u'$ on $\altitude{h}$ with $x(v_1) \leq x(u') \leq x(v_n)$ that covers $T$.
\end{lemma}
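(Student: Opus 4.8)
The plan is to reduce the statement to a one-dimensional claim about the horizontal cross section of the covering region. Recall from the $k=1$ discussion that a single point covers $T$ if and only if it lies in $\mathcal{E}=\bigcap_{e\subset T}e^+$, which is convex as an intersection of half-planes. Hence the set of points on $\altitude{h}$ that cover $T$ is exactly $\mathcal{E}\cap\altitude{h}$, a (possibly degenerate) closed interval $[x_L,x_R]$ in the $x$-coordinate; it is nonempty because the given guard $u$ lies in it. It therefore suffices to prove that $[x_L,x_R]$ meets $[x(v_1),x(v_n)]$, that is, that $x(v_1)\le x_R$ and $x_L\le x(v_n)$.

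First I would classify the edge constraints by slope. Writing each edge line as $y=m_e x+b_e$ (non-vertical, since $T$ is $x$-monotone), a point $(x,h)$ lies in $e^+$ iff $m_e x\le h-b_e$. Thus positive-slope edges impose upper bounds on $x$ and determine $x_R$, negative-slope edges impose lower bounds and determine $x_L$, while horizontal edges impose the $x$-independent condition $h\ge b_e$, which holds since $h\ge\ymax$. Concretely, $x_R=\min_e (h-b_e)/m_e$ over positive-slope edges and $x_L=\max_e (h-b_e)/m_e$ over negative-slope edges.

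The core step is the two one-sided bounds, established geometrically to sidestep the sign bookkeeping of dividing by $m_e$. For $x(v_1)\le x_R$, fix a positive-slope edge $e=v_iv_{i+1}$. Since $v_i$ lies on $\overline{e}$ and $h\ge\ymax\ge y(v_i)$, the point $(x(v_i),h)$ lies in $e^+$; because $m_e>0$ and $x(v_1)\le x(v_i)$, moving left to $x(v_1)$ only increases the slack $h-m_e x-b_e$, so $(x(v_1),h)\in e^+$ as well, giving $x(v_1)\le (h-b_e)/m_e$. Taking the minimum over all positive-slope edges yields $x(v_1)\le x_R$. The bound $x_L\le x(v_n)$ is symmetric: for a negative-slope edge $e=v_iv_{i+1}$, the point $(x(v_{i+1}),h)$ lies in $e^+$, and since $m_e<0$ and $x(v_{i+1})\le x(v_n)$, moving right to $x(v_n)$ keeps the point in $e^+$.

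Combining the three facts, the nonempty intervals $[x_L,x_R]$ and $[x(v_1),x(v_n)]$ satisfy $x_L\le x(v_n)$ and $x(v_1)\le x_R$, so they overlap; any $u'$ in their intersection covers $T$ and satisfies $x(v_1)\le x(u')\le x(v_n)$, as required. I expect the only delicate point to be the slope case analysis, namely that the direction of the half-plane bound flips with the sign of $m_e$; the geometric monotonicity argument above handles this cleanly by never dividing by $m_e$ explicitly.
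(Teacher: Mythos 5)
Your proof is correct, and after the shared opening move it follows a genuinely different route from the paper's. Both arguments begin by identifying the covering points with $\mathcal{E}=\bigcap_{e\subset T}e^+$, so that the covering points on $\altitude{h}$ form the nonempty set $\mathcal{E}\cap\altitude{h}$. The paper then argues through the lowest vertex $w$ of $\mathcal{E}$: it asserts $x(v_1)\le x(w)\le x(v_n)$ (``because $\mathcal{E}$ is defined by the edges of $T$''), and takes $u'$ to be the point of $\altitude{h}$ vertically above $w$, which remains in $\mathcal{E}$ since $\mathcal{E}$ is an intersection of upper half-planes. You never invoke $w$: you bound the cross-section $[x_L,x_R]$ directly, proving $x(v_1)\le x_R$ and $x_L\le x(v_n)$ by the per-edge slope-sign argument, and finish by interval overlap (two nonempty intervals $[a,b]$ and $[c,d]$ meet iff $a\le d$ and $c\le b$, which is precisely the pair of inequalities you establish). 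Your version is more self-contained: the slope-sign argument is, in effect, the justification the paper omits for its claim about $x(w)$, and your proof does not need the lowest vertex of $\mathcal{E}$ to exist, so it also covers degenerate terrains whose edges all have slopes of one sign (there $\mathcal{E}$ is unbounded below, has no lowest vertex, and the paper's proof does not literally apply, whereas in your argument $x_L$ or $x_R$ simply becomes infinite and the overlap conclusion still holds). What the paper's choice buys is brevity and continuity with the rest of Section~4, where the lowest vertex (the quantity $w(i,j)$) is exactly the object its algorithms compute and reuse.
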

\begin{proof}
A point covers $T$ if and only if it is contained in $\mathcal{E}$.
Suppose there is such a guard $u$ as in the lemma statement.
Then $\mathcal{E} \cap \altitude{h} \ne \varnothing$.
For the lowest vertex $w$ of $\mathcal{E}$, we have
$y(w) \leq y(u)$ and $x(v_1) \leq x(w) \leq x(v_n)$, because $\mathcal{E}$ 
is defined by the edges of $T$.
The horizontal projection $u'$ of $w$ on $\altitude{h}$, which is also contained in $\mathcal{E}$ because $\mathcal{E}$ is convex, covers $T$ by Lemma~\ref{lem:guard.monotone}.
\end{proof}

For any two points $p,q \in T$ with $x(p)\le x(q)$, 
we use $\subT{p}{q}$ to denote the subchain of $T$ from $p$ to $q$.
Let $(\g{1},T_1),\ldots,(\g{k},T_{k})$ be an optimal solution (\emph{guard-subchain pairs}) of the bijective altitude terrain cover problem.
By the above lemma,
each $\g{i}$ on $L(h)$ for $i = \{1, \dots, k\}$ can be placed such that $x(a_i) \leq \g{i} \leq x(b_i)$
for $T_i=T(a_i,b_i)$.

Moreover, there always is an optimal solution corresponding to an \emph{edge partition}: every endpoint of the $k$ subchains is a vertex of $T$.
This is verified in the following lemma.

\begin{lemma}\label{lem:BATC.edge_fully_covered_main}
For any optimal guard-subchain pair
$(\g{1},T_1),\ldots,(\g{k},T_{k})$, 
every endpoint of $T_i$, for each $i = 1, \dots, k$, is a vertex of $T$. 
\end{lemma}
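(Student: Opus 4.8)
The plan is to argue by contradiction, ruling out any optimal solution in which some subchain endpoint lies strictly inside an edge. Writing $T_i=\subT{a_i}{b_i}$, the left endpoint of $T_1$ is $v_1$ and the right endpoint of $T_k$ is $v_n$, both vertices, so it suffices to treat the $k-1$ internal endpoints $p_i\coloneq b_i=a_{i+1}$ shared by consecutive subchains. Assume, toward a contradiction, that in some optimal solution a point $p\coloneq p_i$ lies in the relative interior of an edge $e=v_jv_{j+1}$. For a subchain $S$ I abbreviate by $\mu(S)$ the $y$-coordinate of the lowest point of $\bigcap_{e'\subset S}{e'}^+$; using the single-guard characterization together with Lemma~\ref{lem:BATC.lowest_guard_location_main} (which lets me place each guard above its own subchain, so edges outside the subchain cannot block it), $\g{i}$ covers $T_i$ on $\altitude{h}$ exactly when $h\ge\mu(T_i)$, and optimality means $h^*=\max\{\ymax,\ \max_\ell\mu(T_\ell)\}$.

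The first step is to record the constraint that $e$ contributes on each side of $p$. Both portions $v_jp\subset T_i$ and $pv_{j+1}\subset T_{i+1}$ lie on the line $\overline e$, so each contributes \emph{exactly} the same half-plane $e^+$; hence $\bigcap_{e'\subset T_i}{e'}^+=\big(\bigcap_{e'\subset\subT{a_i}{v_j}}{e'}^+\big)\cap e^+$ and $\bigcap_{e'\subset T_{i+1}}{e'}^+=e^+\cap\big(\bigcap_{e'\subset\subT{v_{j+1}}{b_{i+1}}}{e'}^+\big)$. I would then \emph{snap} $p$ to the vertex $v_j$, replacing $T_i,T_{i+1}$ by $\subT{a_i}{v_j}$ and $\subT{v_j}{b_{i+1}}$: the region for the right subchain is unchanged, while the left region merely loses the constraint $e^+$. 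Thus $\mu(\subT{a_i}{v_j})\le\mu(T_i)$ and $\mu(\subT{v_j}{b_{i+1}})=\mu(T_{i+1})$, so the snapped partition has height at most $h^*$.

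The main obstacle is to upgrade this \emph{non-increase} into a \emph{strict} decrease, which is exactly what a contradiction with optimality demands; a mere non-increase only produces an alternative optimum, i.e.\ it proves the weaker existential statement rather than the universal one. Strictness would follow if an interior split forced $e^+$ to be a \emph{binding} supporting half-plane of the lowest vertex of the region attaining $h^*$, so that dropping it strictly lowers $\mu$. The genuine difficulty is that this fails whenever $e^+$ is redundant for the relevant subchain — e.g.\ when $\bigcap_{e'\subset\subT{a_i}{v_j}}{e'}^+\subseteq e^+$, or when the subchain realizing $h^*$ is not even adjacent to $p$ — in which case snapping does not change the objective and the interior split is itself optimal. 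To settle the statement as worded I would therefore strengthen the hypothesis with a tie-breaking rule that selects, among all optimal solutions, one splitting the fewest edges (equivalently, a non-degeneracy assumption on $T$): under such a rule any interior split with redundant $e^+$ can be snapped to $v_j$ at unchanged height while strictly reducing the number of split edges, contradicting minimality, which pins every endpoint at a vertex and yields precisely the canonical optimal edge-partition used by the algorithm in the sequel.
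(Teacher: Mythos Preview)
Your snapping argument is correct and establishes the \emph{existential} form of the claim --- that some optimal solution has all subchain endpoints at vertices --- which is exactly what the surrounding text needs (the sentence introducing the lemma says ``there always is an optimal solution corresponding to an edge partition''). Your diagnosis that the \emph{universal} wording does not follow without a tie-breaking convention is also right: when $e^+$ is redundant for the subchain attaining the maximum, an interior split is itself optimal and no contradiction appears. The paper's own proof shares this feature --- its first case (``we are done by updating $T_{i-1}$ and $T_i$'') produces an alternative optimum at the same height, not a contradiction --- so the lemma statement and its proof are mismatched in the same way you identify.

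Where you genuinely diverge from the paper is in the mechanism justifying the snap. You work with the half-plane regions $\bigcap_{e'\subset S}{e'}^+$ and their lowest heights $\mu(S)$, observing that the partial segments $v_jp$ and $pv_{j+1}$ impose the \emph{identical} constraint $e^+$; snapping $p$ to $v_j$ therefore leaves the right subchain's region unchanged and only relaxes the left one, so the objective cannot increase. The paper instead argues geometrically: it relocates $\g{i}$ via Lemma~\ref{lem:BATC.lowest_guard_location_main} so that its $x$-coordinate lies over $T_i$, and then shows that if some $p\in v_jw$ were blocked from this relocated guard by a vertex $v$ (necessarily with $x(v)>x(v_{j+1})$, there being no vertex inside the edge), the same $v$ would block $w\in T_i$, contradicting coverage of $T_i$. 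So the paper rules out the ``cannot extend'' case by a visibility argument, while you sidestep it algebraically via the half-plane description. Your route is shorter and makes the monotonicity of the objective under snapping transparent; the paper's route stays closer to raw visibility and directly exhibits the guard after the snap.
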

\begin{proof}
Suppose not, and let $T_i$ for $1 < i \leq k$ be a subchain of $T$ 
with left endpoint $w$, which is not a vertex of $T$.
There is no loss of generality.
Let $v_jv_{j+1}$ be the edge of $T$ on which $w$ lies.
By Lemma~\ref{lem:BATC.lowest_guard_location_main}, we can move $\g{i}$ to a point $u'_i$ on the altitude line such that $u'_i$ lies on the $x$-interval defined by $T_i$, and $u'_i$ covers $T_i$. If $u'_i$ covers $v_jw$, then we are done by updating $T_{i-1}$ to $T_{i-1} \setminus v_iw$ and $T_i$ to $T_i \cup v_iw$.
Otherwise, there is $p \in v_{j}w$ not visible from $u'_i$, and there is a vertex $v$ of $T$ such that 
$x(v_{j+1}) < x(v) < x(u'_i)$ and $p$ lies below $\overline{vu'_i}$.
Then any point between 
$T(p, w) \subset T_i$
is not visible from $u'_i$, 
a contradiction.
\end{proof}

\subsection{Guard Optimization}\label{sec:BATC_(1)_main}

Similar to the algorithm by Daescu et al.~\cite{daescu2019altitude} for the altitude terrain guarding problem,
we greedily place guards on the given line $\altitude{h}$ from left to right.

Let $\IHP{i}{j} \coloneq \bigcap_{e \subset \subT{i}{j}}e^+$ be the intersection of the upper half-planes defined by the consecutive edges $v_iv_{i+1}, \ldots, v_{j-1}v_j$ of $T$.
For the leftmost guard, 
we find the largest index $j_1$ such that 
$\IHP{1}{j_1} \cap \altitude{h} \ne \varnothing$. 
By Lemma~\ref{lem:BATC.edge_fully_covered_main}, 
we can optimally place $\g{1}$ on any point of $\IHP{1}{j_1} \cap \altitude{h}$ to cover $\subT{1}{j_1}$.
Likewise, for every $i = 2, \ldots, k$, we 
find the largest index $j_i$ such that 
$\IHP{j_{i-1}}{j_i} \cap \altitude{h} \ne \varnothing$, and place $u_i$ on any point of it. We repeat this process until the whole terrain $T$ is covered by the guards.

By Lemmas~\ref{lem:guard.monotone} and~\ref{lem:BATC.lowest_guard_location_main}, 
instead of computing each $\IHP{j_{i-1}}{j_i}$, we compute its lowest vertex, which we denote by $w(j_{i-1},j_i)$.
For efficiency,  
our algorithm handles edges of positive slopes and edges of negative slopes separately while computing such lowest vertices. With edges of positive slope, it incrementally constructs the intersection $\mathcal{E}'$ of the corresponding half-planes. 
On the other hand, it handles edges of negative slopes one by one in order: compute the intersection of the edge and $\mathcal{E}'$, check if the lowest vertex changes, and update $\mathcal{E}'$ accordingly if so.

The following lemma shows 
how to efficiently compute such $w(j_{i-1},j_i)$'s as mentioned above.
Let $\IPHP{i}{j}$ and $\INHP{i}{j}$ be the intersection of $e^+$'s for all edges $e$ of $T$ with positive and negative slopes, respectively.

\begin{lemma}\label{lem:BATC.lowest_verex_all_indices_main}
For any indices $i, j$ with $i<j$, we can compute $w(i,\ell)$'s for all $(i+1) \leq \ell \leq j$ in $O(j-i)$ time.
\end{lemma}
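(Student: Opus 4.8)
The plan is to compute $w(i,\ell)$ incrementally as $\ell$ grows from $i+1$ to $j$, exploiting the separation of edges by sign of slope suggested in the paragraph preceding the lemma. The key observation is that $\IHP{i}{\ell}$ is the intersection of the upper half-planes $e^+$ over the edges $e$ of $\subT{i}{\ell}$, and passing from $\ell$ to $\ell+1$ adds exactly one half-plane, namely $(v_\ell v_{\ell+1})^+$. So $\IHP{i}{\ell+1} = \IHP{i}{\ell} \cap (v_\ell v_{\ell+1})^+$, and the lowest vertex $w(i,\ell+1)$ differs from $w(i,\ell)$ only if the newly added half-plane cuts below the current lowest vertex. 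The goal is to show that the total work across all $\ell$ is $O(j-i)$, i.e. $O(1)$ amortized per step.

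First I would set up the incremental structure. Following the sketch in the text, I maintain $\mathcal{E}' = \IPHP{i}{\ell}$, the intersection of the positive-slope half-planes seen so far, as an explicit boundary. A half-plane $e^+$ with $e$ of positive slope constrains $\mathcal{E}'$ from the \emph{lower-left}: its boundary is an increasing line and the feasible side is above it, so these half-planes together form a lower-left staircase whose upper envelope (the boundary of their intersection) is a convex chain. Crucially, because the edges are processed in $x$-order along the terrain and $T$ is $x$-monotone, the slopes of successively added positive-slope edges can be shown to vary monotonically (this is where I would invoke the shortest-path-tree / $x$-monotonicity structure already used in Lemmas~\ref{lem:ATC.vertex_visibility_SPT} and~\ref{lem:ATC.edge_strong_visibility_case}), so each new positive-slope half-plane is inserted at the end of the convex chain and the standard convex-hull-style amortization gives $O(1)$ amortized insertion cost.

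Next I would handle the negative-slope edges. As the text indicates, these are processed one at a time: each negative-slope half-plane $e^+$ has a decreasing boundary line and constrains the region from the \emph{lower-right}. For each such edge I intersect its bounding line with the current $\mathcal{E}'$, determine whether the lowest vertex of $\IHP{i}{\ell}$ moves, and update $w(i,\ell)$ accordingly. The lowest vertex of the intersection $\IHP{i}{\ell} = \mathcal{E}' \cap (\text{negative-slope half-planes})$ is the lowest point feasible to all constraints; I argue that because the positive-slope constraints form a convex increasing chain and the negative-slope constraints are decreasing, the lowest feasible vertex is always the intersection of one positive-slope boundary with one negative-slope boundary (or an endpoint of the terrain's $x$-range), and that as $\ell$ increases this vertex moves monotonically, again yielding $O(1)$ amortized work per added edge. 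Summing the positive-slope insertions and the negative-slope updates over all $\ell$ from $i+1$ to $j$ gives the claimed $O(j-i)$ bound.

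The main obstacle I expect is the amortization argument for mixing the two slope classes, specifically proving that the lowest vertex $w(i,\ell)$ moves monotonically (so that work is not repeated) even though positive- and negative-slope edges may interleave arbitrarily along $\subT{i}{j}$. I would overcome this by arguing structurally: the positive-slope constraints can be accumulated independently into the convex chain $\mathcal{E}'$ with classical hull-maintenance amortization, while each negative-slope edge is charged once when it is added and once when it later stops being the binding lower-right constraint; because the lowest vertex's $x$-coordinate is monotone in $\ell$ (a consequence of only ever adding constraints, never removing them, to a convex region), no constraint is charged more than a constant number of times. Verifying this monotonicity rigorously, rather than the routine geometry of intersecting two lines, is where the real content of the proof lies.
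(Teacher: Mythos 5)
Your high-level plan --- insert one half-plane per step, keep the positive-slope intersection as an explicit convex chain, intersect each negative-slope line with it, and argue $O(1)$ amortized work per edge --- is the same framework as the paper's proof. But the two geometric claims your amortization rests on are both false. First, the slopes of the positive-slope edges of an $x$-monotone terrain, read left to right, are \emph{not} monotone: a terrain can alternate steep and shallow ascents (slopes $10,\,0.5,\,10,\dots$), and neither $x$-monotonicity nor the shortest-path-tree structure of Lemmas~\ref{lem:ATC.vertex_visibility_SPT} and~\ref{lem:ATC.edge_strong_visibility_case} rules this out. So a newly added positive-slope line need not attach at the end of the convex chain bounding $\mathcal{E}'$; it can cut into the middle of it, and the Graham-scan-style constant-amortized insertion collapses. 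Second, the lowest vertex does \emph{not} move monotonically in $x$ just because constraints are only ever added. Take $v_1=(0,10)$, $v_2=(1,0)$, $v_3=(2,10)$, $v_4=(3,10.5)$. Then $w(1,3)=(1,0)$, the valley bottom; but $\overline{v_3v_4}$ is the line $y=0.5x+9$, which passes above $(1,0)$, so $w(1,4)$ is the intersection of $y=10-10x$ with $y=0.5x+9$, namely $(2/21,\,190/21)$, strictly to the \emph{left} of $w(1,3)$. Adding a positive-slope half-plane can push the lowest vertex up and to the left along the envelope of the negative-slope lines, so your charging scheme (``scans only move rightward, hence each constraint is charged $O(1)$ times'') has no foundation. (Incidentally, your orientations are swapped --- a positive-slope upper half-plane excludes the lower-\emph{right}, a negative-slope one the lower-\emph{left} --- but that is cosmetic compared to the two points above.)

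It is worth seeing how the paper's proof deals with exactly this crux, because it does something structurally different from what you attempt: it never proves any monotone motion of the true lowest vertex. It maintains only $R(i,\ell)$, the portion of $\IPHP{i}{\ell}$ to the \emph{right} of the current $w(i,\ell)$, and after every update it trims $R$ (by the new line, or by a vertical line through the new $w$), so every boundary piece a left-to-right scan passes over is deleted and never examined again; the linear bound is pure bookkeeping (each edge enters and leaves $R$ once), not a geometric monotonicity statement. For positive-slope edges the paper simply asserts $w(i,\ell+1)=w(i,\ell)$, citing $x(w(i,\ell))\le x(v_\ell)$ from Lemma~\ref{lem:BATC.lowest_guard_location_main}; note that the example above shows this assertion is itself the step that demands a genuine argument (the valley bottom $(1,0)$ satisfies $x\le x(v_3)$ and is nevertheless cut off by $\overline{v_3v_4}$). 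Any complete proof must either show such cut-offs are harmless --- for instance, when a positive-slope line cuts off $w(i,\ell)$, the new lowest vertex lies on that line left of $v_\ell$ and hence strictly below $\ymax$, while guards live on $\altitude{h}$ with $h\ge\ymax$ --- or must restructure the invariant so the maintained point provably coincides with the lowest vertex whenever it matters. Your proposal neither proves the monotonicity it relies on (it is false) nor provides such a masking argument, so the gap is genuine.
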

\begin{proof}
Let $R(i,\ell) \coloneq \{p \mid p \in \IPHP{i}{\ell}, x(p)\geq x(w(i,\ell))\}$.
Let $e = v_{\ell}v_{\ell+1}$ be the edge at hand.
If $e$ has a positive slope, 
we have $w(i,\ell+1)=w(i,\ell)$ because $x(w(i,\ell)) \leq x(v_{\ell})$ by Lemma~\ref{lem:BATC.lowest_guard_location_main}.
Though, it may not be that $R(i,\ell+1) \ne R(i,\ell)$. This happens when $\overline{e}$ intersects $R(i,\ell)$. We can determine whether there is such an intersection 
by comparing the slopes of $e$ and the rightmost edge of $R(i,\ell-1)$.
If so, we can update $R(i,\ell)$ into $R(i, \ell+1)$ by trimming it with $\overline{e}$.

When $e$ has a negative slope, there are two cases. If $w(i,\ell)$ lies on or above $\overline{e}$,
we have $w(i,\ell+1) = w(i,\ell)$.
See Fig.~\ref{fig:contiguous_variant_algorithm_1}(a).
Otherwise, $w(i,\ell+1) = \overline{e} \cap R(i,\ell)$.
We find it by scanning the boundary of $R(i,\ell)$ from left to right.
After finding $w(i,\ell+1)$, we update $R(i, \ell)$ into $R(i, \ell+1)$ by trimming it 
with a vertical line $x = x\big(w(i,\ell)\big)$.
See Fig.~\ref{fig:contiguous_variant_algorithm_1}(b).

The time needed to handle all edges is linear in the number of edge insertions and deletions, while we maintain 
$R(i,\ell)$ for each $(i+1) \leq \ell \leq j$, which is $O(j-i)$.
\end{proof}

\begin{figure}[h]
  \centering
  \includegraphics[width=0.75\textwidth]{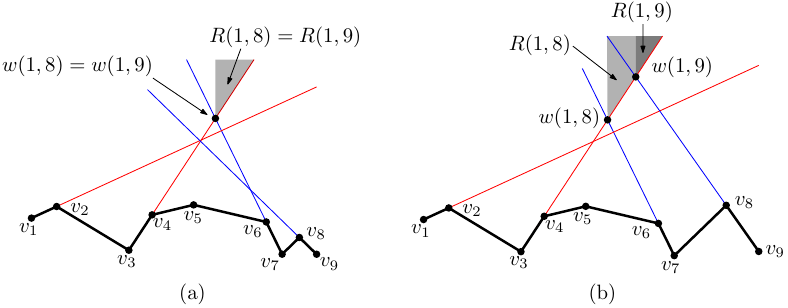}%
  \caption{
  (a) $w(1,8)=w(1,9)$ because $v_8v_9$ has a negative slope and $w(1,8)$ lies above $\overline{v_8v_9}$.
  (b) $w(1,9)=\partial R(1,8)\cap \overline{v_8v_9}$ because $v_8v_9$ has a negative slope and $w(1,8)$ lies below $\overline{v_8v_9}$, where $\partial R(1,8)$ is the boundary of $R(1,8)$.}\label{fig:contiguous_variant_algorithm_1}
\end{figure}

Recall that our algorithm finds, for each guard $u_i$, 
the longest subchain (corresponding to the index $j_i$) starting from the leftmost edge of the remaining parts of $T$.
By Lemma~\ref{lem:BATC.lowest_verex_all_indices_main}, we can find $j_1$ in $O(j_1)$ time, and each $j_i$ in $O(j_i-j_{i-1})$ time.
Thus, we can find an optimal set of guard-subchain pairs
in $O(n)$ time.

\begin{theorem}\label{thm:BATC_(1)_main}
Given a terrain with $n$ vertices and a horizontal line $L$, we can place the minimum number of guards on $L$ for case (1) of the Bijective Altitude Terrain Cover problem in $O(n)$ time.
\end{theorem}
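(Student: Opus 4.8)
The plan is to recast case (1) as a one-dimensional covering problem on the edge sequence of $T$ and solve it by a left-to-right greedy driven by Lemma~\ref{lem:BATC.lowest_verex_all_indices_main}. Write $h=y(L)$ so that $L=\altitude{h}$. By Lemma~\ref{lem:BATC.edge_fully_covered_main} there is an optimal solution in which every subchain is $\subT{i}{j}$ for vertices $v_i,v_j$, so it suffices to split the edge sequence $v_1v_2,\ldots,v_{n-1}v_n$ into the fewest contiguous blocks, each coverable by one guard on $L$. A block $\subT{i}{j}$ is coverable by a single guard on $L$ exactly when $\IHP{i}{j}\cap L\ne\varnothing$; since $\IHP{i}{j}$ is convex and unbounded upward, this holds precisely when its lowest vertex $w(i,j)$ satisfies $y(w(i,j))\le h$ (a single edge is always coverable, because an upper half-plane always meets $L$, so the greedy always advances). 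Lemma~\ref{lem:BATC.lowest_guard_location_main} guarantees that whenever a block is coverable the guard can be placed on $L$ within the $x$-range of $\subT{i}{j}$, so every such block yields a legal guard-subchain pair.

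The first key step is a monotonicity (prefix-closure) property: if $\subT{i}{j}$ is coverable then so is $\subT{i}{j'}$ for every $i\le j'\le j$, because dropping edges only enlarges the half-plane intersection, $\IHP{i}{j}\subseteq\IHP{i}{j'}$, whence $y(w(i,j'))\le y(w(i,j))\le h$. With this I would establish optimality of the greedy rule ``from the current start index $j_{i-1}$ (with $j_0=1$), extend the block to the largest $j_i$ with $y(w(j_{i-1},j_i))\le h$'' by a standard exchange argument: in any optimal partition the first block ends no later than the greedy's first block by maximality, so replacing it with the greedy block only pushes later boundaries rightward onto an easier suffix without increasing the block count; iterating shows the greedy uses no more guards than the optimum. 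Correctness then follows, with valid guard placements supplied by Lemmas~\ref{lem:BATC.edge_fully_covered_main} and~\ref{lem:BATC.lowest_guard_location_main}.

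For the running time I would invoke Lemma~\ref{lem:BATC.lowest_verex_all_indices_main} once per guard. To place $\g{i}$, restart the incremental construction at index $j_{i-1}$ and feed in edges one at a time, maintaining $w(j_{i-1},\ell)$; since $y(w(j_{i-1},\ell))$ is nondecreasing in $\ell$, I stop at the first $\ell^\ast$ with $y(w(j_{i-1},\ell^\ast))>h$ and set $j_i=\ell^\ast-1$. By Lemma~\ref{lem:BATC.lowest_verex_all_indices_main} this costs $O(\ell^\ast-j_{i-1})=O(j_i-j_{i-1}+1)$ time, and summing over the $k$ guards telescopes to $O(n+k)=O(n)$ since $k\le n$ and only the shared boundary edge is re-examined when the next guard restarts.

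The step I expect to be the main obstacle is the amortized time bound together with the interface to Lemma~\ref{lem:BATC.lowest_verex_all_indices_main}: that lemma charges $O(j-i)$ to build the structure \emph{from scratch} for a fixed left index, so I must argue that restarting the incremental half-plane intersection at each new start index $j_{i-1}$ re-examines each edge only a constant number of times (essentially just the boundary edge $v_{j_i}v_{j_i+1}$ is touched twice). The monotonicity of $y(w(j_{i-1},\ell))$ in $\ell$ is exactly what lets a single forward scan detect the stopping index $\ell^\ast$ without backtracking, and checking that this monotonicity coincides with the prefix-closure used for greedy optimality is what ties the correctness and running-time halves of the argument together.
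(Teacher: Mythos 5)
Your proposal is correct and follows essentially the same route as the paper: a left-to-right greedy that extends each guard's subchain to the longest coverable prefix of the remaining terrain, with coverability decided by whether the lowest vertex $w(j_{i-1},\ell)$ lies on or below $L$, and the per-guard cost $O(j_i - j_{i-1}+1)$ from Lemma~\ref{lem:BATC.lowest_verex_all_indices_main} telescoping to $O(n)$. Your explicit closure and exchange arguments fill in the greedy-optimality step the paper leaves implicit; just note that the exchange step really uses suffix-closure ($\IHP{a}{b}\subseteq\IHP{a'}{b}$ for $a\le a'\le b$) rather than prefix-closure, which your general observation that dropping edges only enlarges the half-plane intersection already covers.
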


\subsection{Altitude Optimization}\label{sec:BATC_(2)_main}
In this section, we consider case (2) of the bijective altitude terrain cover problem.

The case $k=1$ is exactly the same as in 
Section~\ref{sec:ATC}, thus it runs in $O(n)$ time.
For
$k = 2$,
we have $\min_{1 < i < n}\{\max\{y(w(1,i)), y(w(i,n)) \}\}$ 
as the optimal $y$-coordinate by Lemmas~\ref{lem:BATC.lowest_guard_location_main} and~\ref{lem:BATC.edge_fully_covered_main}.
By Lemma~\ref{lem:BATC.lowest_verex_all_indices_main}, we can compute
$w(1,i)$ and $w(i,n)$ for all $1 < i < n$ in $O(n)$ time.
Thus, we can compute the minimum $y$-coordinate in $O(n)$ time.

Suppose there are more than two guards.
By Lemmas~\ref{lem:BATC.lowest_guard_location_main} and~\ref{lem:BATC.edge_fully_covered_main}, 
we have the following.
\[h^* = \min_{1 < i_1 < \ldots i_{k-1} < n}\{\max\{y(w(1,i_1)),\ldots y(w(i_{k-1},n)) \}\}\]
A na\"ive approach would be to compute
$w(i,j)$ for all pairs $(i,j)$ with $1\leq i < j \leq n$. This takes $O(n^2)$ time, using Lemma~\ref{lem:BATC.lowest_verex_all_indices_main}.
Then $h^*$ can be found in $O(n)$ time with a greedy approach as in Section~\ref{sec:BATC_(1)_main}.
Finally, an optimal partition of $T$ can be found in $O(kn^2)$ time using dynamic programming.

We can improve the time complexity to $O(kn)$ by not computing all $w(i,j)$'s but only $O(n)$ candidates for each of the $k$ guards.
Our algorithm places guards from left to right in order.
It maintains $O(n)$ subintervals of $\initInterval$ such that the set of covered edges so far is the same for any value of $h$ in each subinterval.

For the leftmost guard $u_1$, 
the algorithm computes $w(1,2),\ldots,w(1,n)$ with 
the $O(n)$-time algorithm 
as in Section~\ref{sec:BATC_(1)_main}.
It is easy to see that within any of the 
$y$-intervals, each defined by one of 
$w(1,2),\ldots,w(1,n)$,
the set of covered edges is the same.
See Fig.~\ref{fig:contiguous_partition_interval}.

Then, for each remaining guard $u_i$,
the algorithm computes $\mathcal{E}^+(j_i, n)$ incrementally from the topmost interval to the bottommost interval, where 
$j_i$ is the smallest vertex index on $T_{i-1}$
(covered by $u_{i-1}$) on the bottommost interval. 
After that, it again finds the points $e \cap \mathcal{E}^+(j_i, n)$ for each edge $e$ of negative slope from the not yet covered portion of $T$.
These points further divide the 
$y$-intervals for $u_1$.
The number of intervals (for any $u_i$) is $m = O(n)$. We denote them by $I_{i, 1}, \ldots, I_{i, m}$ from bottom to top.
The algorithm stops when it finds a subinterval such that the rightmost guard covers the rightmost edge of $T$.
It takes $O(n)$ time for each guard, since we can compute $\mathcal{E}^+(j_i, n)$ 
in $O(n)$ time, and we can find all intersection points $e \cap \mathcal{E}^+(j_i, n)$ in the same time bound.

\begin{figure}[h]
  \centering
  \includegraphics[width=0.5\textwidth]{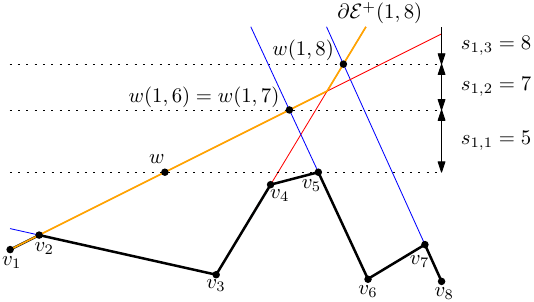}%
  \caption{
  The orange chain $\partial\IPHP{1}{8}$ is the boundary of $\IPHP{1}{8}$.
  The rightmost point of $\IPHP{1}{8}\cap\altitude{\ymax}$ is $w$.
  We have $s_{1,1}=5$ because $v_5v_6$ is the leftmost edge of $T$ such that $w$ lies below $\overline{v_5v_6}$. $s_{1,2}=7$ because $w(1,6)=w(1,7)$ and $w(1,7)\neq w(1,8)$.}\label{fig:contiguous_partition_interval}
\end{figure}

Let $I_{1, 1}, \ldots, I_{1, m}$ be the 
$m$ subintervals for $u_1$. 
For each $j = 1, \ldots, m$, 
we denote by $s_{1, j}$ the largest index such that 
$\subT{1}{s_{1, j}}$ is covered by $\g{1}$ in $I_{1, j}$. 
Similarly, for any $1 < i \leq k$, let $I_{i, 1}, \ldots, I_{i, m'}$ be the 
$m' \geq m$
subintervals for $u_i$. 
For each $j = 1, \ldots, m'$,
we denote by $s_{i, j}$ the largest index such that
$\subT{s_{i-1, j}}{s_{i, j}}$ is covered by $\g{i}$ in $I_{i, j}$.
Such $s_{i, j}$ is well-defined because the set of subintervals for $u_{i}$ is a refinement of the set of subintervals for $u_{i-1}$ for any $1 < i \leq k$.
Then the optimal $y$-coordinate $h^*$ is the 
lower limit point of the $y$-interval in which $v_n$ is covered by $u_k$.

The following lemma verifies that 
if we 
compute 
the intersection of the upper half-planes (for some positive edges of $T$) for each the bottommost interval only once,
we can use them to compute the lowest vertices for multiple values of $j$.

\begin{lemma}\label{lem:BATC.lowest_verex_properties}
For any indices $i,j$ with $i < j \leq n$, $w(i,j)$ is equal to the lowest vertex of $\IPHP{i}{n} \cap \INHP{i}{j}$. 
\end{lemma}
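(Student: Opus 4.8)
The plan is to separate the two intersections by slope and reduce the claim to a single feasibility statement. First I would record the basic decomposition: splitting the edges of \subT{i}{j} into those of positive and negative slope gives $\IHP{i}{j}=\IPHP{i}{j}\cap\INHP{i}{j}$, and since $\IPHP{i}{n}$ is obtained from $\IPHP{i}{j}$ by intersecting with the additional upper half-planes of the positive-slope edges $v_\ell v_{\ell+1}$ with $j\le \ell<n$, we have the inclusion $\IPHP{i}{n}\cap\INHP{i}{j}\subseteq\IHP{i}{j}$. Both sets are convex regions that are unbounded upward, whose lower boundary is the upper envelope of the bounding lines, and whose lowest point is the transition between a negative-slope piece (on the left) and a positive-slope piece (on the right). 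By Lemma~\ref{lem:BATC.lowest_guard_location_main} the lowest vertex $w(i,j)$ of $\IHP{i}{j}$ satisfies $x(v_i)\le x(w(i,j))\le x(v_j)$.

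Because intersecting with more half-planes can only shrink the region, and hence can only raise the lowest point, the two lowest vertices coincide as soon as $w(i,j)$ itself is still feasible for the added constraints. So the plan reduces the lemma to the single statement that $w(i,j)\in e^+$ for every positive-slope edge $e=v_\ell v_{\ell+1}$ with $j\le \ell<n$. Granting this, $w(i,j)$ lies in $\IPHP{i}{n}\cap\INHP{i}{j}$; being the lowest point of the larger set $\IHP{i}{j}$ it is a fortiori the lowest point of the smaller one, and since it is a vertex of $\IHP{i}{j}$ and survives the extra half-plane cuts it remains a vertex of the smaller region (it is still the unique minimizer of the downward linear functional over the subset). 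This disposes of the vertex-preservation bookkeeping cleanly.

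The hard part will be the feasibility step $w(i,j)\in e^+$, and I expect it to be the main obstacle. Here I would use $x$-monotonicity together with the location bound: for a positive-slope edge $e=v_\ell v_{\ell+1}$ with $\ell\ge j$ we have $x(w(i,j))\le x(v_j)\le x(v_\ell)$, and since $\overline{e}$ is increasing its height at $x(w(i,j))$ is at most its height at $x(v_j)$. The task is then to bound this value by $y(w(i,j))$. The idea is that $w(i,j)$ already covers \subT{i}{j}, so it sees $v_j$ and lies on or above the line extending the last edge $v_{j-1}v_j$; combining this with $x$-monotonicity, the entire portion of $T$ to the right of $v_j$ lies below the line from $w(i,j)$ through $v_j$, so no rising edge out there can cross above $w(i,j)$ at abscissa $x(w(i,j))$. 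The delicate case to rule out is a steeply rising edge immediately to the right of $v_j$: I would argue that if $\overline{e}$ passed strictly above $w(i,j)$ at $x(w(i,j))$, then using the inequalities on slopes forced by $w(i,j)$ seeing $v_j$ from the left one obtains a point of \subT{i}{j} not visible from $w(i,j)$, contradicting that $w(i,j)\in\IHP{i}{j}$.

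Finally I would note that the symmetric roles of the two families are not needed, since the statement only extends the positive-slope family to index $n$; the negative-slope family stays restricted to \subT{i}{j} and is handled by the same envelope structure. The proof concludes by assembling the feasibility step with the inclusion and vertex-preservation remarks of the first two paragraphs.
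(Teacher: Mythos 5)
Your first two paragraphs are correct, and in fact sharper than you state: since $\IPHP{i}{n}\cap\INHP{i}{j}\subseteq\IHP{i}{j}$, the lemma is \emph{equivalent} to your feasibility claim that $w(i,j)\in e'^+$ for every positive-slope edge $e'=v_\ell v_{\ell+1}$ with $\ell\ge j$. The gap is in your third paragraph: that feasibility claim, and the assertion you base it on (``the entire portion of $T$ to the right of $v_j$ lies below the line from $w(i,j)$ through $v_j$''), are both false. Nothing constrains \subT{j}{n}: it may climb far above $w(i,j)$ and then continue with a shallow positive-slope edge whose extension passes above $w(i,j)$. Concretely, take $T=\langle (0,0),(1,10),(2,0),(2.1,50),(3.1,51)\rangle$ and $(i,j)=(1,3)$. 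Then $w(1,3)=v_2=(1,10)$, while the positive-slope edge $v_4v_5$ has extension $y=x+47.9$, whose height at $x=1$ is $48.9>10$; hence $w(1,3)\notin\IPHP{1}{5}$, and the lowest vertex of $\IPHP{1}{5}\cap\INHP{1}{3}$ is the crossing of $y=20-10x$ with $y=x+47.9$, roughly $(-2.54,\,45.4)\neq w(1,3)$. Note that the offending edge is not the ``steeply rising edge immediately to the right of $v_j$'' you plan to rule out, but a shallow edge sitting high up after a steep rise; no contradiction with $w(i,j)\in\IHP{i}{j}$ can be extracted from it, because $w(i,j)$ owes nothing to edges outside \subT{i}{j}.

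Because your reduction is an equivalence, this example refutes the statement under its plain, unrestricted reading, so the proof cannot be completed as proposed; some hypothesis has to be added. The paper's own proof goes a different way: it observes that for a negative-slope edge $e$ of \subT{i}{j} and a positive-slope edge $e'$ of \subT{j}{n}, the crossing $\overline{e}\cap\overline{e'}$ lies outside $\tilde{T}$. That observation is true in the example above (the crossing lies to the left of the terrain's $x$-range), but it does not by itself exclude such a crossing from being the lowest vertex of $\IPHP{i}{n}\cap\INHP{i}{j}$ --- in the example it \emph{is} that vertex --- so the paper's argument does not close the gap either. What actually rescues the statement in the context where it is used (Lemma~\ref{lem:BATC.computing_indices_recursively}) is the standing restriction of all computations to heights in \fullInterval. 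If one adds the hypothesis $y(w(i,j))\ge\ymax$, then $w(i,j)$ is the crossing of a negative-slope and a positive-slope edge line of \subT{i}{j} at a height at least that of every terrain vertex, which forces $x(v_i)\le x(w(i,j))\le x(v_j)$; and then for any positive-slope edge $e'=v_\ell v_{\ell+1}$ with $\ell\ge j$, the increasing line $\overline{e'}$ evaluated at $x(w(i,j))\le x(v_\ell)$ is at most $y(v_\ell)\le\ymax\le y(w(i,j))$, which is exactly your feasibility step. So your skeleton is salvageable, but only after adding that height hypothesis and replacing the visibility argument of your third paragraph by this short slope-and-height comparison.
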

\begin{proof}
For any edges $e, e'$ of $T$ such that 
$e\in \INHP{i}{j}$ and $e'\in \IPHP{j}{n}$,
$\overline{e} \cap \overline{e'} \notin \tilde{T}$. 
\end{proof}

The following lemma shows that our algorithm uses $O(n)$ time for each guard.

\begin{lemma}\label{lem:BATC.computing_indices_recursively}
For any integer $i$ with $1<i\leq k$,
given subintervals
$I_{i-1, 1}, \ldots, I_{i-1, m}$ 
and the corresponding indices 
$s_{i-1, 1}, \ldots, s_{i-1, m}$,
we can compute
$I_{i, 1}, \ldots, I_{i, m'}$, and 
$s_{i, 1} \ldots, s_{i, m'}$ in $O(n)$ time.
\end{lemma}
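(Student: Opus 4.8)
The plan is to reduce the whole computation to maintaining the region $\IPHP{s}{n}$ incrementally as the altitude is swept and to intersecting it with the negative-slope edges one at a time, so that the total cost telescopes to $O(n)$, exactly as in the proof of Lemma~\ref{lem:BATC.lowest_verex_all_indices_main}.

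First I would fix the reduction. For an altitude $h$ in $I_{i-1,j}$ the guards $\g{1},\dots,\g{i-1}$ cover exactly $\subT{1}{s_{i-1,j}}$, so $\g{i}$ must guard the terrain starting at $v_{s_{i-1,j}}$; by Lemmas~\ref{lem:BATC.lowest_guard_location_main} and~\ref{lem:BATC.edge_fully_covered_main} it can cover $\subT{s_{i-1,j}}{\ell}$ from $\altitude{h}$ if and only if $y(w(s_{i-1,j},\ell))\le h$. By Lemma~\ref{lem:BATC.lowest_verex_properties}, $w(s_{i-1,j},\ell)$ is the lowest vertex of $\IPHP{s_{i-1,j}}{n}\cap\INHP{s_{i-1,j}}{\ell}$. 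Thus computing the indices $s_{i,\cdot}$ and the refinement of $I_{i-1,j}$ reduces to building $\IPHP{s_{i-1,j}}{n}$ and then feeding it the negative-slope edges from index $s_{i-1,j}$ on, reading off the $y$-coordinates of the successive lowest vertices exactly as in the bookkeeping of Lemma~\ref{lem:BATC.lowest_verex_all_indices_main}.

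The efficiency rests on two monotonicity facts, both consequences of Lemma~\ref{lem:guard.monotone} (raising the altitude line can only enlarge a covered subchain): the starting indices satisfy $s_{i-1,1}\le\cdots\le s_{i-1,m}$ and the ending indices satisfy $s_{i,1}\le\cdots\le s_{i,m'}$ from bottom to top. I would therefore sweep the intervals in decreasing order of altitude. In passing from $I_{i-1,j+1}$ to $I_{i-1,j}$ the starting index drops, so I insert the positive-slope edges with indices in $[s_{i-1,j},s_{i-1,j+1})$ into the running region, turning $\IPHP{s_{i-1,j+1}}{n}$ into $\IPHP{s_{i-1,j}}{n}$. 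Monotonicity guarantees that each positive edge is inserted exactly once over the whole sweep, and with the stack-based incremental maintenance of Lemma~\ref{lem:BATC.lowest_verex_all_indices_main} an insertion costs $O(1)$ amortized, so all the regions are produced in $O(n)$ total time.

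The remaining, and main, obstacle is to bound the negative-edge work: treated naively one would intersect $\Theta(n)$ negative edges with the boundary of the current region in each of the $m=O(n)$ intervals. I would resolve this with a charging argument driven by the reachable index. Keeping a single pointer into the negative-slope edges, I move it only as the reachable index changes; because the reachable index is a non-decreasing function of $h$ globally, the pointer advances monotonically, so each negative edge is intersected with the boundary of the current region only $O(1)$ times over the whole sweep, and each intersection yields at most one breakpoint $y(w(s_{i-1,j},\ell))$ that simultaneously refines the $y$-intervals and fixes the corresponding index $s_{i,\cdot}$. Hence $m'=O(n)$ and the whole step runs in $O(n)$ time. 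The delicate point I would have to verify with care is precisely this global monotonicity of the reachable index across the altitudes where the starting index jumps, since there the lowest vertex $w(s,\ell)$ is remeasured from a new region $\IPHP{s}{n}$; the key is that increasing the starting index only removes constraints and thus lowers $w(s,\ell)$, so that the reachable index cannot decrease as $h$ grows.
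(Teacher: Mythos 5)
Your proposal is correct and follows essentially the same route as the paper's proof: both separate positive- and negative-slope edges via Lemma~\ref{lem:BATC.lowest_verex_properties}, build the regions $\IPHP{s_{i-1,j}}{n}$ incrementally so that each positive edge is inserted only once, and then scan the negative-slope edges with a single monotonically advancing pointer, charging $O(1)$ work per edge via monotonicity of the reachable index. The only difference is bookkeeping: the paper makes your ``store all the regions'' step concrete by freezing the slab-restricted pieces $P_{i,j}$ (the restriction of $\IPHP{s_{i-1,j}}{n}$ to $I_{i-1,j}$) during the top-down pass, and it realizes your pointer argument by reinitializing the negative-edge intersection at each interval transition and resuming from the current edge, relying on exactly the monotonicity you flag as the delicate point.
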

\begin{proof}
Let $P_{i, j}$ be the restriction of $\mathcal{E}^+(s_{i-1, j}, n)$ on $I_{i-1,j}$. 
We compute incrementally $P_{i, j}$ from $j=m \ldots, 1$ in order as follows.
For $j = m$, we compute $\mathcal{E}^+(s_{i-1,m},n)$ by the edges of $T$ with positive slopes are considered from right to left.
We partition $\mathcal{E}^+(s_{i-1,m},n)$ into $P_{i, m}$ and  $\mathcal{E}^+(s_{i-1,m},n)\setminus P_{i, m}$ by trimming it with the horizontal line through the lower boundary of $I_{i-1, m}$.
We compute $P_{i, m-1},\ldots,P_{i,1}$ 
in a similar way. This takes linear time. See Figure~\ref{fig:incremental_algorithm_envelope}.

For each $1 \leq j \leq m$, we compute $w(s_{i-1, j},\ell)$ for all $s_{i-1,j} < \ell \leq n$ such that $w(s_{i-1, j},\ell)$ is contained in $I_{i-1,j}$.
By Lemma~\ref{lem:BATC.lowest_verex_properties}, we compute the intersection between $P_{i, j}$ and the lines extended from edges of $T(s_{i-1, j}, n)$ of negative slopes instead of computing $\mathcal{E}(s_{i-1, j},\ell)$. 
From $j=1, \ldots, j=m$ in order, we compute the intersection of $P_{i, j}$ and the lines extended from edges of $T(s_{i-1, j}, n)$ with negative slopes, using an argument similar to the proof of Lemma~\ref{lem:BATC.lowest_verex_all_indices_main}.
We consider such edges from left to right.
If we encounter the first edge $e$ with negative slope such that 
the intersection of $e$ and $P_{i, j}$ lies above $I_{i-1,j}$, then we update $j$ into $j+1$, initialize the intersection, and recompute it starting from $e$.
In this way, we can compute $I_{i, 1}, \ldots, I_{i, m'}$, and 
$s_{i, 1} \ldots, s_{i, m'}$ in $O(n)$ time.
\end{proof}

\begin{figure}[h]
  \centering
  \includegraphics[width=0.9\textwidth]{./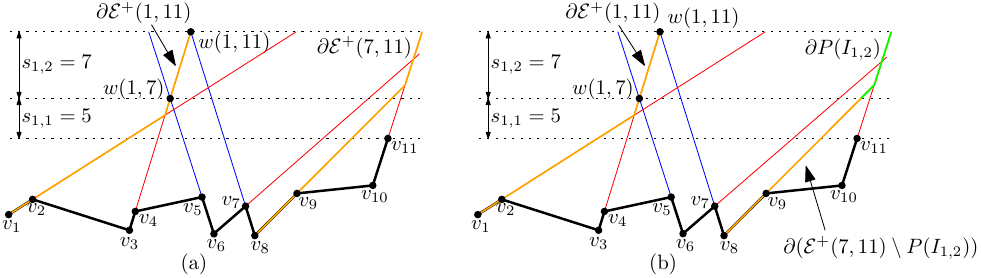}%
  \caption{(a) $\partial\mathcal{E}^+(1,11)$ is the boundary of $\mathcal{E}^+(1,11)$. $\partial\mathcal{E}^+(7,11)$ is the boundary of $\mathcal{E}^+(7,11)$. $\IPHP{s_{1,2}}{n}=\IPHP{7}{11}$. (b) Computation of $\partial \IPHPInterval{I_{1, 2}}$ using $\IPHP{7}{11}$, where $\partial \IPHPInterval{I_{1, 2}}$ is the boundary of $\IPHPInterval{I_{1, 2}}$.}
  \label{fig:incremental_algorithm_envelope}
\end{figure}

By Lemma~\ref{lem:BATC.computing_indices_recursively},
we can compute all $s_{i, j}$'s for guards $u_i$ and the corresponding subinterval $I_{i, j}$, and find the optimal $y$-coordinate in $O(kn)$ time. Thus, we have Theorem~\ref{thm:BATC_(2)_3_main}.

\begin{theorem}\label{thm:BATC_(2)_3_main}
Given a terrain with $n$ vertices and a positive integer $k$, we can place the lowest horizontal line $L$ and place $k$ guards on $L$ for case (2) of the Bijective Altitude Terrain Cover problem 
in $O(kn)$ time.
\end{theorem}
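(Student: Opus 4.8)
The plan is to show that the left-to-right algorithm sketched above, which maintains a refinement of \initInterval into $O(n)$ subintervals and processes one guard per pass, correctly computes $h^*$ in $O(kn)$ time. First I would fix the characterization of the optimum. By Lemmas~\ref{lem:BATC.lowest_guard_location_main} and~\ref{lem:BATC.edge_fully_covered_main}, there is an optimal solution given by an edge partition $1=i_0<i_1<\cdots<i_{k-1}<i_k=n$, and for such a partition each guard $\g{t}$ may be placed at the lowest vertex $w(i_{t-1},i_t)$ of $\IHP{i_{t-1}}{i_t}$. Consequently the optimum equals the displayed $\min$-$\max$ expression, and it suffices to search over edge partitions rather than over continuous guard positions.

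The crux is to argue that the greedy scheme, which for each guard inspects only $O(n)$ candidate breakpoints instead of all $O((n-2)^{k-1})$ partitions, loses nothing. On a single $y$-subinterval on which the set of already-covered edges is fixed, the situation reduces to case~(1) (Theorem~\ref{thm:BATC_(1)_main}): placing $\g{i}$ to cover the \emph{longest} admissible subchain starting from the current left boundary is optimal, because the residual terrain left for the remaining guards is then a shortest possible suffix, which can be covered at no greater height. Hence for guard $\g{i}$ and each subinterval $I_{i,j}$ it suffices to record the largest index $s_{i,j}$ such that $\subT{s_{i-1,j}}{s_{i,j}}$ is covered in $I_{i,j}$; since the subintervals for $\g{i}$ refine those for $\g{i-1}$, these indices are well defined, and an optimal partition is recovered as the chain of indices terminating in the subinterval where $v_n$ first becomes covered.

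For the running time I would invoke Lemma~\ref{lem:BATC.computing_indices_recursively}: given $I_{i-1,1},\ldots,I_{i-1,m}$ and $s_{i-1,1},\ldots,s_{i-1,m}$ for $\g{i-1}$, the refined subintervals $I_{i,1},\ldots,I_{i,m'}$ together with $s_{i,1},\ldots,s_{i,m'}$ are produced in $O(n)$ time. This relies on Lemma~\ref{lem:BATC.lowest_verex_properties} to replace each $\IHP{s_{i-1,j}}{\ell}$ by $\IPHP{s_{i-1,j}}{n}\cap\INHP{s_{i-1,j}}{\ell}$, so the positive-slope half-planes $\IPHP{\cdot}{n}$ are built once per subinterval and reused, while the negative-slope edges are swept once exactly as in Lemma~\ref{lem:BATC.lowest_verex_all_indices_main}. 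Because the number of subintervals stays $O(n)$ throughout and each of the $k$ guards costs $O(n)$, the total time is $O(kn)$. Finally, $h^*$ is read off as the lower limit of the subinterval in which $v_n$ is covered by $\g{k}$, and the corresponding partition and guard placements are reported.

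The main obstacle I anticipate is the correctness of the greedy reduction across a varying altitude. Unlike case~(1), where the line is fixed and the greedy choice is immediate, here the argument must hold simultaneously over all heights: I must show that tracking only the per-subinterval longest-coverable index $s_{i,j}$ never discards the optimal partition. The subinterval refinement is precisely what makes this sound—within each subinterval the covered-edge set, hence the greedy choice, is constant, and every height at which coverage changes is captured as a breakpoint of some subinterval—so the greedy decision is justified on each piece and no critical height is overlooked.
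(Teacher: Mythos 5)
Your proposal follows essentially the same route as the paper: the same min--max characterization via Lemmas~\ref{lem:BATC.lowest_guard_location_main} and~\ref{lem:BATC.edge_fully_covered_main}, the same greedy left-to-right scheme over $O(n)$ altitude subintervals with the per-subinterval indices $s_{i,j}$, and the same invocation of Lemmas~\ref{lem:BATC.lowest_verex_properties} and~\ref{lem:BATC.computing_indices_recursively} to get $O(n)$ time per guard and $O(kn)$ overall. Your explicit exchange argument for why the longest-coverable-subchain choice is optimal on each subinterval is a point the paper leaves implicit (it only appeals to the case~(1) greedy), but it is the intended justification, not a different approach.
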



\small
\bibliographystyle{alphaurl} 
\bibliography{bib}

\newcommand{\etalchar}[1]{$^{#1}$}
\begin{thebibliography}{BMM{\etalchar{+}}24}

\bibitem[ABD{\etalchar{+}}10]{agarwal2010guarding}
Pankaj~K Agarwal, Sergey Bereg, Ovidiu Daescu, Haim Kaplan, Simeon Ntafos, Micha Sharir, and Binhai Zhu.
\newblock Guarding a terrain by two watchtowers.
\newblock {\em Algorithmica}, 58(2):352--390, 2010.

\bibitem[ABE99]{amenta1999optimal}
Nina Amenta, Marshall Bern, and David Eppstein.
\newblock Optimal point placement for mesh smoothing.
\newblock {\em Journal of Algorithms}, 30(2):302--322, 1999.

\bibitem[BCWZ01]{bespamyatnikh2001planar}
Sergei Bespamyatnikh, Zhixiang Chen, Kanliang Wang, and Binhai Zhu.
\newblock On the planar two-watchtower problem.
\newblock In {\em Computing and Combinatorics: 7th Annual International Conference (COCOON 2001)}, pages 121--130, 2001.

\bibitem[BMM{\etalchar{+}}24]{biniaz2024contiguous}
Ahmad Biniaz, Anil Maheshwari, Joseph~SB Mitchell, Saeed Odak, Valentin Polishchuk, and Thomas Shermer.
\newblock Contiguous boundary guarding.
\newblock {\em arXiv preprint arXiv:2412.15053}, 2024.

\bibitem[CECU95]{chen1995optimal}
Danny~Z Chen, Vladimir Estivill-Castro, and Jorge Urrutia.
\newblock Optimal guarding of polygons and monotone chains.
\newblock In {\em 7th Canadian Conference on Computational Geometry (CCCG 1995)}, pages 133--138, 1995.

\bibitem[CS89]{cole1989visibility}
Richard Cole and Micha Sharir.
\newblock Visibility problems for polyhedral terrains.
\newblock {\em Journal of symbolic Computation}, 7(1):11--30, 1989.

\bibitem[DFM{\etalchar{+}}19]{daescu2019altitude}
Ovidiu Daescu, Stephan Friedrichs, Hemant Malik, Valentin Polishchuk, and Christiane Schmidt.
\newblock Altitude terrain guarding and guarding uni-monotone polygons.
\newblock {\em Computational Geometry}, 84:22--35, 2019.

\bibitem[GHL{\etalchar{+}}86]{guibas1986linear}
Leonidas Guibas, John Hershberger, Daniel Leven, Micha Sharir, and Robert Tarjan.
\newblock Linear time algorithms for visibility and shortest path problems inside simple polygons.
\newblock In {\em In 2nd annual symposium on computational geometry (SCG 1986)}, pages 1--13, 1986.

\bibitem[KCHA24]{kang4850503guarding}
Byeonguk Kang, Junhyeok Choi, Jeesun Han, and Hee-Kap Ahn.
\newblock Guarding points on a terrain by watchtowers.
\newblock In {\em 36th Canadian Conference on Computational Geometry (CCCG 2024)}, pages 41--47, 2024.

\bibitem[KK11]{king2011terrain}
James King and Erik Krohn.
\newblock Terrain guarding is {NP}-hard.
\newblock {\em SIAM Journal on Computing}, 40(5):1316--1339, 2011.

\bibitem[KWXZ10]{katoh2010parametric}
Naoki Katoh, Wencheng Wang, Yinfeng Xu, and Binhai Zhu.
\newblock Parametric search: three new applications.
\newblock {\em Frontiers of Mathematics in China}, 5:65--73, 2010.

\bibitem[Meg83]{megiddo1983applying}
Nimrod Megiddo.
\newblock Applying parallel computation algorithms in the design of serial algorithms.
\newblock {\em Journal of the ACM (JACM)}, 30(4):852--865, 1983.

\bibitem[Meg84]{megiddo1984linear}
Nimrod Megiddo.
\newblock Linear programming in linear time when the dimension is fixed.
\newblock {\em Journal of the ACM (JACM)}, 31(1):114--127, 1984.

\bibitem[MZD23]{mccoy2023guarding}
Bradley McCoy, Binhai Zhu, and Aakash Dutt.
\newblock Guarding precise and imprecise polyhedral terrains with segments.
\newblock In {\em International Conference on Combinatorial Optimization and Applications}, pages 323--336. Springer, 2023.

\bibitem[SA95]{sharir1995davenport}
Micha Sharir and Pankaj~K Agarwal.
\newblock {\em Davenport-Schinzel Sequences and Their Geometric Applications}.
\newblock Cambridge University Press, 1995.

\bibitem[SMN23]{seth2023acrophobic}
Ritesh Seth, Anil Maheshwari, and Subhas~C Nandy.
\newblock Acrophobic guard watchtower problem.
\newblock {\em Computational Geometry}, 109:101918, 2023.

\bibitem[TPD{\etalchar{+}}18]{tripathi2018guarding}
Nitesh Tripathi, Manjish Pal, Minati De, Gautam Das, and Subhas~C Nandy.
\newblock Guarding polyhedral terrain by k-watchtowers.
\newblock In {\em Frontiers in Algorithmics: 12th International Workshop (FAW 2018)}, pages 112--125, 2018.

\end{thebibliography}

\end{document}